\newtheorem{theorem}{Theorem}
\newtheorem{example}{Example}
\newtheorem{proposition}{Proposition}
\newtheorem{remark}{Remark}
\newenvironment{proof}[1][Proof]{{\em #1:} }{\ \rule{0.5em}{0.5em}}
\newcommand{\set}[1]{\mathcal{#1}} 
\newcommand{\supp}{{\rm supp}}
\newcommand{\ul}[1]{\underline{#1}}
\newcommand{\E}[1]{{\rm E}\left[{#1}\right]}
\begin{document}

\author{
\IEEEauthorblockN{Gerhard Kramer}
\IEEEauthorblockA{%Fakult\"at f\"ur Elektrotechnik und Informationstechnik \\
%Lehrstuhl f\"ur Nachrichtentechnik\\
%Technische Universit\"at M\"unchen\\
%Arcisstra{\ss}e 21, 80333 M\"unchen, Germany\\
%gerhard.kramer@tum.de}
}
%\and
%\IEEEauthorblockN{}
%\IEEEauthorblockA{Fakult\"at f\"ur Elektrotechnik und Informationstechnik \\
%Technische Universit\"at M\"unchen\\
%Arcisstra{\ss}e 21, 80333 M\"unchen, Germany\\
%@tum.de}
\thanks{Date of current version \today.
Accepted for publication in the IEEE Transactions on Information Theory, November 18, 2013.
G.~Kramer was supported in part by an Alexander von Humboldt Professorship through
the German Federal Ministry of Education and Research and in part by NSF under Grant CCF-09-05235.
The paper was presented in part at the
2012 First Workshop on Information Theory and Coding for Cooperative Networks,
the 2012 First Munich Workshop on Bidirectional Communication
and Directed Information, Munich, Germany, and the 2012
IEEE Information Theory Workshop.

The author is with the Institute for Communications Engineering,
Technische Universit\"at M\"unchen, Munich D-80333, Germany
(email: gerhard.kramer@tum.de).

\bigskip

%Copyright (c) 2013 IEEE. Personal use of this material is permitted. However, permission
%to use this material for any other purposes must be obtained from the IEEE by sending a
%request to pubs-permissions@ieee.org.
}
}

%=============================================================
\title{Information Networks With In-Block Memory}

\maketitle

\begin{abstract}
A class of channels is introduced for which there is memory inside blocks
of a specified length and no memory across the blocks. The multi-user model
is called an information network with in-block memory (NiBM).
It is shown that block-fading channels, channels with state known causally
at the encoder, and relay networks with delays are NiBMs.
A cut-set bound is developed for NiBMs that unifies, strengthens, and generalizes
existing cut bounds for discrete memoryless networks. The bound gives
new finite-letter capacity expressions for several classes of networks including point-to-point
channels, and certain multiaccess, broadcast, and relay channels.
Cardinality bounds on the random coding alphabets are developed that improve on
existing bounds for channels with action-dependent state available causally at the
encoder and for relays without delay.
Finally, quantize-forward network coding is shown to achieve rates within an additive
gap of the new cut-set bound for linear, additive, Gaussian noise channels, symmetric
power constraints, and a multicast session.
\end{abstract}

\begin{keywords}
capacity, feedback, relay channels, networks
\end{keywords}

%=============================================================
%=============================================================
\section{Introduction}
\label{sec:intro}
Communication channels often have memory, e.g., due to bandwidth limitations
and dispersion. The memory is often modeled as being finite and of a sliding-window
type, e.g., a convolution. However, in a network environment with bursty traffic and
interference one often schedules users to dedicated time-frequency slots and with
time-frequency offsets between successive slots. A pragmatic approach is then to
model the channel as having memory inside a block and as being memoryless
across blocks. We say that such channels have {\em in-block memory} or iBM.

This paper studies {\em networks} with iBM (NiBMs) where two central themes are
{\em memory} and {\em feedback}.
Several classes of channels fall into the NiBM framework, including block-fading
channels~\cite{OzarowShamaiWyner94}, channels with state known causally at the
encoder~\cite{Shannon58}, and relay networks with delays~\cite{ElGamal:07}.
In fact, the original motivation for this work was to show that the theory for relay
networks with delays can be derived from theory for discrete memoryless networks
(DMNs).
We only later realized that NiBMs include block fading channels and channels with state known causally at the encoders.

This document is organized as follows. Section~\ref{sec:model} presents the NiBM model.
Section~\ref{sec:prelim} defines the capacity region of a NiBM and introduces notation.
Section~\ref{sec:cut-set-bound} states our main technical result:\ a cut-set bound on
reliable communication rates. Sections~\ref{sec:ptp} and~\ref{sec:multiuser} apply
the bound to point-to-point and multiuser channels, and they show that NiBMs let us
unify, strengthen, and generalize existing theory for several classes of networks.
For example, we derive new capacity theorems and new cardinality bounds on
random variables. Section~\ref{sec:relay-networks} extends the approaches
to relay networks. Several proofs are developed in the Appendices.

%=============================================================
%=============================================================
\section{Model}
\label{sec:model}
The general DMN model was studied in~\cite{vanderMeulen:68} and a bounding
tool for a class of DMNs called relay networks was developed in~\cite{GamalA:81}
(see also~\cite{Cover06}). We use terminology and notation from~\cite{Kramer03}.
Recall that a DMN with $K$ nodes has each node $k$, $k=1,2,\ldots,K$, dealing
with four types of random variables.
\begin{itemize}
\item  {\em Messages} $W_{km}$, $m=1,2,\ldots,M_k$, that have entropy $H(W_{km})=B_{km}$ bits
          where $M_k$ is the number of messages at node $k$.
          %We choose the $B_{km}$ to be integers for simplicity.
          The rate of message $W_{km}$ is
          thus $R_{km}=B_{km}/n$ bits per channel use. The $\{W_{km}\}$ are mutually statistically independent
          for all $m$ and $k$.
\item {\em Channel inputs} $X_{k,i}$, $i=1,2,\ldots,n$, with alphabet $\set{X}_k$. We interpret $i$ as a time index
          but it could alternatively represent frequency or space, for example.
\item {\em Channel outputs} $Y_{k,i}$, $i=1,2,\ldots,n$, with alphabet $\set{Y}_k$.
\item {\em Message estimates} $\hat{W}_{\ell m}^{(k)}$, $\ell m \in\set{D}(k)$, where
          $\set{D}(k)$ is a {\em decoding index set} whose
          elements are selected pairs $\ell m$, $\ell\ne k$, of message indices from other nodes.
\end{itemize}

Let $\set{K}=\{1,2,\ldots,K\}$ be the set of nodes;
let $\set{E}(k)=\{k1,k2,\ldots,kM_k\}$ be the {\em encoding index set} of node $k$;
let $Y_{k}^i=Y_{k,1} Y_{k,2} \ldots Y_{k,i}$;
let $r(x,y)$ be the remainder when $x$ is divided by $y$.
For a set $\set{S} \subseteq \set{K}$ we write
$\set{E}(\set{S})=\cup_{k\in\set{S}} \set{E}(k)$ and $X_{\set{S},i} = \{ X_{k,i} : k \in \set{S} \}$.
For a set $\tilde{\set{S}}$ of integer pairs $km$ we write
$W_{\tilde{\set{S}}}=\{W_{km}: km\in\tilde{\set{S}}\}$.
The relationships between the random variables are as follows. 
\begin{itemize}
\item  Without feedback, node $k$ chooses $X_{k,i}$ as a function of $W_{\set{E}(k)}$ only.
          The $X_k^n(W_{\set{E}(k)})$ are called {\em codewords}.

\item  With feedback, node $k$ chooses functions ${\mathbf a}_{k,i}$, $i=1,2,\ldots,n$, such that
           \begin{align} \label{eq:channel-input}
              X_{k,i} = {\mathbf a}_{k,i}(W_{\set{E}(k)},Y_k^{i-1}).
           \end{align}
           We call ${\mathbf a}_k^n(W_{{\set E}(k)},\cdot)$ a {\em code function} 
           or an {\em adaptive codeword} since it replaces the notion of a codeword.
           For a finite alphabet $\set{Y}_k$ one may
           interpret ${\mathbf a}_k^n(W_{{\set E}(k)},\cdot)$ as a {\em code tree}
           (see \cite[Sec.~15]{Shannon60}, \cite[Sec.~5]{vanderMeulen:68}, and \cite[Ch.~9]{Blahut87}).
           We write ${\mathbf a}_k^n(W_{{\set E}(k)},\cdot)$ as  ${\mathbf A}_k^n(W_{{\set E}(k)},\cdot)$
           when we wish to emphasize that ${\mathbf A}_k^n$ is a random variable.
           The alphabet of ${\mathbf A}_k^n(W_{{\set E}(k)},\cdot)$ is written as $\set{A}_k^n$ and
           for finite $\set{X}_k$ and $\set{Y}_k$ we have the cardinality
\begin{align} \label{eq:card-A}
   \left|\set{A}_k^n\right| = \prod_{i=1}^n |\set{X}_k|^{| \set{Y}_k |^{i-1}} .
\end{align} 
           For example, if all alphabets are binary and $n=3$ then there are
           2 choices for $X_{k,1}$, 2 choices for $X_{k,2}$ for each of the 2 
           possible $Y_{k,1}$, and 2 choices for $X_{k,3}$ for each of the 4
           possible $Y_{k,1}Y_{k,2}$. The result is $2^1\cdot2^2\cdot2^4=128$
           possible code trees ${\mathbf A}_k^3$.

\item A DMN channel is memoryless and time-invariant in the sense that
         at time $i$ node $k$ receives
\begin{align} \label{eq:Yfd-DMN}
   Y_{k,i} = f_{k}\left( X_{\set{K},i}, Z_i \right)   
\end{align}
        for some functions $f_{k}(\cdot)$, $k=1,2,\ldots,K$, where the $Z_i$, $i=1,2,\ldots,n$, are
        statistically independent realizations of a noise random variable $Z$ with alphabet $\set{Z}$.
Instead, a NiBM may have in-block memory (iBM)
with block length $L$ (or memory $L-1$) in the sense that
\begin{align} \label{eq:Yfd}
   Y_{k,i} = f_{k,t(i)+1}\left( X_{\set{K},i-t(i)}, \ldots, X_{\set{K},i}, Z_{\lceil i/L \rceil} \right)   
\end{align}
for some functions $f_{k,i}(\cdot)$, $k=1,2,\ldots,K$, $i=1,2,\ldots,L$, where $t(i)=r(i-1,L)$,
and where the $Z_i$, $i=1,2,\ldots,\lceil n/L \rceil$, are
statistically independent realizations of a noise random variable $Z$ with alphabet $\set{Z}$.
The noise $Z$ could be a vector random variable.

\item  Node $k$ puts out the message decisions
           \begin{align}
              \hat{W}_{\set{D}(k)}^{(k)} = d_k(W_{\set{E}(k)},Y_k^n)
           \end{align}
           for some decoding function $d_k$.
           
\end{itemize}

\begin{example}
Consider a two-way channel with iBM and block length $L=2$. The channel puts out
\begin{itemize}
\item $Y_{k,1} = f_{k,1}(X_{1,1},X_{2,1},Z_1)$
\item $Y_{k,2} = f_{k,2}(X_{1,1},X_{1,2},X_{2,1},X_{2,2},Z_1)$
\item $Y_{k,3} = f_{k,1}(X_{1,3},X_{2,3},Z_2)$
\item $Y_{k,4} = f_{k,2}(X_{1,3},X_{1,4},X_{2,3},X_{2,4},Z_2)$
\end{itemize}
for $k=1,2$ and $n=4$. A functional dependence graph (FDG) for this channel is shown in
Fig.~\ref{fig:twc-block-memory-2} where the nodes $W_1$, $W_2$, $Z_1$, $Z_2$
with hollow circles represent mutually statistically independent random
variables~\cite{Kramer03,Kramer07}.
\end{example}

%%%%%%%%%%%%%%%%%%%%%%%%%%%%%%%%%%%%%%%%
\begin{figure*}[t!]
%  \centerline{\includegraphics[scale=0.5]{fig/twc-block-memory-2}}
  \centerline{\includegraphics[scale=0.5]{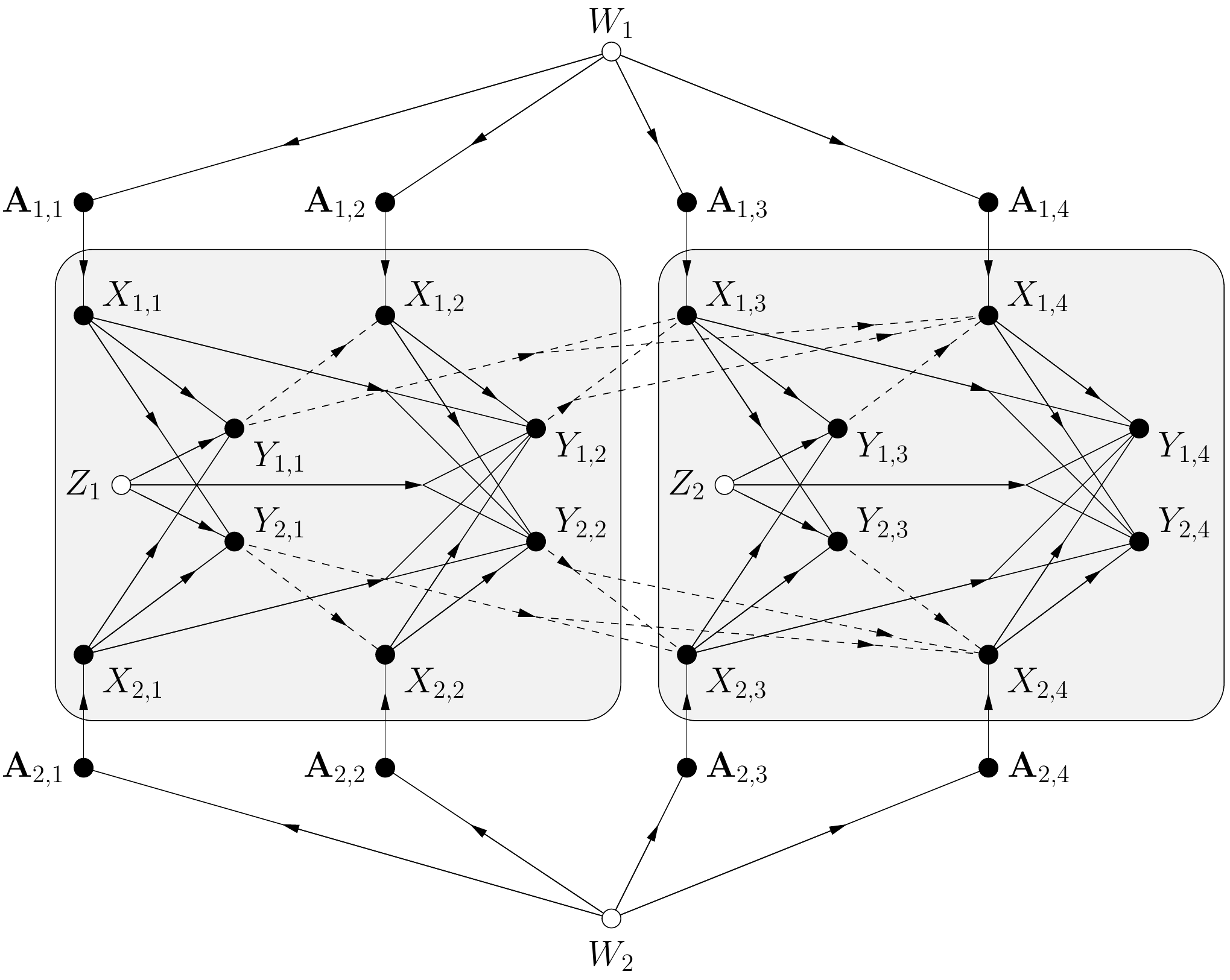}}
  \caption{FDG for a two-way channel with iBM and block length $L=2$ for $n=4$ channel uses.
  The message estimates $\hat{W}_1$ and $\hat{W}_2$ are not shown.
  The two blocks of channel inputs and outputs are shaded and the functional dependence
  due to the received symbols is drawn with dashed lines. The code functions
  ${\mathbf A}_1^n$ and ${\mathbf A}_2^n$ are statistically independent.}
  \label{fig:twc-block-memory-2}
\end{figure*}
%%%%%%%%%%%%%%%%%%%%%%%%%%%%%%%%%%%%%%%%

\begin{remark} \label{remark:no-feedback}
   Without feedback, the NiBM becomes a DMN if we view blocks of $L$ letters
   as a single letter, i.e., we have a DMN with {\em vector} inputs and outputs.
\end{remark}

\begin{remark}
   For time-varying channels the input and output alphabets of node $k$
   may be different for different times $i$. In this case, we write the alphabets as $\set{X}_{k,i}$ and
   $\set{Y}_{k,i}$, $i=1,2,\ldots,L$. The notation $\set{X}_k^L$ means
   $\set{X}_{k,1} \times \set{X}_{k,2} \times \ldots \times \set{X}_{k,L}$.
\end{remark}

%=============================================================
%=============================================================
\section{Preliminaries}
\label{sec:prelim}
%
%=============================================================
\subsection{Capacity}
\label{subsec:capacity}
The {\em capacity region} $\set{C}$ of a NiBM is the closure of the set of rate-tuples
$(R_{km}: 1\le k \le K, 1\le m \le M_k)$ such that for any positive $\epsilon$ there is
an $n$ and code functions and decoders for which the error probability
\begin{align}
P_e = \Pr \left[ \bigcup_{k} \bigcup_{\ell m \in \set{D}(k)}
           \{ \hat{W}_{\ell m}^{(k)} \ne W_{\ell m} \} \right]
\end{align}
is at most $\epsilon$.

%=============================================================
\subsection{Causal Conditioning and Directed Information}
\label{subsec:cc}
We use notation from~\cite{Kramer03} for causal conditioning and directed information
(see also~\cite{Massey90,Kramer98,Tatikonda00}).
The probability of $x^L$ causally conditioned on $y^L$ and conditioned on $a$ is defined as
\begin{align}
  & P(x^L \| y^L) = \prod_{i=1}^L P(x_i | x^{i-1}, y^i) \\
  & P(x^L \| y^L | a) = \prod_{i=1}^L P(x_i | x^{i-1}, y^i, a).
\end{align}
As done here, we will drop subscripts on probability distributions if the
argument is the lowercase version of the random variable.
Causally-conditioned entropy is defined as
\begin{align}
  & H(X^L \| Y^L) = \sum_{i=1}^L H(X_i | X^{i-1} Y^i) \\
  & H(X^L \| Y^L | A) = \sum_{i=1}^L H(X_i | X^{i-1} Y^i A)
\end{align}
where the notation $X^{i-1} Y^i$ refers to the concatenation of $X^{i-1}$ and $Y^i$.
Directed information is written as
\begin{align}
  & I(X^L \rightarrow Y^L) = H(Y^L) - H(Y^L \| X^L) \\
  & I(X^L \rightarrow Y^L \| Z^L) = H(Y^L \| Z^L) - H(Y^L \| X^L,Z^L) \label{eq:ccdi} \\
  & I(X^L \rightarrow Y^L \| Z^L | A) \nonumber \\
  & = H(Y^L \| Z^L | A) - H(Y^L \| X^L,Z^L | A). \label{eq:ccdi2}
\end{align}
The commas in \eqref{eq:ccdi} and \eqref{eq:ccdi2} emphasize
that the pair $X^L,Z^L$ should here be considered as a
length-$L$ sequence of pairs $(X_1,Z_1), (X_2,Z_2),\ldots, (X_L,Z_L)$.
As another example of such notation, we write the directed information flowing from
$X_1^L$, $X_2^L$ to $Y^L$ when causally conditioned on $Z_1^L$, $Z_2^L$ as
\begin{align}
  & I(X_1^L,X_2^L \rightarrow Y^L \| Z_1^L, Z_2^L) \nonumber \\
  & = H(Y^L \| Z_1^L, Z_2^L) - H(Y^L \| X_1^L, X_2^L, Z_1^L, Z_2^L).
\end{align}

%=============================================================
\subsection{Further Notation}
\label{subsec:further-notation}
The functional dependence \eqref{eq:channel-input} implies that
$P(x_{k,i} | {\mathbf a}_{k}^i, y_{k}^{i-1})$ takes on the value 1 only for that
letter $x_{k,i}$ satisfying \eqref{eq:channel-input}, and is 0 otherwise.
To emphasize such dependence, we write
$1(x_{k,i} | {\mathbf a}_{k}^i, y_{k}^{i-1})$ in place of $P(x_{k,i} | {\mathbf a}_{k}^i, y_{k}^{i-1})$,
and similarly
$1(x_{k}^L \| {\mathbf a}_{k}^L, 0y_{k}^{L-1})$ in place of $P(x_{k}^L \| {\mathbf a}_{k}^L, 0y_{k}^{L-1})$.
The expression $0y_{k}^{L-1}$ denotes the concatenation of $0$ and $y_{k}^{L-1}$.

It will be convenient to split symbol strings into blocks of length $L$. We use the notation
\begin{align*}
   {\mathbf a}_{k,i}^L={\mathbf a}_{k,i(L-1)+1} \, {\mathbf a}_{k,i(L-1)+2} \,\ldots\, {\mathbf a}_{k,i(L-1)+L} \\
   x_{k,i}^L=x_{k,i(L-1)+1} x_{k,i(L-1)+2} \,\ldots\, x_{k,i(L-1)+L} \\
   y_{k,i}^L=y_{k,i(L-1)+1} \, y_{k,i(L-1)+2} \,\ldots\, y_{k,i(L-1)+L} .
\end{align*}
We write $\supp(P_X)$ for the support set of $P_X(\cdot)$.
We write the binary entropy function as $H_2(\cdot)$ and differential entropy as $h(\cdot)$.
Logarithms are taken to the base 2.

%=============================================================
\subsection{Channel Distribution}
\label{subsec:channel-distribution}
We have defined the channel using the {\em function} \eqref{eq:Yfd}. It will be convenient to alternatively define
the channel by a {\em probability distribution}. Consider  $P({\mathbf a}_{\set{K}}^n,x_{\set{K}}^n,y_{\set{K}}^n)$
that factors as
\begin{align}
& \left[ \prod_{k=1}^K  P({\mathbf a}_k^n) 1(x_k^n \| {\mathbf a}_k^n,0y_k^{n-1}) \right]
P(y_{\set{K}}^n \| x_{\set{K}}^n ) .
\label{eq:joint-pdf}
\end{align}
The $P(y_{\set{K}}^n \| x_{\set{K}}^n )$ further factors into $m=\lceil n/L \rceil$ blocks as
\begin{align}
\left[ \prod_{i=1}^{m-1} P_{Y_{\set{K}}^L \| X_{\set{K}}^L}(y_{\set{K},i}^L \| x_{\set{K},i}^L ) \right]
P_{Y_{\set{K}}^{L'} \| X_{\set{K}}^{L'}}(y_{\set{K},m}^{L'} \| x_{\set{K},m}^{L'} )
\label{eq:channel-pdf}
\end{align}
where the last block has length $L'=n-(m-1)L$.
We focus on $n=mL$ so that $L'=L$ and all blocks have length $L$.
\begin{remark} \label{remark:channel}
The expressions \eqref{eq:joint-pdf}-\eqref{eq:channel-pdf} let us define the channel
by using the block-invariant distribution $P(y_{\set{K}}^L \| x_{\set{K}}^L )$ rather than
by using $Z$ and the functions in \eqref{eq:Yfd}. We further have
\begin{align} \label{eq:remark-channel}
   P(y_{\set{K}}^L | {\mathbf a}_{\set{K}}^L ) = P(y_{\set{K}}^L \| x_{\set{K}}^L ).
\end{align}
Thus, we may view the channel as being defined by the functional relations \eqref{eq:Yfd},
by $P(y_{\set{K}}^L \| x_{\set{K}}^L )$, or by $P(y_{\set{K}}^L | {\mathbf a}_{\set{K}}^L )$.
\end{remark}

%=============================================================
\subsection{Linear Channels}
\label{subsec:linear-channels}
We consider several examples where the the channel alphabets are the field $\mathbb{F}$.
We write the channel inputs and outputs as vectors $\ul{X}_k=[X_{k,1} \ldots X_{k,L}]^T$
and $\ul{Y}_k=[Y_{k,1} \ldots Y_{k,L}]^T$, respectively. For instance, a {\em scalar}, {\em linear}, and
{\em additive-noise} channel has
\begin{align} \label{eq:linear-channel}
 \ul{Y}_k = \left[ \sum_{j\ne k} {\mathbf G}_{kj} \ul{X}_j \right] + \ul{Z}_k
\end{align}
where the ${\mathbf G}_{kj}$ are $L\times L$ lower-triangular matrices and
$\ul{Z}_k=[Z_{k,1} \ldots Z_{k,L}]^T$, $k\in\set{K}$. The noise $Z_{\set{K}}^L$ is independent of
${\mathbf A}_{\set{K}}^L$.
%so that
%\begin{align}
%   Z_{\set{K},i} - Z_{\set{K}}^{i-1}  - X_{\set{K}}^{i}  
%\end{align} 
%forms a Markov chain.
We write the covariance matrix of a random vector $\ul{X}$ as
${\mathbf Q}_{\ul{X}}$ and its determinant as $|{\mathbf Q}_{\ul{X}}|$.

%=============================================================
%=============================================================
\section{Cut-Set Bound}
\label{sec:cut-set-bound}
We develop a cut-set bound for NiBMs that generalizes the classic cut-set bound for DMNs.
Consider a set $\set{S}$ of nodes and let $\set{S}^c$ be the complement of
$\set{S}$ in $\set{K}=\{1,2,\ldots,K\}$. We say that $(\set{S},\set{S}^c)$ is a {\em cut separating} a
message $W_{km}$ and its estimate $\hat{W}_{km}^{(\ell)}$ if $k\in\set{S}$ and $\ell\in\set{S}^c$.
Let $\set{M}(\set{S})$ be the set of indexes
(which are integer pairs $km$) of those messages separated from one of their estimates by the
cut $(\set{S},\set{S}^c)$, and let $R_{\set{M}(\set{S})}$ be the
sum of the rates of these messages.

There is a subtlety in that the NiBM can have high mutual information at the start of each
block and low mutual information elsewhere. For example, consider a
point-to-point channel \eqref{eq:linear-channel} where $\mathbb{F}$ is the Galois field
of size two, $K=2$, $L=2$, the channel matrix is
\begin{align*}
 {\mathbf G}_{21} = \begin{bmatrix} 1 & 0 \\ 0 & 0 \end{bmatrix}
\end{align*}
and $\ul{Z}_2=[0\; 0]^T$. We find that using the channel once gives larger mutual
information per letter  than using the channel twice or more. But this fact is not very
interesting because we wish to transmit information
reliably and can (usually) accomplish this only by using the channel often. To avoid such
formal details, we will require that $n=mL$ for a positive integer $m$. Alternatively, we could
require that $n$ be much larger than $L$. We have the following
result that we prove in Appendix~\ref{app:proof}.
\begin{theorem} \label{thm:cut-set-bound}
The capacity region $\set{C}$ of a NiBM with block length $L$ that is used a multiple of $L$ times
satisfies
\begin{align} \label{eq:cut-set-bound}
\set{C} \subseteq \bigcup_{P_{\mathbf A_{\set{K}}^L}} \bigcap_{\set{S} \subset \set{K}}
\set{R}(P_{{\mathbf A}_{\set{K}}^L},\set{S})
\end{align}
where $\set{R}(P_{{\mathbf A}_{\set{K}}^L},\set{S})$ is the set of non-negative rate-tuples satisfying
\begin{align} \label{eq:rate-bound}
R_{\set{M}(\set{S})} \le I( {\mathbf A}_{\set{S}}^L ; Y_{\set{S}^c}^L | {\mathbf A}_{\set{S}^c}^L )/L.
\end{align}
The joint probability distribution $P({\mathbf a}_{\set{K}}^L,x_{\set{K}}^L,y_{\set{K}}^L)$ factors as
\begin{align} \label{eq:cut-set-pdf}
P({\mathbf a}_{\set{K}}^L) \left[ \prod_{k=1}^K  1(x_k^L \| {\mathbf a}_{k}^L,0y_{k}^{L-1}) \right] P(y_{\set{K}}^L \| x_{\set{K}}^L ).
\end{align}
\end{theorem}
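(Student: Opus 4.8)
The plan is to follow the classical cut-set argument for DMNs, but to carry it out entirely in terms of the code functions ${\mathbf A}_k^n$ rather than the channel inputs $X_k^n$, and to single-letterize by peeling off one length-$L$ block at a time instead of one symbol at a time. Fix a length-$n$ code with $n=mL$ and $P_e\le\epsilon$, and a cut $(\set{S},\set{S}^c)$. \emph{Step 1 (Fano and the cut).} Since the messages are mutually independent and $\set{M}(\set{S})\subseteq\set{E}(\set{S})$, we have $nR_{\set{M}(\set{S})}=H(W_{\set{M}(\set{S})})=H(W_{\set{M}(\set{S})}\,|\,W_{\set{E}(\set{S}^c)})$. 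Expanding with $Y_{\set{S}^c}^n$ in the conditioning and applying Fano's inequality to the estimates $\hat{W}_{km}^{(\ell)}$, $km\in\set{M}(\set{S})$, $\ell\in\set{S}^c$ — each of which is a function of $(Y_{\set{S}^c}^n,W_{\set{E}(\set{S}^c)})$ and errs with probability at most $\epsilon$ — gives
\begin{align*}
nR_{\set{M}(\set{S})}\le I\big(W_{\set{M}(\set{S})};Y_{\set{S}^c}^n\,\big|\,W_{\set{E}(\set{S}^c)}\big)+n\epsilon_n
\end{align*}
with $\epsilon_n\to0$. Because $(X_{\set{K}}^n,Y_{\set{K}}^n)$ is, by causality, a deterministic function of $({\mathbf A}_{\set{K}}^n,Z)$, because $Z$ is independent of all messages, and because ${\mathbf A}_k^n$ is a deterministic function of $W_{\set{E}(k)}$, one has the Markov relations $W_{\set{E}(\set{S})}-({\mathbf A}_{\set{S}}^n,W_{\set{E}(\set{S}^c)})-Y_{\set{S}^c}^n$ and $({\mathbf A}_{\set{S}}^n,Y_{\set{S}^c}^n)-{\mathbf A}_{\set{S}^c}^n-W_{\set{E}(\set{S}^c)}$. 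Enlarging the first argument to $W_{\set{E}(\set{S})}$, applying the data-processing inequality along the first chain, and then, along the second, replacing the conditioning variable $W_{\set{E}(\set{S}^c)}$ by ${\mathbf A}_{\set{S}^c}^n$, yields $nR_{\set{M}(\set{S})}\le I({\mathbf A}_{\set{S}}^n;Y_{\set{S}^c}^n\,|\,{\mathbf A}_{\set{S}^c}^n)+n\epsilon_n$.

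\emph{Step 2 (reduce to one block).} Split the transmission into the $m$ length-$L$ blocks. For block $j$ let ${\mathbf B}_{k,j}\in\set{A}_k^L$ be the block-$j$ restriction of ${\mathbf A}_k^n$ obtained by substituting the realized prefix $Y_k^{(j-1)L}$; this is a random variable that is a function of $(W_{\set{E}(k)},Y_k^{(j-1)L})$. The structural fact where the NiBM hypothesis enters is that the block-$j$ input-output pair is a deterministic function of ${\mathbf B}_{\set{K},j}$ and the block-$j$ noise alone, and the block-$j$ noise is independent of everything in blocks $1,\dots,j-1$; hence, conditioned on ${\mathbf B}_{\set{K},j}$, block $j$ is independent of the past and has the block-invariant conditional law $P(y_{\set{K}}^L\,|\,{\mathbf a}_{\set{K}}^L)$ of \eqref{eq:remark-channel}. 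Writing $I({\mathbf A}_{\set{S}}^n;Y_{\set{S}^c}^n\,|\,{\mathbf A}_{\set{S}^c}^n)$ as a sum over blocks of $H(Y_{\set{S}^c,j}^L\,|\,Y_{\set{S}^c}^{(j-1)L},{\mathbf A}_{\set{S}^c}^n)-H(Y_{\set{S}^c,j}^L\,|\,Y_{\set{S}^c}^{(j-1)L},{\mathbf A}_{\set{K}}^n)$, one upper bounds the first term by $H(Y_{\set{S}^c,j}^L\,|\,{\mathbf B}_{\set{S}^c,j})$ (since ${\mathbf B}_{\set{S}^c,j}$ is a function of the conditioning variables) and lower bounds the second term by $H(Y_{\set{S}^c,j}^L\,|\,{\mathbf B}_{\set{K},j})$ (add $Y_{\set{S}}^{(j-1)L}$ to the conditioning and invoke the conditional independence). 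The $j$-th summand is therefore at most $I({\mathbf B}_{\set{S},j};Y_{\set{S}^c,j}^L\,|\,{\mathbf B}_{\set{S}^c,j})$, so $I({\mathbf A}_{\set{S}}^n;Y_{\set{S}^c}^n\,|\,{\mathbf A}_{\set{S}^c}^n)\le\sum_{j=1}^m I({\mathbf B}_{\set{S},j};Y_{\set{S}^c,j}^L\,|\,{\mathbf B}_{\set{S}^c,j})$.

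\emph{Step 3 (time-sharing and limit).} Let $Q$ be uniform on $\{1,\dots,m\}$, set $({\mathbf A}_{\set{K}}^L,Y_{\set{K}}^L):=({\mathbf B}_{\set{K},Q},Y_{\set{K},Q}^L)$, and let $P^*_{{\mathbf A}_{\set{K}}^L}=\frac1m\sum_{j=1}^m P_{{\mathbf B}_{\set{K},j}}$ be its marginal. Because the block code functions are mapped to the block input-output pairs in the same (block-invariant) way for every $j$, the joint law induced by $P^*$ factors exactly as \eqref{eq:cut-set-pdf}, and since the channel does not depend on $Q$ the standard convexity step removes $Q$ from the conditioning: $\frac1m\sum_j I({\mathbf B}_{\set{S},j};Y_{\set{S}^c,j}^L\,|\,{\mathbf B}_{\set{S}^c,j})=I({\mathbf A}_{\set{S}}^L;Y_{\set{S}^c}^L\,|\,{\mathbf A}_{\set{S}^c}^L,Q)\le I_{P^*}({\mathbf A}_{\set{S}}^L;Y_{\set{S}^c}^L\,|\,{\mathbf A}_{\set{S}^c}^L)$. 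Combining with Steps 1--2 and $n=mL$ gives $R_{\set{M}(\set{S})}\le\frac1L\,I_{P^*}({\mathbf A}_{\set{S}}^L;Y_{\set{S}^c}^L\,|\,{\mathbf A}_{\set{S}^c}^L)+\epsilon_n$; crucially $P^*$ depends on the code but not on $\set{S}$, so the rate-tuple lies in $\bigcap_{\set{S}}\set{R}(P^*,\set{S})$ up to $\epsilon_n$. Letting $\epsilon\downarrow0$ and $n\to\infty$ through multiples of $L$, using the finiteness of $\set{A}_{\set{K}}^L$ from \eqref{eq:card-A} to extract a limiting distribution and the continuity of mutual information, and taking the closure, gives \eqref{eq:cut-set-bound}.

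I expect Step 2 to be the main obstacle: one must define the per-block code functions ${\mathbf B}_{k,j}$ correctly as random variables valued in the single-block alphabet $\set{A}_k^L$, verify that conditioning on ${\mathbf B}_{\set{K},j}$ genuinely decouples block $j$ from the past despite the across-block feedback $Y_k^{(j-1)L}$, and handle the asymmetry that the block-wise chain rule supplies only the $\set{S}^c$-side feedback $Y_{\set{S}^c}^{(j-1)L}$ in the conditioning and not the $\set{S}$-side feedback — which is precisely the reason the bound is expressed through the code functions ${\mathbf A}$ rather than the channel inputs $X$. The other ingredients (Fano, the message-to-code-function Markov relations, the convexity/time-sharing step, and the limiting argument, for which the hypothesis $n=mL$ keeps all blocks of equal length so that dividing by $n$ yields the clean $1/L$) are routine.
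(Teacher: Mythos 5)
Your proposal is correct and follows essentially the same route as the paper's Appendix~A proof: Fano plus the message-to-code-function Markov relations, a block-wise chain rule, the introduction of per-block restricted code functions (your ${\mathbf B}_{k,j}$ is exactly the paper's $\bar{\mathbf A}_{k,i}^L$, obtained by substituting the realized across-block feedback prefix), the block-invariance of $P(y_{\set{K}}^L\,|\,{\mathbf a}_{\set{K}}^L)$ to decouple blocks, and a time-sharing variable removed via the Markov chain $T-\bar{\mathbf A}_{\set{K},T}^L-Y_{\set{K},T}^L$. You also correctly identify the one genuinely new ingredient (the definition of the per-block code functions) that the paper itself flags as the subtlety.
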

\begin{remark}
The code functions in Theorem~\ref{thm:cut-set-bound} are statistically {\em dependent}.
This is different than in Sec.~\ref{sec:intro} where the code functions are independent
(see Fig.~\ref{fig:twc-block-memory-2} and \eqref{eq:joint-pdf}). Similarly, Shannon's outer bound for the two-way channel
\cite[Eq.~(36)]{Shannon60} and the classic cut-set bound for DMNs~\cite{Kramer03}, \cite[Ch.~10]{Kramer07}, \cite[p.~477]{ElGamal-Kim-11} have statistically {\em dependent} inputs (see Sec.~\ref{subsec:dmn}).
\end{remark}
\begin{remark}
The $1(x_k^L \| {\mathbf a}_{k}^L,0y_{k}^{L-1})$, $k=1,2,\ldots,K$, are fixed functions and
$P(y_{\set{K}}^L \| x_{\set{K}}^L )$ is fixed by the channel.
\end{remark}
\begin{remark} \label{remark:concavity}
Remark~\ref{remark:channel} states that we may view the channel as being
$P(y_{\set{K}}^L | {\mathbf a}_{\set{K}}^L )$. This insight is useful for deriving achievable
rates and for computing the cut-set bound (see~\cite[Sec.~5]{vanderMeulen:68}).
For instance, $I( {\mathbf A}_{\set{S}}^L ; Y_{\set{S}^c}^L | {\mathbf A}_{\set{S}^c}^L )$
is concave in $P_{{\mathbf A}_{\set{K}}}^L$.
This result follows by the concavity of $I(A;B|C=c)$ in $P_{A|C=c}$ when $P_{B|AC=c}$ is held fixed, and because
$P(y_{\set{K}}^L | {\mathbf a}_{\set{K}}^L )$ is fixed.
\end{remark}
\begin{remark} \label{remark:auxRV}
The ${\mathbf A}_k^L$ are {\em not} ``auxiliary" random variables, i.e., they are explicit components of the
communication problem  just like the channel inputs $X_k^L$. Moreover, the cardinalities $|\set{A}_k^L|$
are bounded by the channel alphabets (see~\eqref{eq:card-A}).
\end{remark}
\begin{remark}
Average per-letter cost constraints can be dealt with in the usual way (see Remark~\ref {rmk:cost-constraint} below).
More precisely, if we have $J$ cost functions $s_j(\cdot)$ and constraints
\begin{align} \label{eq:cost-constraints}
   \frac{1}{n} \sum_{i=1}^n \E{s_j\left( X_{\set{K},i}, Y_{\set{K},i} \right)} \le S_j, \quad j=1,2,\ldots,J
\end{align}
then one may add the requirement that the union in \eqref{eq:cut-set-bound} is over
distributions \eqref{eq:cut-set-pdf} that satisfy 
\begin{align} \label{eq:cost-constraints2}
   \frac{1}{L} \sum_{i=1}^L \E{s_j\left( X_{\set{K},i}, Y_{\set{K},i} \right)} \le S_j, \quad j=1,2,\ldots,J .
\end{align}
One may treat average per-block cost constraints similarly.
\end{remark}
\begin{remark}
The bound in~\cite[Thm.~4]{ElGamal:07} is almost the same as \eqref{eq:rate-bound} for
relay networks with delays. We discuss these models in more detail
in Remark~\ref{remark:bound} and Sec.~\ref{subsec:relay-networks-delays} below.
%Theorem~\ref{thm:cut-set-bound} clarifies, generalizes, and slightly improves these results.
\end{remark}
\begin{remark}
Theorem~\ref{thm:cut-set-bound} improves the bounds in~\cite[Thm.~2]{Baik11}
and~\cite[Thm.~1]{FongIT:12} for causal relay networks and generalized networks.
 We discuss these results in Sec.~\ref{subsec:generalized}.
\end{remark}

%=============================================================
\subsection{Weakened Bounds}
\label{subsec:weak}
The bound \eqref{eq:rate-bound} may be weakened as follows:
\begin{align}
& I( {\mathbf A}_{\set{S}}^L ; Y_{\set{S}^c}^L | {\mathbf A}_{\set{S}^c}^L ) \nonumber \\
& \overset{(a)}{=} \sum_{i=1}^L H( Y_{\set{S}^c,i} | Y_{\set{S}^c}^{i-1} {\mathbf A}_{\set{S}^c}^L ) - H( Y_{\set{S}^c,i} | Y_{\set{S}^c}^{i-1} {\mathbf A}_{\set{K}}^i ) \nonumber \\
& \le \sum_{i=1}^L H( Y_{\set{S}^c,i} | Y_{\set{S}^c}^{i-1} {\mathbf A}_{\set{S}^c}^i ) - H( Y_{\set{S}^c,i} | Y_{\set{S}^c}^{i-1} {\mathbf A}_{\set{K}}^i ) \nonumber \\
& = I( {\mathbf A}_{\set{S}}^L \rightarrow Y_{\set{S}^c}^L \| {\mathbf A}_{\set{S}^c}^L )
\label{eq:weak1}
\end{align}
where $(a)$ follows by the chain rule for entropy and because
\begin{align} \label{eq:Markov-chain-1}
   \left( {\mathbf A}_{\set{K},i+1}\ldots{\mathbf A}_{\set{K},L} \right)
   - {\mathbf A}_{\set{K}}^iY_{\set{S}^c}^{i-1} -Y_{\set{S}^c,i}
\end{align}
forms a Markov chain.
The bound \eqref{eq:weak1} is further weakened by replacing code functions with channel
inputs and outputs: 
\begin{align}
& I( {\mathbf A}_{\set{S}}^L \rightarrow Y_{\set{S}^c}^L \| {\mathbf A}_{\set{S}^c}^L ) \nonumber \\
& \overset{(a)}{\le} \sum_{i=1}^L H( Y_{\set{S}^c,i} | Y_{\set{S}^c}^{i-1} X_{\set{S}^c}^i )
   - H( Y_{\set{S}^c,i} | Y_{\set{K}}^{i-1} X_{\set{K}}^i {\mathbf A}_{\set{K}}^i ) \nonumber \\
& \overset{(b)}{=}  I( X_{\set{S}}^L, 0Y_{\set{S}}^{L-1} \rightarrow Y_{\set{S}^c}^L \| X_{\set{S}^c}^L )
\label{eq:weak}
\end{align}
where $(a)$ follows because $Y_k^{i-1} {\mathbf A}_k^i$ defines $X_k^i$
and because conditioning cannot increase entropy. Step $(b)$ follows because
\eqref{eq:cut-set-pdf} ensures that the chain
${\mathbf A}_{\set{K}}^L-Y_{\set{K}}^{i-1} X_{\set{K}}^i-Y_{\set{K},i}$ is Markov.
\begin{remark}
The FDG of a NiBM has statistically independent code functions, see Fig.~\ref{fig:twc-block-memory-2}.
We thus have
\begin{align} \label{eq:index-identity}
H( Y_{\set{S}^c,i} | Y_{\set{S}^c}^{i-1} X_{\set{S}^c}^i {\mathbf A}_{\set{S}^c}^L )
=H( Y_{\set{S}^c,i} | Y_{\set{S}^c}^{i-1} X_{\set{S}^c}^i).
\end{align}
However, the identity \eqref{eq:index-identity} may not be valid when considering dependent
code functions such as in Theorem~\ref{thm:cut-set-bound}.
\end{remark}
\begin{remark}
The cut-set bound with the normalized \eqref{eq:weak} in place of the right-hand side of
\eqref{eq:rate-bound} was derived in \cite[Thm.~1]{Baik11} for causal relay networks 
and in~\cite[Thm.~1]{FongIT:12} for generalized networks. 
The authors of~\cite{Baik11,FongIT:12} restrict attention to multiple unicast sessions as in~\cite[Sec.~15.10]{Cover06}.
Theorem~\ref{thm:cut-set-bound} improves these bounds and extends them to multiple multicast sessions.
We discuss these bounds in more detail in Sec.~\ref{subsec:generalized}.
\end{remark}
\begin{example}
Consider {\em additive} noise channels with
\begin{align} \label{eq:additive-noise}
   Y_{k,i} & = f_{k,i}(X_{\set{K}}^i) + Z_{k,i}
\end{align}
for $i=1,2,\ldots,L$, $k=1,2,\ldots,K$, where $Y_{k,i}$, $Z_{k,i}$, and $f_{k,i}(X_{\set{K}}^i)$
take on values in the field $\mathbb{F}$. The noise variables $Z_{\set{K}}^L$ are independent of
${\mathbf A}_{\set{K}}^L$. For finite fields, the bound~\eqref{eq:weak} is
\begin{align}
I( {\mathbf A}_{\set{S}}^L ; Y_{\set{S}^c}^L | {\mathbf A}_{\set{S}^c}^L )
& \le I( X_{\set{S}}^L, 0Y_{\set{S}}^{L-1} \rightarrow Y_{\set{S}^c}^L \| X_{\set{S}^c}^L ) \nonumber \\
& = H(Y_{\set{S}^c}^L \| X_{\set{S}^c}^L ) - H(Z_{\set{S}^c}^L \| 0Z_{\set{S}}^{L-1}).
\label{eq:weak2}
\end{align}
Since $H(Z_{\set{S}^c}^L \| 0Z_{\set{S}}^{L-1})$ is fixed by the channel, the cut-set bound with the normalized
\eqref{eq:weak2} in place of the right-hand side of \eqref{eq:rate-bound} is a maximum
(conditional) entropy problem.
\end{example}

\begin{example}
A special case of~\eqref{eq:additive-noise} is a {\em deterministic} NiBM for which
the noise is a constant and
\begin{align}
I( {\mathbf A}_{\set{S}}^L ; Y_{\set{S}^c}^L | {\mathbf A}_{\set{S}^c}^L )
& \le H(Y_{\set{S}^c}^L \| X_{\set{S}^c}^L ).
\label{eq:weak3}
\end{align}
\end{example}

%=============================================================
\subsection{DMNs}
\label{subsec:dmn}
For $L=1$ the NiBM is a DMN and Theorem~\ref{thm:cut-set-bound}
is the classic cut-set bound.
Alternatively, we may view the DMN as a NiBM with block length $L$ and with
\begin{align} \label{eq:dmn-channel}
P(y_{\set{K}}^L \| x_{\set{K}}^L ) = \prod_{i=1}^L P_{Y_{\set{K}}|X_{\set{K}}}(y_{\set{K},i} | x_{\set{K},i}).
 \end{align}
The weakened bound \eqref{eq:weak} becomes
\begin{align}
& I( X_{\set{S}}^L, 0Y_{\set{S}}^{L-1} \rightarrow Y_{\set{S}^c}^L \| X_{\set{S}^c}^L ) \nonumber \\
& \quad = \sum_{i=1}^L H( Y_{\set{S}^c,i} | X_{\set{S}^c}^i Y_{\set{S}^c}^{i-1} )
    - H( Y_{\set{S}^c,i} | X_{\set{K},i} ) \nonumber \\
& \quad \le \sum_{i=1}^L I( X_{\set{S},i} ; Y_{\set{S}^c,i} | X_{\set{S}^c,i} ).
\label{eq:weak-dmn}
\end{align}
If we choose the code functions as codewords and
\begin{align} \label{eq:cut-set-pdf-dmn}
P(x_{\set{K}}^L) = \prod_{i=1}^L  P(x_{\set{K},i})
\end{align}
then we achieve equality in \eqref{eq:weak-dmn}.
We recover the classic cut-set bound by choosing 
$P(x_{\set{K},i})=P_{\set{X}_{\set{K}}}(x_{\set{K},i})$ for all $i$.

\begin{remark}
Consider a DMN that is time varying in blocks of length $L$, i.e.,
we have a NiBM of length $L$ and
\begin{align} \label{eq:dmn-channel-2}
P(y_{\set{K}}^L \| x_{\set{K}}^L ) = \prod_{i=1}^L P_{Y_{\set{K},i}|X_{\set{K},i}}(y_{\set{K},i} | x_{\set{K},i})
 \end{align}
The cut-set bound of Theorem~\ref{thm:cut-set-bound} may now be computed with independent
inputs as in \eqref{eq:cut-set-pdf-dmn}.
\end{remark}
%

%=============================================================%=============================================================
\section{Point-to-Point Channels}
\label{sec:ptp}
Consider a point-to-point channel with input $X^L$ taking on values in $\set{X}^L$,
receiver output $Y^L$ taking on values in $\set{Y}^L$,
and feedback $\tilde{Y}^L$ taking on values in $\set{\tilde{Y}}^L$.
A FDG for $L=2$ and $n=4$ is shown in Fig.~\ref{fig:ptp-iBM-3}. 
\begin{theorem} \label{thm:ptp-C}
The capacity of a point-to-point channel with iBM and block length $L$ is
\begin{align} \label{eq:ptp-C}
 C = \max_{P_{{\mathbf A}^L}} I({\mathbf A}^L ; Y^L)/L
\end{align}
where $P({\mathbf a}^L,y^L,\tilde{y}^L)$ factors as
\begin{align} \label{eq:ptp-pdf}
 P({\mathbf a}^L) 1(x^L \| {\mathbf a}^L,0\tilde{y}^{L-1}) P(y^L,\tilde{y}^L \| x^L ).
\end{align}
\end{theorem}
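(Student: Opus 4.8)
The plan is to get the converse from Theorem~\ref{thm:cut-set-bound} and achievability from a standard random-coding-with-code-trees argument. For the converse, a point-to-point channel with feedback is a NiBM with $K=2$, where node~1 is the transmitter and node~2 the receiver (or, more precisely, node~2 has no message and its ``channel input'' is trivial, carrying only the feedback link back to node~1 via $\tilde Y^L$). The only nontrivial cut is $\set{S}=\{1\}$, $\set{S}^c=\{2\}$, and since node~2 sends nothing its code function ${\mathbf A}_2^L$ is deterministic; hence \eqref{eq:rate-bound} collapses to $R \le I({\mathbf A}_1^L; Y_2^L)/L$, which is exactly $I({\mathbf A}^L;Y^L)/L$ after relabeling. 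The factorization \eqref{eq:cut-set-pdf} specializes to \eqref{eq:ptp-pdf}. So the $\le$ direction of \eqref{eq:ptp-C} is immediate from the cut-set theorem; the only care needed is to correctly identify the feedback signal $\tilde Y^L$ with the channel output that drives \eqref{eq:channel-input} at the transmitter, while $Y^L$ is the decoder's observation — these may differ, and the joint law $P(y^L,\tilde y^L\|x^L)$ in \eqref{eq:ptp-pdf} keeps track of both.

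For achievability, fix a maximizing $P_{{\mathbf A}^L}$. The idea is to treat one block of $L$ uses as a single ``super-letter'' of a memoryless channel $P(y^L\mid{\mathbf a}^L)$ from the code-function alphabet $\set{A}^L$ to $\set{Y}^L$ (Remark~\ref{remark:channel}, equation \eqref{eq:remark-channel}), and then invoke Shannon's point-to-point coding theorem for a DMC over this super-alphabet: by Remark~\ref{remark:no-feedback}, a NiBM without feedback across blocks is a DMN with vector inputs/outputs, and here the ``input'' we choose i.i.d.\ across blocks is the code tree ${\mathbf A}^L$ itself. Generate $2^{mR}$ codewords, each a length-$m$ string of i.i.d.\ $\sim P_{{\mathbf A}^L}$ code trees; transmit block $j$ by running the code tree ${\mathbf A}^L(j)$ against the intra-block feedback $\tilde Y^L$; decode by joint typicality over the $m$ super-letters. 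Standard arguments give vanishing error probability for $R < I({\mathbf A}^L;Y^L)/L$, and letting the maximization range over $P_{{\mathbf A}^L}$ gives the claim. One should remark that $\set{A}^L$ is finite (cardinality \eqref{eq:card-A}) whenever $\set{X}$ and $\set{\tilde Y}$ are finite, so the DMC machinery applies verbatim; for continuous alphabets one passes to the usual quantization/limit argument, or cites the corresponding coding theorem for well-behaved channels.

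The main obstacle is conceptual rather than computational: making precise that using a \emph{code tree per block} as the i.i.d.\ input is legitimate, i.e., that the within-block feedback is fully ``internalized'' into the super-letter channel $P(y^L\mid{\mathbf a}^L)$ and therefore creates no dependence across blocks. This is exactly what \eqref{eq:remark-channel} and the block-memoryless structure \eqref{eq:channel-pdf} buy us, and I expect the write-up to lean on those two facts. A secondary point worth stating explicitly is that the decoder is allowed to use all of $Y^n=Y^{mL}$ (not just per-block) and that joint typicality over blocks still suffices — this is where $n=mL$ is used, so that the converse and achievability are matched without the boundary subtleties flagged before the theorem. Everything else — the cardinality bound on $\set{A}^L$, concavity (Remark~\ref{remark:concavity}) guaranteeing the max is attained, and optional cost constraints — is routine.
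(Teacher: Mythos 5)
Your proposal is correct and follows essentially the same route as the paper: the paper's proof is a two-line statement that the converse is Theorem~\ref{thm:cut-set-bound} and achievability is random coding over code functions with a maximizing $P_{{\mathbf A}^L}$ (citing the steps in \cite[Sec.~VII.B]{Kramer03}), which is exactly the super-letter DMC argument over $P(y^L\,|\,{\mathbf a}^L)$ that you spell out. Your additional care about identifying $\tilde Y^L$ as the transmitter's (node-1) channel output and about $n=mL$ is consistent with the paper's setup and fills in details the paper leaves to the cited reference.
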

\begin{proof}
Achievability follows by random coding with a maximizing $P_{{\mathbf A}^L}$.
For example, one may use the steps outlined in~\cite[Sec. VII.B]{Kramer03}.
The converse follows by Theorem~\ref{thm:cut-set-bound}.
\end{proof}

%%%%%%%%%%%%%%%%%%%%%%%%%%%%%%%%%%%%%%%%
\begin{figure}[t!]
%  \centerline{\includegraphics[scale=0.5]{fig/ptp-iBM-3}}
  \centerline{\includegraphics[scale=0.5]{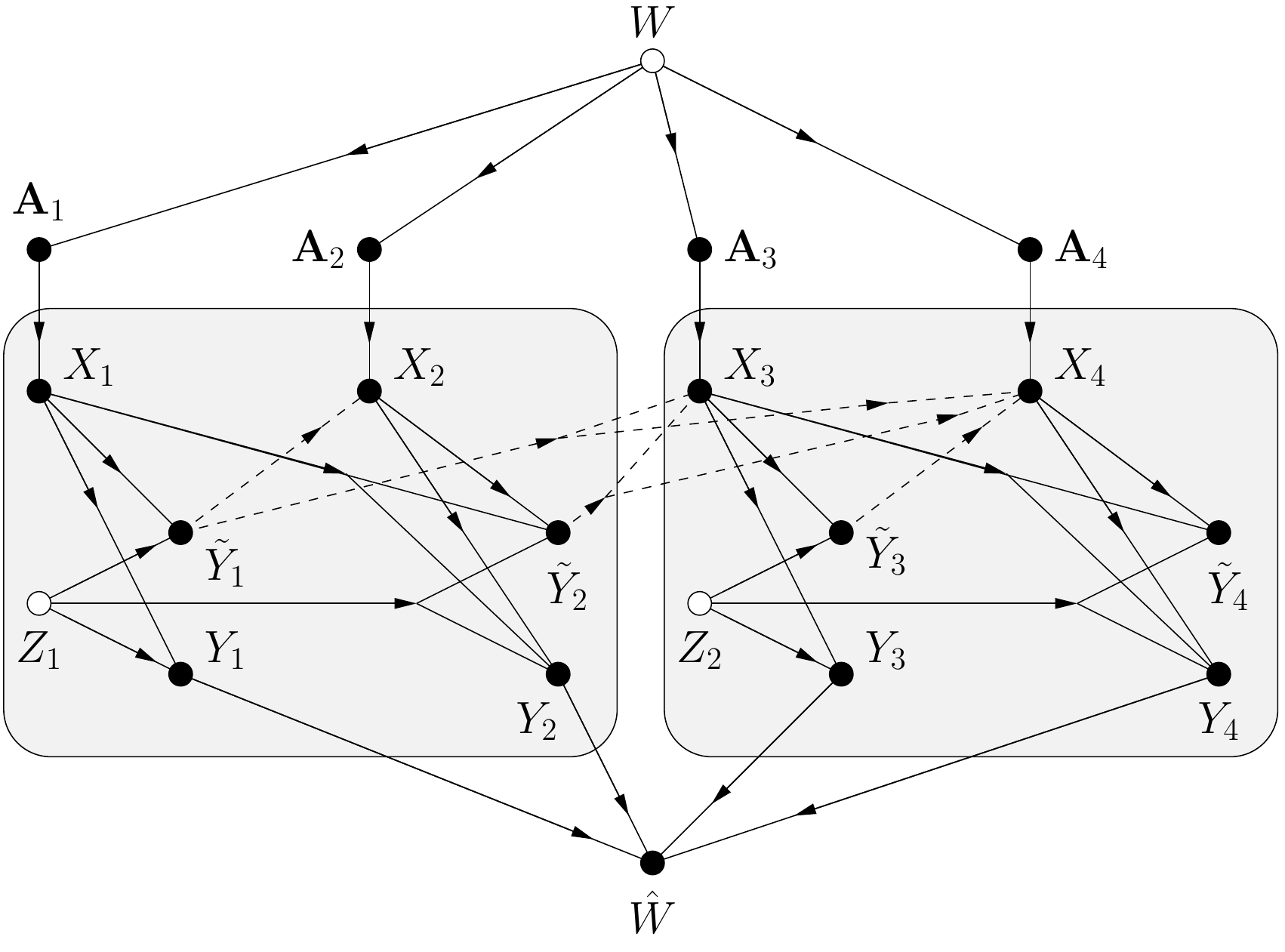}}
  \caption{FDG for a point-to-point channel with iBM and block length $L=2$ and $n=4$ channel uses.}
  \label{fig:ptp-iBM-3}
\end{figure}
%%%%%%%%%%%%%%%%%%%%%%%%%%%%%%%%%%%%%%%%

\begin{remark}
The distribution \eqref{eq:ptp-pdf} gives
\begin{align}
  I({\mathbf A}^L ; Y^L) = I({\mathbf A}^L \rightarrow Y^L).
\end{align}
\end{remark}
\begin{remark}
The feedback $\tilde{Y}^L$ can be {\em noisy}. 
\end{remark}
\begin{remark}  \label{remark:in-block-fb}
{\em In-block} feedback can increase $C$
but {\em across-block} feedback does not increase $C$.
This statement refines Shannon's classic theorem on feedback capacity \cite[Thm.~6]{Shannon56}.
For instance, in Fig.~\ref{fig:ptp-iBM-3} we can remove the dashed lines
across blocks without changing $C$.
\end{remark}
\begin{remark}  \label{remark:no-fb}
If $\tilde{Y}^L$ is a constant then there is no feedback and we have
\begin{align}
  I({\mathbf A}^L ; Y^L) = I(X^L ; Y^L) = I(X^L \rightarrow Y^L).
\end{align}
The corresponding capacity result is not new, however, since the model is a
special case of a point-to-point channel with vector alphabets.
\end{remark}
\begin{remark}
$I({\mathbf A}^L ; Y^L)$ is concave in $P_{{\mathbf A}^L}$
and the Arimoto-Blahut algorithm~\cite{Arimoto72,Blahut72} can perform the
maximization \eqref{eq:ptp-C}.
\end{remark}

The cardinality $|\set{A}^L|$ is bounded by the channel
alphabets (see \eqref{eq:card-A} and Remark~\ref{remark:auxRV}) and we have
\begin{align} \label{eq:card-ptop}
   \left|\set{A}^L\right| = \prod_{i=1}^L |\set{X}_i|^{\left|\set{\tilde{Y}}^{i-1}\right|} .
\end{align} 
The identity \eqref{eq:card-ptop} means that $|\set{A}^L|$ is double
exponential in $L$ if the alphabet sizes are similar for all $i$. However,
we prove the following theorem by using classic
results~\cite[p.~96]{Gallager68}, \cite[p.~310]{Csiszar81} on bounding set sizes.
\begin{theorem} \label{thm:ptp-cardinality}
The maximum in Theorem~\ref{thm:ptp-C} is achieved by a $P_{{\mathbf A}^L}$
for which $|\supp(P_{{\mathbf A}^L})|$ is at most
\begin{align} \label{eq:ptop-cardinality}
  \min\left( \left|\set{Y}^L\right|,
      \left| \set{X}_1 \right| + \sum_{i=2}^L \left|\set{X}^{i-1}\right| \cdot \left|\set{\tilde{Y}}^{i-1}\right| \cdot
      \left( \left| \set{X}_i \right| -1 \right) \right).
\end{align}
\end{theorem}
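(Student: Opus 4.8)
The plan is to establish the two entries of the minimum in \eqref{eq:ptop-cardinality} by two separate applications of the classical support-reduction (Carathéodory-type) lemma for concave functionals of an input distribution, in the spirit of \cite[p.~96]{Gallager68} and \cite[p.~310]{Csiszar81}. First recall from the remark following Theorem~\ref{thm:ptp-C} (and Remark~\ref{remark:concavity} specialized to the point-to-point case) that the objective $I({\mathbf A}^L;Y^L)$ is concave in $P_{{\mathbf A}^L}$ when the transition law $P(y^L\mid{\mathbf a}^L)$ is held fixed, and that by \eqref{eq:remark-channel} this transition law $P(y^L\mid{\mathbf a}^L)=P(y^L\|x^L)$ (composed with the fixed indicator $1(x^L\|{\mathbf a}^L,0\tilde y^{L-1})$) is indeed fixed by the channel. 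So we are in the standard setting: a concave function of a distribution on the finite alphabet $\set{A}^L$, and we want to bound the support of a maximizer.

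For the first bound, $|\set{Y}^L|$, I would use the observation that $I({\mathbf A}^L;Y^L)=H(Y^L)-H(Y^L\mid{\mathbf A}^L)$ depends on $P_{{\mathbf A}^L}$ only through (i) the induced output distribution $P_{Y^L}$, which lives in a simplex of dimension $|\set{Y}^L|-1$, and (ii) the scalar $H(Y^L\mid{\mathbf A}^L)=\sum_{{\mathbf a}^L}P({\mathbf a}^L)H(Y^L\mid{\mathbf A}^L={\mathbf a}^L)$, which is linear in $P_{{\mathbf A}^L}$. Apply the Fenchel--Eggleston--Carathéodory / Dubins argument: the set of achievable pairs $\bigl(P_{Y^L},\,H(Y^L\mid{\mathbf A}^L)\bigr)$ is the convex hull of the points obtained from single mass points ${\mathbf a}^L$, hence any point — in particular one attaining the maximum of the concave objective, which is determined by these coordinates — is a convex combination of at most $|\set{Y}^L|-1+1=|\set{Y}^L|$ of them. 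This gives a maximizer supported on at most $|\set{Y}^L|$ letters.

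For the second bound I would instead preserve the full input distribution only on the coordinates needed to pin down the conditional output laws. Peel off the indices one at a time using the chain rule: $I({\mathbf A}^L;Y^L)=\sum_{i=1}^L I({\mathbf A}^L;Y_i\mid Y^{i-1})$, and note that for each $i\ge2$ the term $H(Y_i\mid Y^{i-1}, {\mathbf A}^L=\cdot)$ is, given $Y^{i-1}=y^{i-1}$ and the prefix $X^{i-1}=x^{i-1}$ (which is a deterministic function of ${\mathbf a}^{i-1}$ and $0\tilde y^{i-2}$, hence ranges over $\set{X}^{i-1}$ with $\tilde y^{i-1}$ over $\set{\tilde Y}^{i-1}$), governed by the choice of $X_i$ through the fixed channel; thus the relevant free parameters at stage $i$ are the conditional marginals of $X_i$, one probability vector of dimension $|\set{X}_i|-1$ for each of the $|\set{X}^{i-1}|\cdot|\set{\tilde Y}^{i-1}|$ conditioning contexts, while at stage $1$ we simply need $P_{X_1}$, costing $|\set{X}_1|-1$ real constraints (or $|\set{X}_1|$ support points). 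Counting the total number of scalar functionals that must be matched to preserve the value of $I({\mathbf A}^L;Y^L)$ — together with the one normalization constraint — and invoking the same support-reduction lemma yields a maximizer supported on at most $|\set{X}_1|+\sum_{i=2}^L|\set{X}^{i-1}|\cdot|\set{\tilde Y}^{i-1}|\cdot(|\set{X}_i|-1)$ points. Taking the smaller of the two bounds gives \eqref{eq:ptop-cardinality}.

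The main obstacle is the bookkeeping in the second part: one must argue carefully that the objective, as a function of $P_{{\mathbf A}^L}$, genuinely factors through this prescribed finite list of conditional-marginal parameters — i.e. that two distributions inducing the same stage-wise conditional marginals of $X_i$ given every $(x^{i-1},\tilde y^{i-1})$ context yield the same value of $I({\mathbf A}^L;Y^L)$ — so that the Carathéodory-style count applies to the right low-dimensional image. This requires using \eqref{eq:ptp-pdf} to express the joint law of $(X^L,Y^L,\tilde Y^L)$ purely in terms of those conditionals and the fixed channel, and checking there is no hidden dependence on finer features of $P_{{\mathbf A}^L}$; in particular, the ``non-interesting'' correlations among code-function branches for incompatible feedback prefixes must be seen to be irrelevant to the mutual information. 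Once that reduction is made explicit, the remaining steps are routine applications of the cited lemmas.
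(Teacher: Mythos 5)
Your first bound is correct and is essentially the paper's argument: since $P(y^L|{\mathbf a}^L)$ and $H(Y^L|{\mathbf A}^L={\mathbf a}^L)$ are fixed by the channel, the quantities $P_{Y^L}$ and $H(Y^L|{\mathbf A}^L)$ are linear in $P_{{\mathbf A}^L}$, and preserving $|\set{Y}^L|-1$ output probabilities together with the one conditional-entropy functional pins down $I({\mathbf A}^L;Y^L)$, so the support lemma gives $|\set{Y}^L|$ mass points.

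The second bound has a genuine gap, and it is precisely the step you flag as ``the main obstacle'': the objective does \emph{not} factor through the stage-wise conditional marginals $P(x_i|x^{i-1},\tilde{y}^{i-1})$. Only the output law $P(y^L)$, and hence $H(Y^L)$, is a function of $P(x^L\|0\tilde{y}^{L-1})$; the term $H(Y^L|{\mathbf A}^L)=\sum_{{\mathbf a}^L}P({\mathbf a}^L)H(Y^L|{\mathbf A}^L={\mathbf a}^L)$ depends on which entire code trees carry mass, not merely on the induced causally conditioned input law. Example~\ref{example:BSCs} already witnesses this: the uniform distribution over all eight trees and the capacity-achieving distribution supported on the four trees $(b,01),(b,10)$ induce the same (uniform) $P(x^2\|0\tilde{y}^1)$, yet yield $I({\mathbf A}^2;Y^2)=2-H_2(\epsilon)/2$ and $I({\mathbf A}^2;Y^2)=2$, respectively. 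So the reduction you hope to ``make explicit'' is false, and the chain-rule decomposition does not rescue it, since each term $H(Y_i|Y^{i-1}{\mathbf A}^L)$ has the same defect. Your count lands on the correct number only because you charge one unit to normalization---which the Fenchel--Eggleston form of the support lemma gives you for free---rather than to the functional that is actually missing.

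The repair is what the paper does: retain the $(|\set{X}_1|-1)+\sum_{i=2}^L|\set{X}^{i-1}|\cdot|\set{\tilde{Y}}^{i-1}|\cdot(|\set{X}_i|-1)$ linear functionals of $P_{{\mathbf A}^L}$ that fix $P(x^L\|0\tilde{y}^{L-1})$ (by induction on $i$: fixing $P(x_i|x^{i-1},\tilde{y}^{i-1})$ and using that the channel supplies $P(\tilde{y}_i|x^i,\tilde{y}^{i-1})$ fixes $P(x^i,\tilde{y}^i)$, which makes the next stage's conditioning events well defined), and then add the \emph{one} further linear functional $H(Y^L|{\mathbf A}^L)$. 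Preserving all of these preserves $H(Y^L)-H(Y^L|{\mathbf A}^L)$ exactly, and the total count is $|\set{X}_1|+\sum_{i=2}^L|\set{X}^{i-1}|\cdot|\set{\tilde{Y}}^{i-1}|\cdot(|\set{X}_i|-1)$, as claimed.
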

\begin{proof}
See Appendix~\ref{app:cardinality-ptp}.
\end{proof}

\begin{remark}
Theorem~\ref{thm:ptp-cardinality} states that $|\supp(P_{{\mathbf A}^L})|$ can be
exponential, and not double exponential, in $L$. Of course, one must still 
determine $\supp(P_{{\mathbf A}^L})$ which can be a high-complexity search problem
for even small $L$.
\end{remark}
\begin{example} \label{example:BSCs}
Consider a binary-alphabet channel with $L=2$ and
\begin{align} \label{eq:ptp-example}
   & Y_1 = X_1, \quad \tilde{Y}_1 = Z, \quad Y_2 = X_2 \oplus Z
\end{align}
where the bit $Z$ has $P_{Z}(1)=\epsilon$.
This is an additive noise channel of the form~\eqref{eq:additive-noise} whose
capacity without feedback is achieved by uniformly-distributed $X^2$
so that
\begin{align} \label{eq:ptp-C-example}
  I(X^2 ; Y^2) = 2 - H_2(\epsilon).
\end{align}  
To compute the feedback capacity, consider the simple bound
\begin{align}
   I({\mathbf A}^2 ; Y^2) & = H(Y^2) - H(Y^2 | {\mathbf A}^2 )
   \le  2 \label{eq:ptp-example-C}
\end{align}
and observe that we achieve equality in \eqref{eq:ptp-example-C} with $X_2=X_2' \oplus Z$
where $X_2'$ is independent of $X_1$, and where $X_1$ and $X_2'$ are
uniformly distributed bits. Feedback thus enlarges the capacity.

We translate this strategy into a code function (here a code tree) distribution.
We label ${\mathbf A}^2$ as $b,b_0b_1$ by which we mean that $X_1=b$,
$X_2=b_0$ if $\tilde{Y}_1=0$, and $X_2=b_1$ if $\tilde{Y}_1=1$. We choose
\begin{align*}
  & P_{{\mathbf A}^2}(0,00)=P_{{\mathbf A}^2}(0,11)=P_{{\mathbf A}^2}(1,00)=P_{{\mathbf A}^2}(1,11)=0 \\
  & P_{{\mathbf A}^2}(0,01)=P_{{\mathbf A}^2}(0,10)=P_{{\mathbf A}^2}(1,01)=P_{{\mathbf A}^2}(1,10)=1/4
\end{align*}
and achieve capacity with four code trees, as predicted by Theorem~\ref{thm:ptp-cardinality}.
\end{example}
\begin{example} \label{ex:weak}
We demonstrate the deficiencies of the weakened bound
based on \eqref{eq:weak}. Suppose the channel is
\begin{align} \label{eq:channel-deficiency}
   & Y_1 = X_1 \oplus Z_1 \oplus Z_2, \quad \tilde{Y}_1 = Z_1, \quad Y_2 = Z_2
\end{align}
where $Z_1$ and $Z_2$ are independent with $P_{Z_1}(1)=\epsilon_1$
and $P_{Z_2}(1)=\epsilon_2$. We achieve the capacity
\begin{align}
   C=(1-H_2(\epsilon_1))/2
\end{align}
by having the receiver compute $Y_1 \oplus Y_2 = X_1 \oplus Z_1$. In fact,
we can achieve capacity by not using the feedback.

For the weakened bound~\eqref{eq:weak2}, observe that \eqref{eq:channel-deficiency}
has the form~\eqref{eq:additive-noise}.
Defining $\epsilon_1*\epsilon_2=\epsilon_1(1-\epsilon_2)+(1-\epsilon_1)\epsilon_2$
and $\set{S}=\{1\}$ we compute (see~\eqref{eq:weak2})
\begin{align}
   H(Z_{\set{S}^c}^L \| 0Z_{\set{S}}^{L-1})
   &  = H(Z_1\oplus Z_2) + H(Z_2| Z_1\oplus Z_2, Z_1) \nonumber \\
   & = H_2(\epsilon_1 * \epsilon_2).
\end{align}
The weakened bound based on~\eqref{eq:weak2} is therefore
\begin{align}
   2C & \le 
    \max_{P_{X^2 \| 0\tilde{Y}_1}} H(Y^2) - H_2(\epsilon_1 * \epsilon_2) \nonumber \\
   & = 1 + H_2(\epsilon_2) - H_2(\epsilon_1 * \epsilon_2) \label{eq:ptp-example2}
\end{align}
with equality if $X_1$ is uniform.
This bound is loose in general, e.g., if $\epsilon_1=1/2$ then $C=0$ but
\eqref{eq:ptp-example2} gives $C\le H_2(\epsilon_2)/2$.
\end{example}

%=============================================================
\subsection{Noise-Free Feedback}
\label{subsec:noise-free}
The feedback is {\em noise-free} if $\tilde{Y}^L$ is a causal function of $X^L$ and $Y^L$,
i.e., if $\tilde{Y}_i=f_i(X^i,Y^i)$ for $i=1,2,\ldots,L$. 
The receiver can therefore track, or {\em observe}, the choice
of $X^L$ for each tree ${\mathbf A}^L$. The expression \eqref{eq:ptp-C} simplifies to
\begin{align} \label{eq:ptp-C-2}
 C = \max_{P_{X^L \left\| 0\tilde{Y}^{L-1} \right.}} I(X^L \rightarrow Y^L)/L.
\end{align}
\begin{example}
Consider the additive noise channel \eqref{eq:linear-channel} with
$\ul{Y} = {\mathbf G} \ul{X} + \ul{Z}$ and noise-free feedback. We have
\begin{align}
   I(X^L \rightarrow Y^L) = H(Y^L) - H(Z^L)
   \label{eq:noise-free-additive}
\end{align}
so that computing \eqref{eq:ptp-C-2} reduces to maximizing $H(Y^L)$.
% MUST BE CAREFUL DUE TO GENERAL FEEDBACK
%For instance, for modulo-additive channels the maximizing $X^L$ will be uniformly
%distributed over $\mathbb{F}^L$, and for additive Gaussian noise (AGN) channels
%with transmit power constraints the maximizing $X^L$ will be Gaussian.
\end{example}

\begin{remark} \label{remark:maximum-entropy}
As in \eqref{eq:noise-free-additive}, one is sometimes interested in maximizing the
output entropy $H(Y^L)$. We observe that for noisy or noise-free feedback we have
\begin{align} 
  \max_{P_{{\mathbf A}^L}} H(Y^L) = \max_{P_{X^L \left\| 0\tilde{Y}^{L-1} \right.}} H(Y^L) .
\end{align}
\end{remark}

%=============================================================
\subsection{Block Fading Channels}
\label{subsec:block-fading}
Channels with {\em block fading}~\cite{OzarowShamaiWyner94}
or {\em block interference}~\cite{McElieceStark84} have a
{\em state} $S$ that is memoryless across blocks of length $L$
and whose realization $S=s$ specifies the memoryless channel
in each block. In other words, when $S=s$ we have
\begin{align} \label{eq:bi-channel-s}
P(y^L,\tilde{y}^L \| x^L | s )
= \prod_{i=1}^L P_{Y\tilde{Y}|XS}(y_i,\tilde{y}_i | x_i,s).
 \end{align}
We may view such channels as NiBMs for which $Z=S N^L$ in \eqref{eq:Yfd},
i.e., $Z$ includes the state $S$ and a noise string $N^L$ where
the $N_i$, $i=1,2,\ldots,L$, are statistically independent and identically
distributed. Equation \eqref{eq:Yfd} thus becomes
\begin{align}
   Y_i = f_{t(i)+1}\left( X_{i}, S_{\lceil i/L \rceil} N_i \right) \\
   \tilde{Y}_i = \tilde{f}_{t(i)+1}\left( X_{i}, S_{\lceil i/L \rceil} N_i \right)
\end{align}
for $i=1,2,\ldots,n$.
%The NiBM channel may be written as
%\begin{align} \label{eq:BF-channel}
%   & P(y^L,\tilde{y}^L \| x^L )
%       = \prod_{i=1}^{L} P(y_i,\tilde{y}_i | y^{i-1},\tilde{y}^{i-1},x^i ) \nonumber \\
%   & = \prod_{i=1}^L \sum_{s} P(s | y^{i-1},\tilde{y}^{i-1}, x^i)
%          P_{Y\tilde{Y}|XS}(y_i,\tilde{y}_i | x_i,s).
%\end{align}
%For instance, suppose the channel outputs are $Y_i=SY'_i$ for $i=1,2,\ldots,L$
%so the receiver knows $S$. The channel \eqref{eq:BF-channel} then collapses to the desired form
%\begin{align}
%   P(y^L,\tilde{y}^L \| x^L )
%   & = P(s) \prod_{i=1}^L P_{Y'\tilde{Y} |XS}(y_i',\tilde{y}_i | x_i,s).
%\end{align}

%=============================================================
\subsection{Channels with State Known Causally at the Encoder}
\label{subsec:state}
Shannon's channel with state known causally at the encoder~\cite{Shannon58}
is a point-to-point channel with input and output sequences $X^n$ and $Y^n$,
respectively, and where a state sequence $S^n$ is revealed {\em causally} to the
encoder in the sense that $X_i$ can be a function of $W$ and $S^i$,
$i=1,2,\ldots,n$. The $S_i$, $i=1,2,\ldots,n$, are statistically
independent realizations of a state random variable $S$. The channel outputs
are
\begin{align}
   Y_i = f(X_i,S_i,Z_i)
\end{align}
for some function $f(\cdot)$ where the
$Z_i$, $i=1,2,\ldots,n$, are statistically independent realizations of a noise random
variable $Z$. The FDG is shown in Fig.~\ref{fig:Shannon-iBM-2}.

The channel is usually considered {\em memoryless}. However,
an alternative and insightful interpretation is that this channel has iBM and
block length $L=2$. To see this, observe that Fig.~\ref{fig:Shannon-iBM-2}
is a subgraph of Fig.~\ref{fig:ptp-iBM-3} up to relabeling the nodes.
In other words, in Fig.~\ref{fig:ptp-iBM-3} we choose
$\set{X}_1=\set{Y}_1=\tilde{\set{Y}}_2=\{0\}$ and $\tilde{Y}_1=S$.  
Observe that the ``feedback'' $S$ can be noisy in the sense of Sec.~\ref{subsec:noise-free}.
For the FDG in Fig.~\ref{fig:Shannon-iBM-2} we have renamed ${\bf A}_2$, $\tilde{Y}_1$,
$X_2$, $Y_2$ as ${\bf A}_1$, $S_1$, $X_1$, $Y_1$, respectively, so that
the subscripts enumerate the block. The same random variables without the
block indices are the respective  ${\bf A}$, $S$, $X$, $Y$.
The code functions ${\bf A}$ for this type of problem are sometimes called 
{\em Shannon strategies}~\cite[p.~176]{ElGamal-Kim-11}.
%For instance, $Z$ might have two components $SZ'$ where $Z'$ is
%independent of $S$ and $Y=f(X,S,Z')$ for some function $f(\cdot)$.

%%%%%%%%%%%%%%%%%%%%%%%%%%%%%%%%%%%%%%%%
\begin{figure}[t!]
%  \centerline{\includegraphics[scale=0.5]{fig/Shannon-iBM-2}}
  \centerline{\includegraphics[scale=0.5]{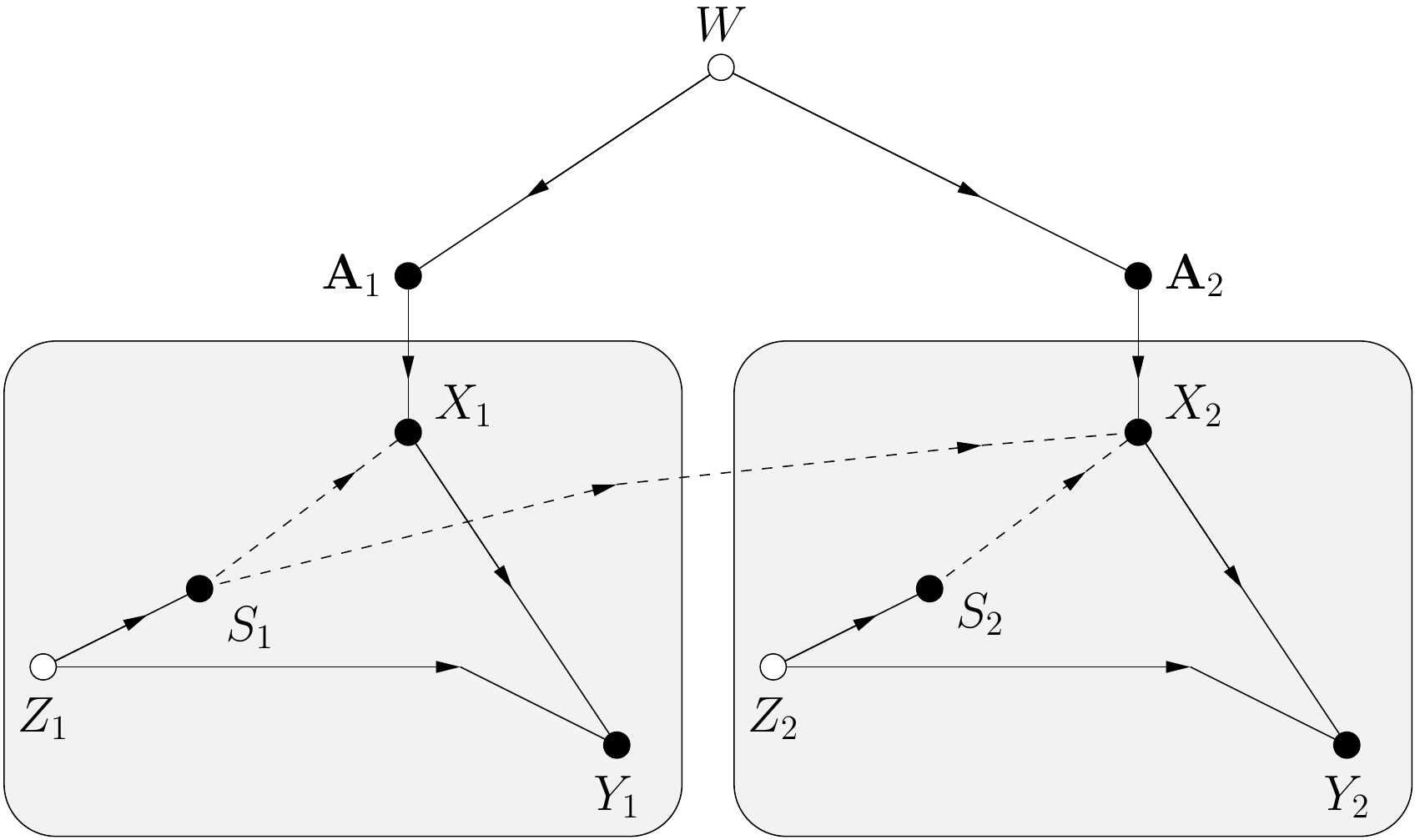}}
  \caption{FDG for a channel with state known causally at the encoder.
  The NiBM has $L=2$. The message estimate $\hat{W}$ is not shown.}
  \label{fig:Shannon-iBM-2}
\end{figure}
%%%%%%%%%%%%%%%%%%%%%%%%%%%%%%%%%%%%%%%%

The  capacity is given by Theorem~\ref{thm:ptp-C} which here is
\begin{align}
   2C = \max_{P_{{\mathbf A}}} I( {\mathbf A} ; Y ) .
\end{align}
The alphabet size of ${\mathbf A}$ is $|\set{X}|^{|\set{S}|}$ but
\eqref{eq:ptop-cardinality} tells us that
\begin{align} \label{eq:card-state}
   |\supp(P_{{\mathbf A}})| \le \min\left(|\set{Y}|,1+|\set{S}| \cdot (|\set{X}|-1) \right)
\end{align}
suffices. The $|\set{Y}|$ bound is due to Shannon~\cite{Shannon58} and
the second bound was reported in~\cite[Thm.~1]{Farmanbar09}
(see also~\cite[p.~177]{ElGamal-Kim-11}).

\begin{example} \label{example:state}
Suppose that $\set{S}=\set{X}=\{0,1\}$, $\set{Y}=\{0,1,2\}$,
$P_S(0)=1/2$, and
\begin{align}
   Y=X + S
\end{align}
where "$+$" denotes integer addition.
We label the branch-pairs ${\mathbf A}$ as $b_0b_1$, by which we
mean that $X=b_0$ if $S=0$ and $X=b_1$ if $S=1$.
The capacity turns out to be $2C=1$ bit and is achieved with
\begin{align*}
  & P_{{\mathbf A}}(00)=P_{{\mathbf A}}(11)=0 \\
  & P_{{\mathbf A}}(01)=P_{{\mathbf A}}(10)=1/2.
\end{align*}
We thus require at most three code trees, as predicted by \eqref{eq:card-state}.
Moreover, the weakened bound based on \eqref{eq:weak} gives
\begin{align}
  2C \le \max_{P_{X|S}} I(XS;Y) = \log_2(3) \text{ bits.}
\end{align}
A better upper bound follows by giving
$S$ to the receiver to obtain
\begin{align}
  2C \le \max_{P_{X|S}} I(X;Y|S) = 1 \text{ bit.}
\end{align}
\end{example}

\begin{remark} \label{remark:state}
The above construction extends in an obvious way to show that any DMN with
state(s) known causally at the encoder(s) is effectively a NiBM with block length $L=2$.
The cut-set bound \eqref{eq:cut-set-bound} thus applies to these problems.
\end{remark}

%=============================================================
\subsection{Channels with Action-Dependent State}
\label{subsec:action}
Weissman's channel with action-dependent state modifies Shannon's model and
lets the transmitter influence the state~\cite{Weissman10}.
In other words, at time $i$ the transmitter can choose a letter $B_i$ as
a function of $W$ and $S^{i-1}$ and the next state is
\begin{align}
   S_i = g(B_i,Z_i)
\end{align}
for some function $g(\cdot)$. The FDG is shown in Fig.~\ref{fig:action-iBM-2}.
Observe that $Z_i$ could be a random vector so that the noise
influencing $S_i$ and $Y_i$ is statistically independent.

This channel is again usually considered memoryless. However,
we interpret the channel as having iBM and block
length $L=2$, since Fig.~\ref{fig:action-iBM-2} is a subgraph of
Fig.~\ref{fig:ptp-iBM-3} up to relabeling the nodes.
More precisely, in Fig.~\ref{fig:ptp-iBM-3} we choose
$\set{Y}_1=\tilde{\set{Y}}_2=\{0\}$ and $\tilde{Y}_1=S$.  
For the FDG in Fig.~\ref{fig:action-iBM-2}
we have renamed $X_1$, $\tilde{Y}_1$,
$X_2$, $Y_2$ as $B_1$, $S_1$, $X_1$, $Y_1$, respectively, so that
the subscripts enumerate the block. The same random
variables without the block indices are the respective  $B$, $S$, $X$, $Y$.
Theorem~\ref{thm:ptp-C} gives the capacity
\begin{align}
   2C & = \max_{P_{{\mathbf A}^2}} I({\mathbf A}^2 ; Y) %\nonumber \\ &
   = \max_{P_{B{\mathbf A}_2}} I(B{\mathbf A}_2 ; Y)
   \label{eq:action-cap}
\end{align}
and Theorem~\ref{thm:ptp-cardinality}
gives
\begin{align}
   |\supp(P_{B{\mathbf A}_2})|
   \le \min\left(|\set{Y}|, |\set{B}| + |\set{B}| \, |\set{S}| \, (|\set{X}|-1) \right).
   \label{eq:action-card}
\end{align}

%%%%%%%%%%%%%%%%%%%%%%%%%%%%%%%%%%%%%%%%
\begin{figure}[t!]
%  \centerline{\includegraphics[scale=0.5]{fig/action-iBM-2}}
  \centerline{\includegraphics[scale=0.5]{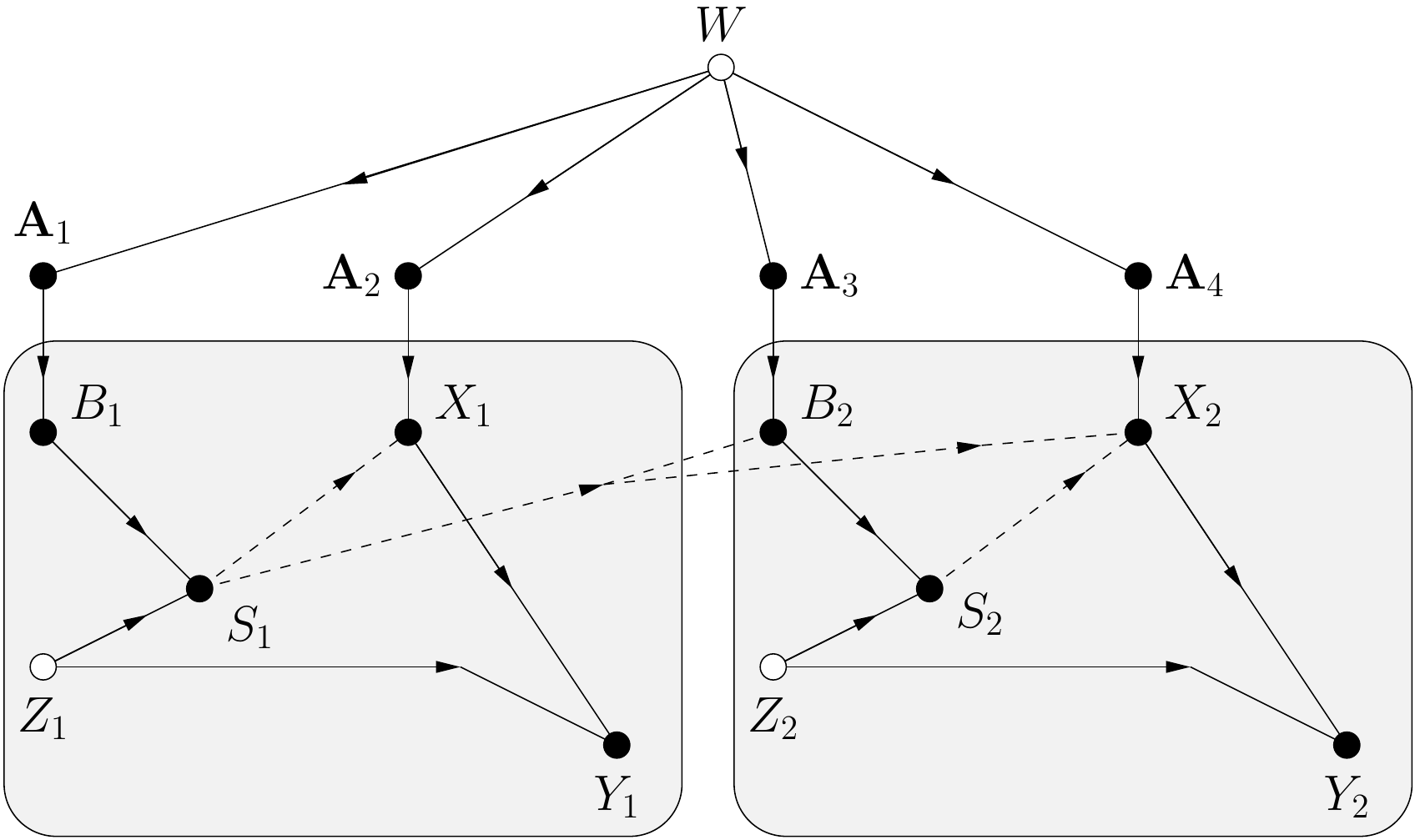}}
  \caption{FDG for a channel with action-dependent state known causally at the encoder.
  The NiBM has $L=2$ and the actions are $B_1$ and $B_2$.}
  \label{fig:action-iBM-2}
\end{figure}
%%%%%%%%%%%%%%%%%%%%%%%%%%%%%%%%%%%%%%%%

\begin{remark}
The expression \eqref{eq:action-cap} is the same as in~\cite[Thm.~2]{Weissman10}
because $U$ plays the role of $B{\mathbf A}_2$.
\end{remark}
\begin{remark}
The constraint \eqref{eq:action-card} is slightly stronger than that
in~\cite[Thm.~2]{Weissman10}.
\end{remark}
\begin{remark}
The model in Fig.~\ref{fig:action-iBM-2} seems more general than
in~\cite{Weissman10} because $Z$ may influence both $S$ and $Y$.
However, the associations described in~\cite[p.~5405]{Weissman10} show
that the original model includes more problems than apparent at first glance
(see also comments in~\cite[Sec.~VII]{Weissman10}).
\end{remark}
\begin{remark}
The model in Fig.~\ref{fig:action-iBM-2} may seem
different than in~\cite{Weissman10} because $S$ may influence future actions
as well as the present and future $X$. However, across-block feedback does
not increase capacity (see Remark~\ref{remark:in-block-fb}) so we may remove
the $S$-to-$B$ functional dependence without affecting capacity
(see also comments in~\cite[Sec.~VII]{Weissman10} concerning feedback).
\end{remark}
\begin{remark}
We may add functional dependence from $B$ to $Y$ without changing 
the capacity expression. Similar comments are made in~\cite[p.~5398 and Sec.~VII]{Weissman10}.
\end{remark}
\begin{example}
Consider a channel with a rewrite option~\cite[Sec.~V.A]{Weissman10} which means
that the $B$-to-$S$ and $X$-to-$Y$ channels are effectively the same.
At time $i=1$ the encoder ``writes" on the $B$-to-$S$ channel.
At time $i=2$, if the encoder is happy with the outcome $S$ then it sends a
no-rewrite symbol $N$ which means that $Y=S$. But if the encoder
is unhappy with $S$ then it ``rewrites" a symbol on the $X$-to-$Y$ channel.

We have $\set{X}=\set{B} \cup \{N\}$, $\set{S}=\set{Y}$, and
the bound \eqref{eq:action-card} is $|\supp(P_{B{\mathbf A}_2})|\le|\set{Y}|$.
For example, suppose the $B$-to-$S$ channel is a binary symmetric channel (BSC)
with crossover probability
$\delta$, $0\le\delta\le1/2$ (see~\cite{Weissman10}). We label $B{\mathbf A}_2$ as
$b,b_0b_1$ by which we mean that $B=b$, $X=b_0$ if $S=0$, and
$X=b_1$ if $S=1$. We have $|\set{Y}|=2$ and achieve
$C=I( B{\mathbf A}_2 ; Y)=1-H_2(\delta^2)$ by choosing
\begin{align*}
  & P_{B{\mathbf A}_2}(0,N0)=P_{B{\mathbf A}_2}(1,1N)=1/2.
\end{align*}
We require only two code trees, as predicted by \eqref{eq:action-card}.
\end{example}
\begin{remark}
Multiple rewrites are modeled by increasing $L$.
\end{remark}
%

%=============================================================
%=============================================================
\section{Multiuser Channels}
\label{sec:multiuser}
%
%=============================================================
\subsection{Multiaccess Channels}
\label{subsec:MACs}
Consider a two-user (three-terminal) MAC with iBM and
with inputs $X_1^L$, $X_2^L$, and outputs $Y_1^L$, $Y_2^L$, $Y_3^L$.
Node 3 is the receiver and the variables $X_3^L$ should be considered constants.
The FDG for $L=2$ and $n=4$ is the same as
Fig.~\ref{fig:twc-block-memory-2} except that the variables $Y_{3i}$, $i=1,2,3,4$,
are missing in Fig.~\ref{fig:twc-block-memory-2}.
The cut-set bound of Theorem~\ref{thm:cut-set-bound} is
\begin{align} \label{eq:mac-fb}
 \bigcup_{P_{{\mathbf A}_1^L {\mathbf A}_2^L}} \left\{ (R_1,R_2):
 \begin{array}{l}
  0 \le R_1, \; 0 \le R_2 \\
  R_1 \le I({\mathbf A}_1^L ; Y_3^L | {\mathbf A}_2^L )/L \\
  R_2 \le I({\mathbf A}_2^L ; Y_3^L | {\mathbf A}_1^L )/L \\
  R_1+R_2 \le I({\mathbf A}_1^L {\mathbf A}_2^L ; Y_3^L )/L
  \end{array} \right\} .
\end{align}
If there is no feedback, then $Y_1^L$ and $Y_2^L$ can be considered
constants. The resulting cut-set bound can be strengthened in the usual way to
become
\begin{align} \label{eq:mac}
 \bigcup \left\{ (R_1,R_2):
 \begin{array}{l}
  0 \le R_1, \; 0 \le R_2 \\
  R_1 \le I(X_1^L ; Y_3^L | X_2^L T)/L \\
  R_2 \le I(X_2^L ; Y_3^L | X_1^L T)/L \\
  R_1+R_2 \le I(X_1^L X_2^L ; Y_3^L | T)/L
    \end{array} \right\}
\end{align}
where the union is over distributions such that $X_1^L - T - X_2^L$ forms a Markov chain
($T$ is the usual time-sharing random variable). This modified cut-set bound is the
capacity region without feedback. The result is not new, however, since the model is a
special case of a classic MAC with vector alphabets.

\begin{remark} \label{remark:mac-state}
MACs with state known causally at the encoders were treated
in~\cite[Sec.~IV]{SigurjonssonKim05}. As pointed out in Remark~\ref{remark:state},
such channels are NiBMs with block length $L=2$. For example, the outer bound of
Theorem~3 in~\cite[Sec.~IV]{SigurjonssonKim05} is the same as the
cut-set bound of Theorem~\ref{thm:cut-set-bound}.
\end{remark}

%=============================================================
\subsection{Multiaccess Channels with Feedback}
\label{subsec:MAC-FB}
Several capacity results for DMNs generalize to problems with iBM.
For example, consider Willems' result~\cite{Willems:82} that the Cover-Leung
region~\cite{Cover81} is $\set{C}$ for full feedback ($Y_1=Y_2=Y_3=Y$) and
where one channel input, say $X_1$, is a function of $Y$ and $X_2$.
A natural generalization to MACs with iBM is to consider full feedback
($Y_{1,i}=Y_{2,i}=Y_{3,i}=Y_i$) and require $X_{1,i}=f_{i}(X_2^i,Y^i)$ for $i=1,2,\ldots,L$.
A MAC of this type is the binary adder channel (BAC) with $\{0,1\}$ input
alphabets and the integer-addition output
\begin{align} \label{eq:modulo-additive}
   \ul{Y} = {\mathbf G}_1 \ul{X}_1 + {\mathbf G}_2 \ul{X}_2
\end{align}
where ${\mathbf G}_1$ and  ${\mathbf G}_2$ are lower-triangular matrices with $\{0,1\}$ entries,
and where ${\mathbf G}_1$  has ones on the diagonal.

\begin{theorem} \label{thm:mac-fb}
The capacity region of a MAC with iBM and full feedback and where $X_{1,i}=f_{i}(X_2^i,Y^i)$
for all $i$ is
\begin{align} \label{eq:mac-fb-3}
 \bigcup \left\{ (R_1,R_2):
 \begin{array}{l}
  0 \le R_1, \; 0 \le R_2 \\
  R_1 \le I({\mathbf A}_1^L ; Y^L | {\mathbf A}_2^L \, V)/L \\
  R_2 \le I({\mathbf A}_2^L ; Y^L | {\mathbf A}_1^L \, V)/L \\
  R_1+R_2 \le I({\mathbf A}_1^L {\mathbf A}_2^L ; Y^L)/L 
 \end{array} \right\}
\end{align}
where the union is over distributions that factor as
\begin{align} \label{eq:mac-fb-pdf-2}
  P(v) \left[  \prod_{k=1}^2 P({\mathbf a}_k^L|v) 1(x_k^L \| {\mathbf a}_k^L,0y^{L-1}) \right]
  P(y^L \| x_1^L, x_2^L ).
\end{align}
A cardinality bound on $V$ is $|\set{V}| \le \left|\set{Y}^L\right|+2$.
\end{theorem}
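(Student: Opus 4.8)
The plan is to establish achievability by adapting Willems' block-Markov coding scheme for the MAC with feedback to the in-block memory setting, and to establish the converse from Theorem~\ref{thm:cut-set-bound} together with a standard auxiliary-random-variable identification. For achievability, I would treat a super-block consisting of $m$ consecutive length-$L$ channel blocks. In each length-$L$ block the two encoders use code functions ${\mathbf A}_1^L$ and ${\mathbf A}_2^L$ drawn according to $P({\mathbf a}_1^L|v)P({\mathbf a}_2^L|v)$ for a common ``cloud center'' $V$ that carries the cooperative information resolved from the previous block's feedback; since $X_{1,i}=f_i(X_2^i,Y^i)$ and the feedback is full, encoder~1 can, within a block, track encoder~2's channel inputs and outputs, which is exactly the in-block analogue of Willems' requirement. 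The decoder uses backward decoding (or sliding-window decoding) across the $m$ blocks. The rate constraints $R_1\le I({\mathbf A}_1^L;Y^L|{\mathbf A}_2^L V)/L$, $R_2\le I({\mathbf A}_2^L;Y^L|{\mathbf A}_1^L V)/L$, and $R_1+R_2\le I({\mathbf A}_1^L{\mathbf A}_2^L;Y^L)/L$ then emerge from the usual joint-typicality error analysis, with the first two governing the decoding of the individual private messages given the resolved cooperation variable and the third governing the decoding of the combined fresh information; the normalization by $L$ accounts for the $L$ channel uses per block.

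For the converse, I would start from the cut-set bound \eqref{eq:cut-set-bound} applied to the MAC, which already yields the individual-rate and sum-rate mutual-information terms without the auxiliary $V$ (as in \eqref{eq:mac-fb}). The point is that the structural constraint $X_{1,i}=f_i(X_2^i,Y^i)$, combined with Fano's inequality, forces a single-letter-over-blocks bound in which a common past carried by all three terminals plays the role of $V$. Concretely, I would let $V$ stand for the collection of past messages/received symbols that is common knowledge at the start of a block after encoder~1's deterministic tracking, identify $V=(W_2,Y^{(\text{past blocks})})$ or a sufficient statistic thereof, and verify that the joint law factors as in \eqref{eq:mac-fb-pdf-2}: given $V$, the two code functions are conditionally independent because they depend on disjoint fresh messages, and $1(x_k^L\|{\mathbf a}_k^L,0y^{L-1})$ and $P(y^L\|x_1^L,x_2^L)$ are fixed by the code-function definition and the channel. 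The two conditioned bounds then follow by conditioning the cut-set inequalities on $V$ and invoking the Markov structure, while the unconditioned sum-rate bound is just the $\set{S}=\{1,2\}$ cut.

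For the cardinality bound $|\set{V}|\le|\set{Y}^L|+2$, I would apply the Fenchel--Eggleston--Carath\'eodory argument as in \cite[p.~310]{Csiszar81}: the region \eqref{eq:mac-fb-3} is the union over $V$ of a set determined by the distribution $P({\mathbf a}_1^L,{\mathbf a}_2^L,y^L|V=v)$, and we need to preserve $P_{Y^L}$ (which costs $|\set{Y}^L|-1$ scalars) together with the three mutual-information functionals $I({\mathbf A}_1^L;Y^L|{\mathbf A}_2^L,V=v)$, $I({\mathbf A}_2^L;Y^L|{\mathbf A}_1^L,V=v)$, $I({\mathbf A}_1^L{\mathbf A}_2^L;Y^L|V=v)$ (which cost $3$ more), giving $|\set{Y}^L|+2$ after the usual savings of one from the fact that the total probability must sum to one is already built into preserving the full output distribution; equivalently, preserving $P_{Y^L}$ fixes $H(Y^L|V)$'s contribution appropriately so that only two of the three mutual-information constraints are independent extra functionals. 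I expect the main obstacle to be the converse bookkeeping: carefully choosing the identification of $V$ so that \eqref{eq:mac-fb-pdf-2} holds exactly — in particular the conditional independence ${\mathbf A}_1^L-V-{\mathbf A}_2^L$ — while simultaneously getting all three single-letterized inequalities out of the multi-block Fano argument; this is the analogue of the delicate step in Willems' original proof and requires verifying that encoder~1's tracking of $(X_2^i,Y^i)$ does not introduce spurious dependence across blocks once the across-block feedback is (harmlessly) removed as in Remark~\ref{remark:in-block-fb}.
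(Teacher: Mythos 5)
Your overall architecture (Willems-style block-Markov achievability, a Fano-based converse with an auxiliary $V$, and a Carath\'eodory count for $|\set{V}|$) is the paper's architecture, and your achievability sketch and cardinality argument are consistent with what the paper does (the paper itself only sketches achievability as ``mimic \cite{Willems:82}'' and defers the cardinality bound to the techniques of its Appendices B and C). However, there is a genuine gap in your converse: the concrete identification $V=(W_2,Y^{(\text{past blocks})})$ cannot work. First, $W_2$ is not common knowledge --- the hypothesis $X_{1,i}=f_i(X_2^i,Y^i)$ lets encoder~1 track $X_2^i$, not $W_2$. Second, and fatally, any $V$ that contains $W_2$ makes ${\mathbf A}_2^L$ a deterministic function of $V$ together with the past outputs, so $I({\mathbf A}_2^L;Y^L\,|\,{\mathbf A}_1^L\,V)=0$ and your outer bound would force $R_2=0$, contradicting achievability. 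The paper instead takes $V_i=X_1^{(i-1)L}Y^{(i-1)L}$: this contains neither message, is computable by both encoders (encoder~2 recovers $X_1^{(i-1)L}$ from $X_2^{(i-1)L}$ and $Y^{(i-1)L}$ via $f$), yields the equality step in the Fano chain because ${\mathbf A}_2^{iL}Y^{(i-1)L}$ determines $X_2^{(i-1)L}$ and hence $V_i$, and delivers the required conditional independence $\bar{\mathbf A}_{1,T}^L-V_T-\bar{\mathbf A}_{2,T}^L$.

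A second, smaller gap is the claim that the three inequalities follow ``by conditioning the cut-set inequalities on $V$.'' Theorem~\ref{thm:cut-set-bound} produces the $V$-free region \eqref{eq:mac-fb}; the strengthened region \eqref{eq:mac-fb-3} is not obtained by conditioning that result but by rerunning Fano's inequality with $V_i$ inserted at the correct point and then block-letterizing with the restricted code functions $\bar{\mathbf A}_{k,i}^L$ of Appendix~\ref{app:proof} (the analogue of \eqref{eq:bound3}), together with a time-sharing variable $T$ and the Markov chains $T-V_T\bar{\mathbf A}_{1,T}^L\bar{\mathbf A}_{2,T}^L-Y_T^L$ and $T-\bar{\mathbf A}_{1,T}^L\bar{\mathbf A}_{2,T}^L-Y_T^L$ that let $T$ be absorbed. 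Note also that the sum-rate bound in \eqref{eq:mac-fb-3} is deliberately left unconditioned on $V$, which your ``$\set{S}=\{1,2\}$ cut'' remark gets right. You correctly flagged the $V$-identification as the delicate step; it is precisely where your proposal, as written, fails.
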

\begin{proof}
The proof mimics that in~\cite{Willems:82} and is given in Appendix~\ref{app:mac-fb}.
\end{proof}

\begin{proposition} \label{prop:mac-fb}
An alternative way of writing \eqref{eq:mac-fb-3}-\eqref{eq:mac-fb-pdf-2} is
\begin{align} \label{eq:mac-fb-2}
 \bigcup \left\{ (R_1,R_2):
 \begin{array}{l}
  0 \le R_1, \; 0 \le R_2 \\
  R_1 \le I(X_1^L \rightarrow Y^L \| X_2^L \, | \, V)/L \\
  R_2 \le I(X_2^L \rightarrow Y^L \| X_1^L \, | \, V)/L \\
  R_1+R_2 \le I(X_1^L X_2^L \rightarrow Y^L)/L
 \end{array} \right\}
\end{align}
where the union is over distributions that factor as
\begin{align} \label{eq:mac-fb-pdf}
  P(v) \left[  \prod_{k=1}^2 P(x_k^L \| 0y^{L-1}|v) \right] P(y^L \| x_1^L, x_2^L ).
\end{align}
Note that one conditions on $V$ for all times.
\end{proposition}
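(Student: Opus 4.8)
The plan is to prove that the two set-builder expressions in Proposition~\ref{prop:mac-fb} define the same region by setting up a correspondence between the code-function distributions of the form \eqref{eq:mac-fb-pdf-2} and the causally-conditioned input distributions of the form \eqref{eq:mac-fb-pdf}, and checking that this correspondence preserves the three rate constraints.

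First I would establish the distribution correspondence in both directions. In the forward direction, take any joint distribution of the form \eqref{eq:mac-fb-pdf-2} and marginalize out ${\mathbf A}_1^L,{\mathbf A}_2^L$; I claim the resulting distribution on $(V,X_1^L,X_2^L,Y^L)$ has the form \eqref{eq:mac-fb-pdf}. This is the code-function/causal-conditioning dictionary of \cite{Kramer03} (see also \cite{Massey90,Kramer98,Tatikonda00}): since given $V=v$ the function ${\mathbf A}_k^L$ is drawn independently of ${\mathbf A}_{k'}^L$ and the indicator $1(x_k^L\|{\mathbf a}_k^L,0y^{L-1})$ forces $X_{k,i}$ to be a function of $({\mathbf A}_k^i,Y^{i-1})$, the partial marginal $\sum_{{\mathbf a}_k^L}P({\mathbf a}_k^L|v)\,1(x_k^L\|{\mathbf a}_k^L,0y^{L-1})$ collapses to $\prod_{i=1}^L P(x_{k,i}\mid x_k^{i-1},y^{i-1},v)=P(x_k^L\|0y^{L-1}|v)$, and the two users' factors multiply because ${\mathbf A}_1^L-V-{\mathbf A}_2^L$. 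In the reverse direction, given a distribution \eqref{eq:mac-fb-pdf} I would realize it by a distribution \eqref{eq:mac-fb-pdf-2}: take ${\mathbf A}_k^L$ to be a random code tree whose label at the node indexed by a feedback prefix $y^{i-1}$ (along which the inputs $x_k^{i-1}$ are already fixed by the labels of ancestor nodes) is an independent draw from $P(x_{k,i}\mid x_k^{i-1},y^{i-1},v)$, with the two trees drawn independently given $v$. The induced joint on $(V,X_1^L,X_2^L,Y^L)$, obtained by discarding off-path labels, is exactly \eqref{eq:mac-fb-pdf}. Hence ``marginalize out ${\mathbf A}_1^L,{\mathbf A}_2^L$'' is a surjection from the class \eqref{eq:mac-fb-pdf-2} onto the class \eqref{eq:mac-fb-pdf}.

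Second I would verify that this map carries the constraints of \eqref{eq:mac-fb-3} onto those of \eqref{eq:mac-fb-2}. Working on the joint distribution that carries both the ${\mathbf A}_k^L$ and the $X_k^L$: because the channel $P(y^L\|x_1^L,x_2^L)$ depends on the code functions and on $V$ only through the inputs, $Y_i-(Y^{i-1},X_1^i,X_2^i,V)-({\mathbf A}_1^L,{\mathbf A}_2^L)$ is a Markov chain, and since $X_1^i,X_2^i$ are determined by $({\mathbf A}_1^i,{\mathbf A}_2^i,Y^{i-1})$ this gives $H(Y^L\mid {\mathbf A}_1^L{\mathbf A}_2^L V)=H(Y^L\|X_1^L,X_2^L\mid V)$ and $H(Y^L\mid {\mathbf A}_1^L{\mathbf A}_2^L)=H(Y^L\|X_1^L,X_2^L)$; as the $Y^L$-marginal is preserved, $I({\mathbf A}_1^L{\mathbf A}_2^L;Y^L)=I(X_1^L X_2^L\rightarrow Y^L)$, which is the sum-rate line. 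For the two individual-rate lines one needs in addition that $Y_i-(Y^{i-1},X_2^i,V)-{\mathbf A}_2^L$ is Markov (and symmetrically for user~1); this follows from the fact that conditioning on $(V,Y^{i-1},X_2^{i-1})$ restores the conditional independence of ${\mathbf A}_1^L$ and ${\mathbf A}_2^L$, since the posterior then splits into a factor depending on ${\mathbf a}_1^L$ through the likelihood $\prod_{j<i}P(y_j\mid y^{j-1}x_1^jx_2^j)$ and a factor depending on ${\mathbf a}_2^L$ through that likelihood together with the indicator $1(x_2^{i-1}\|{\mathbf a}_2^{i-1},0y^{i-2})$. Granted this, $H(Y^L\mid {\mathbf A}_2^L V)=H(Y^L\|X_2^L\mid V)$, whence $I({\mathbf A}_1^L;Y^L\mid {\mathbf A}_2^L V)=I(X_1^L\rightarrow Y^L\|X_2^L\mid V)$, and likewise with the user indices swapped. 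Combining the surjectivity of the first part with these identities shows that a rate pair lies in \eqref{eq:mac-fb-3} if and only if it lies in \eqref{eq:mac-fb-2}.

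The main obstacle is the pair of individual-rate identities. The naive attempt to delete ${\mathbf A}_2^L$ from $H(Y_i\mid Y^{i-1}X_2^i{\mathbf A}_2^L)$ fails, because conditioning on the shared feedback $Y^{i-1}$ couples the two otherwise-independent code functions; the point is to recognize that additionally conditioning on $X_2^{i-1}$ — which is already a deterministic function of ${\mathbf A}_2^{i-1}$ and $Y^{i-2}$, hence free to insert — decouples them again. Everything else, namely the distribution bookkeeping and the sum-rate identity, is routine causal-conditioning manipulation along the lines of \cite{Kramer03}.
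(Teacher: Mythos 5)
Your proof is correct and follows essentially the same route as the paper: the paper's proof rests on exactly the three Markov chains ${\mathbf A}_1^L - V X_1^i Y^{i-1}-Y_i$, ${\mathbf A}_2^L - V X_2^i Y^{i-1}-Y_i$, and ${\mathbf A}_1^L {\mathbf A}_2^L - V X_1^i X_2^i Y^{i-1}-Y_i$ that you identify, and converts each conditional mutual information into the corresponding directed information term by inserting the inputs (which are functions of the code functions and past outputs) into the conditioning. The only difference is one of completeness: you also spell out the reverse direction of the distribution correspondence (realizing any $P(x_k^L \| 0y^{L-1}|v)$ by independently labeled random code trees) and the justification of the decoupling Markov chain, both of which the paper leaves implicit in the single sentence ``The distribution \eqref{eq:mac-fb-pdf} follows from \eqref{eq:mac-fb-pdf-2}.''
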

\begin{proof}
Consider the distribution \eqref{eq:mac-fb-pdf-2}.
The chains
\begin{align}
& {\mathbf A}_1^L - V X_1^i Y^{i-1}-Y_i \\
& {\mathbf A}_2^L - V X_2^i Y^{i-1}-Y_i \\
& {\mathbf A}_1^L {\mathbf A}_2^L - V X_1^i X_2^i Y^{i-1}-Y_i \label{eq:Markov-3}
\end{align}
are Markov so that
\begin{align}
& I({\mathbf A}_1^L ; Y^L | {\mathbf A}_2^L \, V) \nonumber \\
& = \sum_{i=1}^L H( Y_i | {\mathbf A}_2^L Y^{i-1} X_2^i \, V) - H( Y_i | {\mathbf A}_1^L {\mathbf A}_2^L Y^{i-1} X_1^i X_2^i \, V)
 \nonumber \\
 & = I(X_1^L \rightarrow Y^L \| X_2^L \, | \, V).
\end{align}
and similarly
\begin{align}
& I({\mathbf A}_2^L ; Y^L | {\mathbf A}_1^L \, V) = I(X_2^L \rightarrow Y^L \| X_1^L \, | \, V) \\
& I({\mathbf A}_1^L {\mathbf A}_2^L ; Y^L ) = I(X_1^L X_2^L \rightarrow Y^L ).
\end{align}
The distribution \eqref{eq:mac-fb-pdf} follows from \eqref{eq:mac-fb-pdf-2}.
\end{proof}
%

%=============================================================
\subsection{Broadcast Channels}
\label{subsec:BCs}
Consider a two-user (three terminal) BC with iBM. We label the transmitter inputs and outputs
as $X^L$ and $Y^L$, respectively, and the receiver outputs as $Y_1^L$ and $Y_2^L$.
Suppose there are only dedicated messages and no common message.
The cut-set bound of Theorem~\ref{thm:cut-set-bound} is
\begin{align} \label{eq:bc-fb}
 \bigcup_{P_{{\mathbf A}^L}} \left\{ (R_1,R_2):
 \begin{array}{l}
  0 \le R_1 \le I({\mathbf A}^L ; Y_1^L )/L \\
  0 \le R_2 \le I({\mathbf A}^L ; Y_2^L )/L \\
  R_1+R_2 \le I({\mathbf A}^L ; Y_1^L Y_2^L )/L
  \end{array} \right\} .
\end{align}

An achievable region follows by extending Marton's region as in~\cite[Lemma~2]{Kramer03}:\
the non-negative rate pair $(R_1,R_2)$ is achievable if it satisfies
\begin{align} \label{eq:bc-marton}
 \begin{array}{l}
   LR_1 \le I(T U_1 ; Y_1^L ) \\
   LR_2 \le I(T U_2 ; Y_2^L ) \\
   L(R_1+R_2) \le \min\left( I(T;Y_1^L),I(T;Y_2^L) \right)  \\
   \qquad + \, I(U_1 ; Y_1^L | T) + I(U_2 ; Y_2^L|T) - I(U_1 ; U_2 | T)
 \end{array}
\end{align}
for some auxiliary random variables $TU_1U_2$ for which the joint
distribution of the random variables factors as
\begin{align} \label{eq:bc-fb-pdf}
  P(t,u_1,u_2) 1(x^L \| 0y^{L-1} | t,u_1,u_2) P(y_1^L,y_2^L \| x^L ).
\end{align}
Marton's region is known to be the same as \eqref{eq:bc-fb} for $L=1$ and {\em deterministic}
broadcast channels. For $L>1$, suppose that $Y_{1,i}$ and
$Y_{2,i}$ are functions of $X^i$ for all $i$. We may choose $T=0$, $U_1=Y_1^L$, and
$U_2=Y_2^L$ without violating the Markov condition \eqref{eq:bc-fb-pdf} and achieve
\begin{align} \label{eq:bc-fb-deterministic}
 \bigcup_{P_{X^L}} \left\{ (R_1,R_2):
 \begin{array}{l}
  0 \le R_1 \le H(Y_1^L )/L \\
  0 \le R_2 \le H(Y_2^L )/L \\
  R_1+R_2 \le H(Y_1^L Y_2^L )/L
  \end{array} \right\} .
\end{align}
The cut-set region \eqref{eq:bc-fb} is the same as \eqref{eq:bc-fb-deterministic}, and
therefore \eqref{eq:bc-fb-deterministic} is $\set{C}$. In fact, feedback does not increase
capacity because the transmitter knows, and controls, the channel outputs.

\begin{remark} \label{remark:bc-state}
The capacity region of a physically degraded BC with two receivers and state
known causally at the encoder was derived in~\cite[Sec.~II]{SigurjonssonKim05}. 
Such channels are NiBMs with block length $L=2$, see Remark~\ref{remark:state}.
The cut-set bound of Theorem~\ref{thm:cut-set-bound} is loose but the capacity region
is achieved by using the coding method described above. In particular,
we choose $U_2$ in \eqref{eq:bc-marton}-\eqref{eq:bc-fb-pdf} to be a constant and
recover the achievability part of Theorem~1 of~\cite[Sec.~II]{SigurjonssonKim05}.
\end{remark}

%=============================================================
\subsection{Interference Channels}
\label{subsec:ICs}
The cut-set bound is often not so interesting for BCs or interference channels (ICs)
with $L=1$ because better capacity bounds exist. The same will be
true for $L>1$. On the other hand, studying extensions of existing bounds and
achievable regions is interesting, e.g., extensions of the 
Han-Kobayashi region~\cite{HK81} to $L>1$.
It may also be interesting to study interference alignment~\cite{CadambeJafar08,Maddah-Ali08}
and interference focusing~\cite{Ghozlan11} for NiBMs.

%=============================================================
%=============================================================
\section{Relay Networks}
\label{sec:relay-networks}
Causal relay networks~\cite{Baik11} and generalized networks~\cite{FongIT:12}
effectively extend relay networks with delays~\cite{ElGamal:07} in the sense that for
every relay network with delays there is a causal relay network having the same capacity
region. Furthermore, causal relay networks and generalized networks are special NiBMs.
This section focuses on relay networks with iBM and applies
Theorem~\ref{thm:cut-set-bound} to this class of problems.

%=============================================================
\subsection{Relay Channels}
\label{subsec:RCs}
%
%%%%%%%%%%%%%%%%%%%%%%%%%%%%%%%%%%%%%%%%
\begin{figure*}[t!]
%  \centerline{\includegraphics[scale=0.5]{fig/rc-iBM-2}}
  \centerline{\includegraphics[scale=0.5]{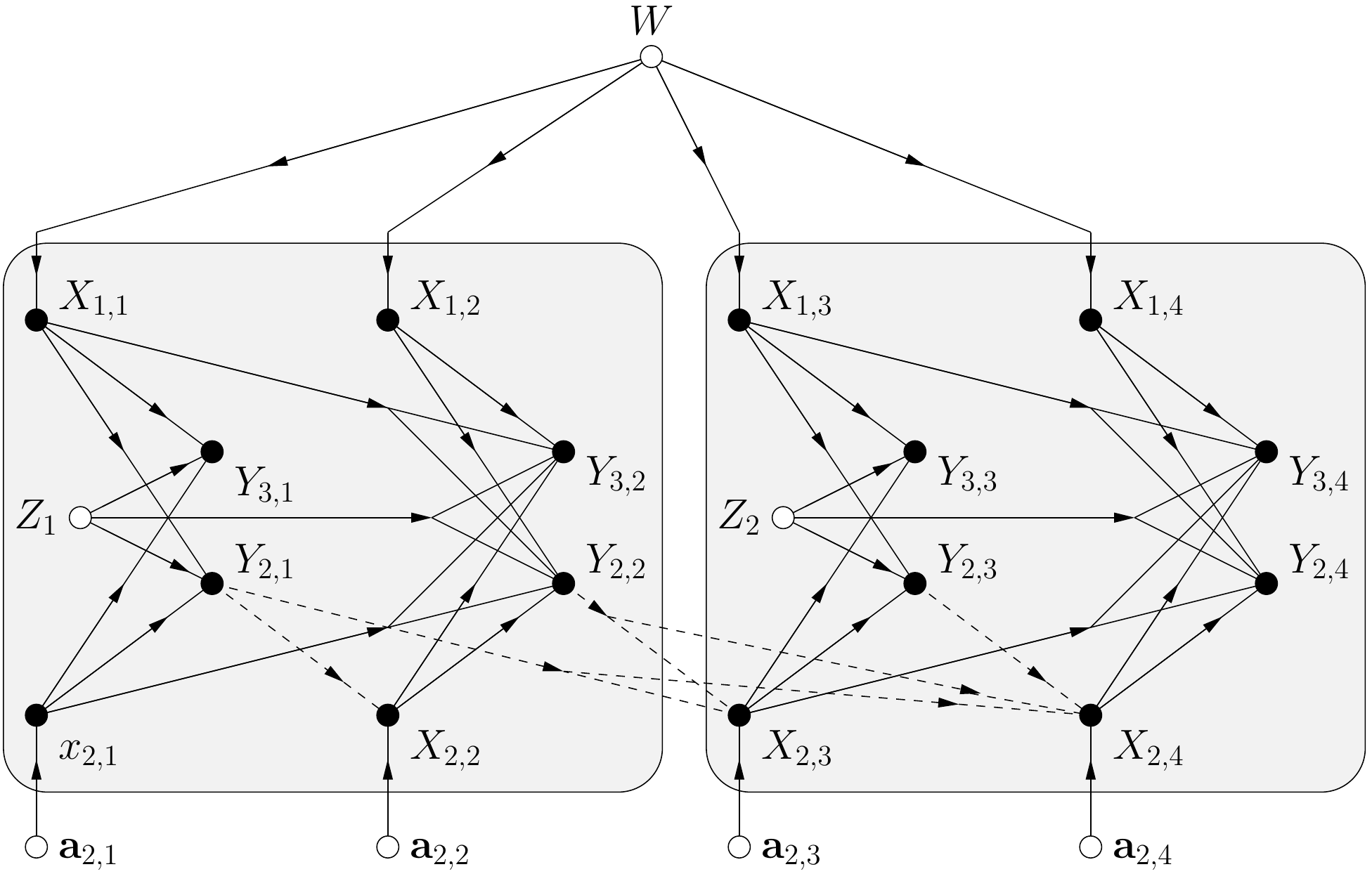}}
  \caption{FDG for a RC with iBM and block length $L=2$.}
  \label{fig:rc-iBM-2}
\end{figure*}
%%%%%%%%%%%%%%%%%%%%%%%%%%%%%%%%%%%%%%%%
Consider a three-node relay channel (RC) with iBM and source inputs $X_1^L$,
relay inputs $X_2^L$ and outputs $Y_2^L$, and destination outputs $Y_3^L$.
The RC is a special case of the MAC in Sec.~\ref{subsec:MACs} where
node~2 (the relay) has no message and node~1 (the source) has no feedback.
A FDG for $L=2$ and $n=4$ is shown in Fig.~\ref{fig:rc-iBM-2}.
The cut-set bound of Theorem~\ref{thm:cut-set-bound} is
\begin{align} \label{eq:rc-fb}
  L C \le \max \min\left(
   I( X_1^L ; Y_2^L Y_3^L | {\mathbf A}_2^L ), I(X_1^L {\mathbf A}_2^L ; Y_3^L )
   \right)
\end{align}
where the maximization is over $P_{X_1^L{\mathbf A}_2^L}$.

We list several classic coding strategies~\cite{Cover79,Kramer05}. The achievable
rates follow by standard random coding arguments
(see~\cite[Sec.~VI]{Kramer03}).

\begin{itemize}
\item {\em Decode-forward (DF)} achieves rates $R$ satisfying
\begin{align} \label{eq:rc-fb-df}
  L R = \max \min\left(
   I( X_1^L ; Y_2^L | {\mathbf A}_2^L ), I(X_1^L {\mathbf A}_2^L ; Y_3^L )
   \right)
\end{align}
where the maximization is over $P_{X_1^L{\mathbf A}_2^L}$ and
where the joint distribution factors as
\begin{align} \label{eq:rc-fb-df-pdf}
  P(x_1^L,{\mathbf a}_2^L) \, 1(x_2^L \| {\mathbf a}_2^L,0y_2^{L-1}) \, P(y_2^L,y_3^L \| x_1^L,x_2^L ).
\end{align}

\item {\em Partial decode-forward (PDF)} achieves $R$ satisfying
\begin{align} \label{eq:rc-fb-pdf}
   L R & = \max\min\left( I( U ; Y_2^L | {\mathbf A}_2^L ) + I( X_1^L ; Y_3^L | {\mathbf A}_2^L U), \right. \nonumber \\
   & \qquad \qquad \quad \left. I(X_1^L {\mathbf A}_2^L ; Y_3^L ) \right)
\end{align}
where the maximization is over $P_{U X_1^L{\mathbf A}_2^L}$ and
where the joint distribution factors as
\begin{align} \label{eq:rc-fb-pdf-pdf}
  P(u,x_1^L,{\mathbf a}_2^L) 1(x_2^L \| {\mathbf a}_2^L,0y_2^{L-1})  P(y_2^L,y_3^L \| x_1^L,x_2^L ).
\end{align}
The rate \eqref{eq:rc-fb-pdf} generalizes \cite[Prop.~5]{ElGamal:07}.

\item {\em Compress-foward (CF)} achieves $R$ satisfying
\begin{align} \label{eq:rc-fb-cf}
   & L R = \max\min\left( I( X_1^L ; \hat{Y}_2^L Y_3^L | {\mathbf A}_2^L T), \right. \nonumber \\
   & \; \left. I(X_1^L {\mathbf A}_2^L ; Y_3^L | T) - I( Y_2^L ; \hat{Y}_2^L | X_1^L {\mathbf A}_2^L Y_3^L T)  \right)
\end{align}
where the maximization is over joint distributions that factor as
\begin{align} \label{eq:rc-fb-cf-pdf}
  & P(t) \, P(x_1^L|t) \, P({\mathbf a}_2^L|t) \, 1(x_2^L \| {\mathbf a}_2^L,0y_2^{L-1}) \nonumber \\
  & \; \cdot P(\hat{y}_2^L | {\mathbf a}_2^L,y_2^L,t ) \, P(y_2^L,y_3^L \| x_1^L,x_2^L ).
\end{align}
\end{itemize}

\begin{example}
Remark~\ref{remark:channel} states that we can view the
channel as being $P(y_2^L,y_3^L | x_1^L,{\mathbf a}_2^L)$.
The RC is {\em physically degraded} if
\begin{align}
   X_1^L - {\mathbf A}_2^L Y_2^L - Y_3^L
\end{align}
forms a Markov chain so that $I(X_1^L; Y_3^L | {\mathbf A}_2^L Y_2^L)=0$. The DF rate
\eqref{eq:rc-fb-df} thus matches~\eqref{eq:rc-fb}. This capacity result
generalizes \cite[Prop.~6]{ElGamal:07}.
\end{example}

\begin{example}
The RC is {\em reversely} physically degraded if
\begin{align}
   X_1^L - {\mathbf A}_2^L Y_3^L - Y_2^L
\end{align}
forms a Markov chain so that $I(X_1^L; Y_2^L | {\mathbf A}_2^L Y_3^L)=0$.
The cut-set bound \eqref{eq:rc-fb} reduces to
\begin{align} \label{eq:rc-reverse-degraded}
  L C \le \max_{{\mathbf a}_2^L} \max_{P_{X_1^L}}
  I( X_1^L ; Y_3^L | {\mathbf A}_2^L = {\mathbf a}_2^L ) .
\end{align}
The rate on the right-hand side of \eqref{eq:rc-reverse-degraded} is achieved
by random coding with ${\mathbf A}_2^L={\mathbf a}_2^L$.
\end{example}

\begin{remark}
Physically degraded RCs with state known causally at the encoder
are treated in~\cite[Sec.~III]{SigurjonssonKim05}.  Such channels are NiBMs
with block length $L=2$ (see Remark~\ref{remark:state}) and Theorem~\ref{thm:cut-set-bound}
gives the converse for~\cite[Thm.~2]{SigurjonssonKim05}.
However, these channels are not treated 
in this section because the source node receives the channel state as ``feedback''.
\end{remark}

\begin{example}
Suppose the RC is semi-deterministic in the sense that
$Y_{2,i}=f_i(X_1^i,X_2^i)$ for $i=1,2,\ldots,L$. We may choose $U=Y_2^L$ and
\eqref{eq:rc-fb-pdf} becomes the cut-set bound \eqref{eq:rc-fb}.
%\begin{align} \label{eq:rc-fb-pdf2}
%   L C & = \max \min\left( H(Y_2^L | {\mathbf A}_2^L ) + I( X_1^L ; Y_3^L | {\mathbf A}_2^L Y_2^L), \right. %\nonumber \\
%   & \qquad \qquad \quad \; \left. I(X_1^L {\mathbf A}_2^L ; Y_3^L ) \right).
%\end{align}
This capacity result generalizes \cite[Prop.~7]{ElGamal:07}.
\end{example}

\begin{example} \label{ex:CFexample}
Suppose the RC is semi-deterministic in the (more general) sense that 
$Y_{2,i}=f_i(X_1^i,X_2^i,Y_3^i)$ for $i=1,2,\ldots,L$. Consider \eqref{eq:rc-fb-cf}
for which we have
\begin{align}
   I( Y_2^L ; \hat{Y}_2^L | X_1^L {\mathbf A}_2^L Y_3^L T)=0.
\end{align}
We choose $T$ as a constant and $\hat{Y}_2^L=Y_2^L$ so that \eqref{eq:rc-fb-cf}
is the right-hand side of \eqref{eq:rc-fb} but with independent $X_1^L$ and ${\mathbf A}_2^L$.
\end{example}

\begin{example}
A special case of Example~\ref{ex:CFexample} is where $Y_{2,i}=f_i(X_1^i,Y_3^i)$ and 
there is a separate channel with iBM and capacity $R_0$ from the relay to the destination
(see~\cite{Kim:08}). The best $X_1^L$ and ${\mathbf A}_2^L$ are independent so the choice
$\hat{Y}_2^L=Y_2^L$ lets CF achieve the cut-set bound \eqref{eq:rc-fb}.
\end{example}

%=============================================================
\subsection{Relays without Delay}
\label{subsec:delay}
A relay {\em without} delay~\cite{ElGamal:07} has source input $X_1$, relay input
$X_2$ and output $Y_2$, and destination output $Y_3$. The channel is
\begin{align} \label{eq:rwod}
   P(y_2 | x_1) \cdot P(y_3 | x_1,x_2,y_2)
\end{align}
and the FDG for two channel uses is shown in Fig.~\ref{fig:rn-delay-iBM-2}.

This channel is usually considered {\em memoryless}. However, we can
model the channel as a RC with iBM and block length $L=2$ and
where $\set{X}_{2,1}=\set{Y}_{3,1}=\set{Y}_{2,2}=\set{X}_{1,2}=\{0\}$. The channel is
therefore
\begin{align} \label{eq:rwod-time}
   P(y_2^2,y_3^2 \| x_1^2,x_2^2) = P(y_{2,1} | x_{1,1}) \cdot P(y_{3,2} | x_{1,1},x_{2,2},y_{2,1})
\end{align}
as long as $x_{2,1}=y_{3,1}=y_{2,2}=x_{1,2}=0$. Note that every node has at most
one channel input and output in each block. We can thus remove the time indices
and \eqref{eq:rwod-time} becomes \eqref{eq:rwod}. Observe that Fig.~\ref{fig:rn-delay-iBM-2}.
is a subgraph of Fig.~\ref{fig:rc-iBM-2} up to relabeling the nodes.

%%%%%%%%%%%%%%%%%%%%%%%%%%%%%%%%%%%%%%%%
\begin{figure}[t!]
%  \centerline{\includegraphics[scale=0.5]{fig/rn-delay-iBM-2}}
  \centerline{\includegraphics[scale=0.5]{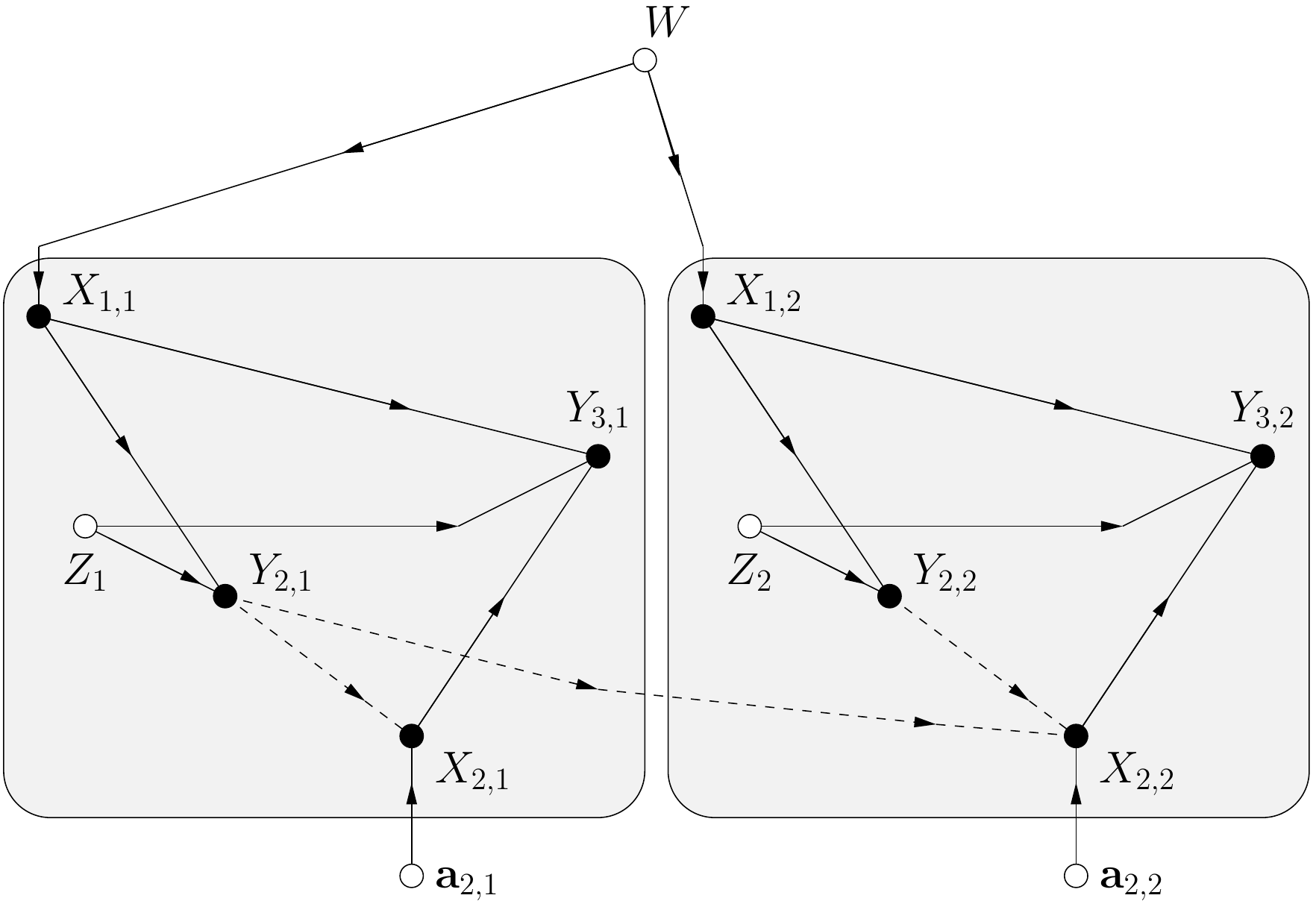}}
  \caption{FDG for a RC when the relay has no delay. The channel is a NiBM with block length $L=2$.}
  \label{fig:rn-delay-iBM-2}
\end{figure}
%%%%%%%%%%%%%%%%%%%%%%%%%%%%%%%%%%%%%%%%

We apply the cut-set bound \eqref{eq:rc-fb} and remove the time indices to obtain
\begin{align} \label{eq:rc-fb-0}
  2 C \le \max \min\left(
   I( X_{1} ; Y_{2} Y_{3} | {\mathbf A}_{2} ), I(X_{1} {\mathbf A}_{2} ; Y_{3} )
   \right)
\end{align}
where the maximization in \eqref{eq:rc-fb-0} is over $P_{X_{1}{\mathbf A}_{2}}$ and
$|\set{A}_{2}|=|\set{X}_{2}|^{|\set{Y}_{2}|}$.
In fact, \eqref{eq:rc-fb-0} combined with this cardinality  constraint
is attributed to Willems' in \cite[p.~3419]{ElGamal:07}.
We show in Appendix~\ref{app:cardinality-rc-0} that one can choose
\begin{align} \label{eq:rc-fb-0-cardinality}
|\supp(P_{{\mathbf A}_{2}})| \le \min\left(|\set{Y}_{3}|+1, |\set{X}_{1}| \cdot |\set{X}_{2}|+1 \right).
\end{align}

\begin{remark} \label{remark:bound}
The cut-set bound in~\cite[Thm. 2]{ElGamal:07} is the same as \eqref{eq:rc-fb-0}
except that the maximization is different. The bound of~\cite[Thm. 2]{ElGamal:07}
requires
\begin{align} \label{eq:EG07-Thm2}
X_{2} = f({\mathbf A}_{2},Y_{2})
\end{align}
for some function $f(\cdot)$ and one optimizes over all $f(\cdot)$ and
$P_{X_{1}{\mathbf A}_{2}}$ such that
$|\set{A}_{2}| \le |\set{X}_{1}| \cdot |\set{X}_{2}|+1$.

We claim that the formulation \eqref{eq:rc-fb-0} combined with \eqref{eq:rc-fb-0-cardinality}
is better than~\cite[Thm. 2]{ElGamal:07} in the sense that the former
has a smaller search space in general. Observe that \eqref{eq:rc-fb-0}-\eqref{eq:rc-fb-0-cardinality} 
requires optimizing $P_{X_{1}{\mathbf A}_{2}}$ by considering at most
\begin{align}
N_A=\min(|\set{Y}_{3}|+1,|\set{X}_{1}| \cdot |\set{X}_{2}|+1)
\end{align}
out of $|\set{X}_{2}|^{|\set{Y}_{2}|}$
code functions. We must therefore perform at most
\begin{align}
\binom{|\set{X}_{2}|^{|\set{Y}_{2}|}}{N_A}
\end{align}
optimizations in $|\set{X}_{1}| \cdot N_A-1$ dimensions.
In contrast, \eqref{eq:rc-fb-0} and \eqref{eq:EG07-Thm2}
require optimizing $P_{X_{1}{\mathbf A}_{2}}$
for $|\set{X}_{2}|^{|\set{A}_{2}|\cdot |\set{Y}_{2}|}$ functions
$f(\cdot): \set{A}_{2}\times\set{Y}_{2}\rightarrow\set{X}_{2}$ where $|\set{A}_{2}|$ is
at most
\begin{align}
N_V=|\set{X}_{1}| \cdot |\set{X}_{2}|+1.
\end{align}
We thus have at most $|\set{X}_{2}|^{N_V\cdot |\set{Y}_{2}|}$
optimizations in $|\set{X}_{1}| \cdot N_V-1$ dimensions. But we have $N_A \le N_V$ and
\begin{align}
\binom{|\set{X}_{2}|^{|\set{Y}_{2}|}}{N_A}
\le |\set{X}_{2}|^{N_A \cdot |\set{Y}_{2}|} \le |\set{X}_{2}|^{N_V \cdot |\set{Y}_{2}|}
\end{align}
so the optimization of \eqref{eq:rc-fb-0}-\eqref{eq:rc-fb-0-cardinality}
is generally simpler than the optimization of \eqref{eq:rc-fb-0} and
\eqref{eq:EG07-Thm2}. This discussion shows that one
may as well consider code functions directly rather than introducing auxiliary
random variables and auxiliary functions.
\end{remark}
\begin{example}
Suppose that $|\set{X}_{1}|=|\set{X}_{2}|=2$ and $|\set{Y}_{2}|=4$.
Then \eqref{eq:rc-fb-0-cardinality} states that at most $5$ code functions (here code trees) out of
16 need have positive probability. Our search is thus over $\binom{16}{5}=4368$
combinations of code trees. In comparison, \cite[Thm. 2]{ElGamal:07}
requires a search over $2^{20}\approx10^{6}$ mappings $f(\cdot)$. 
\end{example}

%=============================================================
\subsection{Relay Networks with Delays}
\label{subsec:relay-networks-delays}
Relay networks with delays~\cite{ElGamal:07} have the simplifying feature that every
node has at most one channel input and output in each block. Furthermore, there is exactly
one network message that originates at a designated source node $k=1$ and that is
destined for a designated node $k=K$. Nodes $1$ and $K$ have no channel
outputs and inputs, respectively, i.e., we effectively have $Y_{1,i}=X_{K,i}=0$ for all $i$.

A cut bound for such networks was developed in~\cite[Thm.~4]{ElGamal:07}
that is almost the same as Theorem~\ref{thm:cut-set-bound}. The difference between
the bounds is similar to the difference described in Remark~\ref{remark:bound}
above, i.e.,~\cite[Thm.~4]{ElGamal:07} uses auxiliary variables for the code functions
(in this case Shannon strategies) and specifies cardinality bounds on these variables.
Theorem~\ref{thm:cut-set-bound} instead uses the code functions directly, 
and these functions have finite cardinality if the channel input and output alphabets are finite
(see Remark~\ref{remark:auxRV}). One may develop improved cardinality bounds as
in~\cite[Thm.~4]{ElGamal:07} that are useful if the channel input or output alphabets are
continuous.

%=============================================================
\subsection{Causal Relay Networks and Generalized Networks}
\label{subsec:generalized}
Causal relay networks~\cite{Baik11} and generalized networks~\cite{FongIT:12}
are NiBMs that extend relay networks with delays by considering more than
one unicast session. We describe these networks by using an example
with $K=5$ nodes whose FDG for one block is shown in Fig.~\ref{fig:causal-rn}.
Nodes 1 and 2 can encode by using only received symbols from {\em past}
NiBM blocks and they are called {\em strictly causal} relays.
Nodes 3, 4, and 5 can encode by using received symbols from past {\em and current}
NiBM blocks and they are called {\em causal} relays. The block length is $L=3$.

%%%%%%%%%%%%%%%%%%%%%%%%%%%%%%%%%%%%%%%%
\begin{figure}[t!]
%  \centerline{\includegraphics[scale=0.5]{fig/causal-rn}}
  \centerline{\includegraphics[scale=0.5]{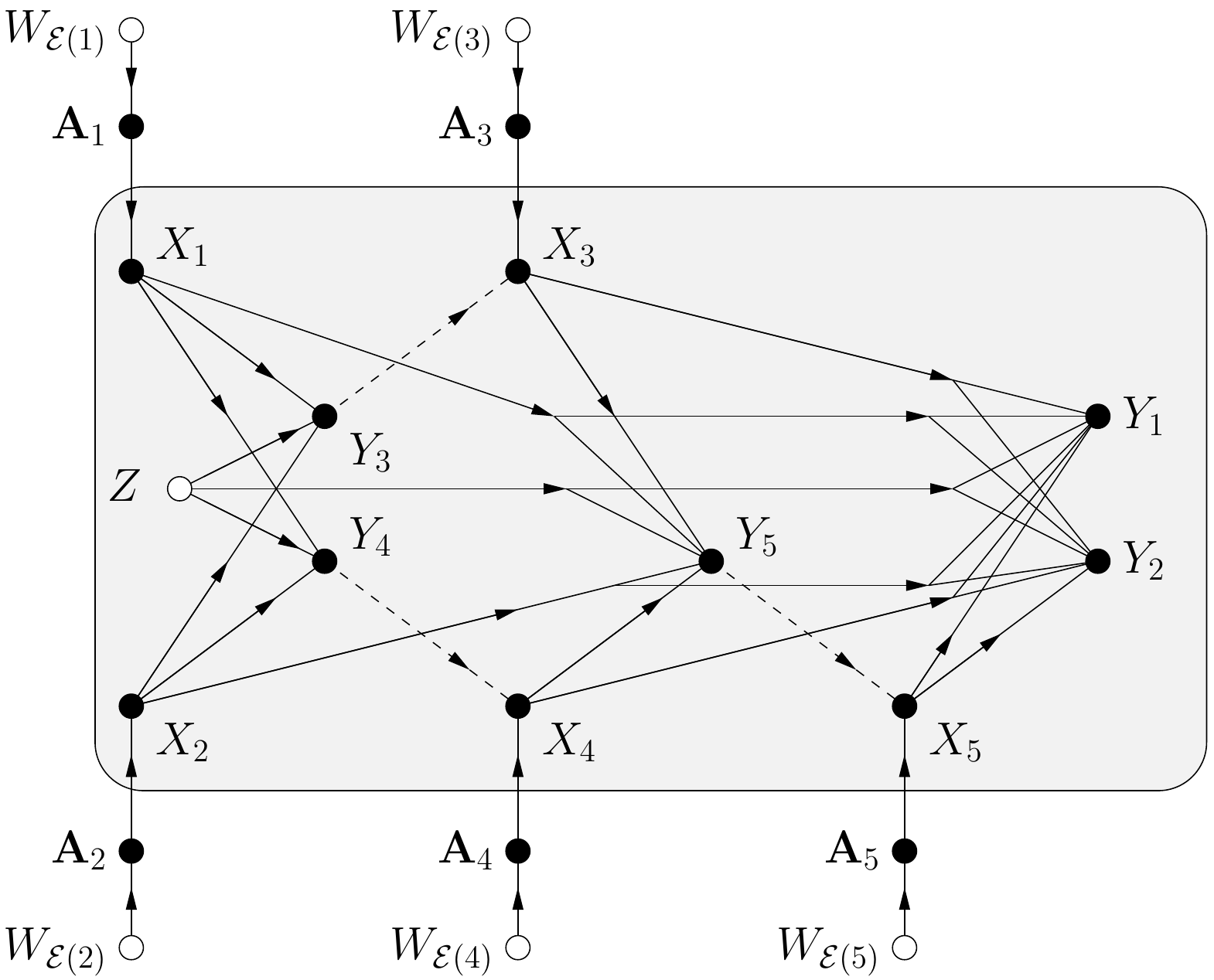}}
  \caption{FDG for a causal relay network with $K=5$ nodes
   and $n=3$ channel uses. The network is a NiBM with block length $L=3$.}
  \label{fig:causal-rn}
\end{figure}
%%%%%%%%%%%%%%%%%%%%%%%%%%%%%%%%%%%%%%%%

In the language of~\cite{Baik11}, the strictly causal relays are in the set $\set{N}_1=\{1,2\}$
and the causal relays are in $\set{N}_0=\{3,4,5\}$.
In the language of~\cite[Defn.~1]{FongIT:12}, we have two $3$-partitions of
$\set{K}=\{1,2,3,4,5\}$, namely the input and output partitions $\boldsymbol{\set{S}}$
$\boldsymbol{\set{G}}$ where
\begin{align}
   & \boldsymbol{\set{S}} = \{ \set{S}_1=\{1,2\}, \set{S}_2=\{3,4\}, \set{S}_3=\{5\}  \} \nonumber \\
   & \boldsymbol{\set{G}} = \{ \set{G}_1=\{3,4\}, \set{G}_2=\{5\}, \set{G}_3=\{1,2\}  \} .
\end{align}
We do not consider this notation further and focus on arguing that
Theorem~\ref{thm:cut-set-bound} improves on the main results of~\cite{Baik11,FongIT:12}.

Consider first~\cite[Thm.~1]{Baik11} and~\cite[Thm.~1]{FongIT:12}.
These bounds are the same as Theorem~\ref{thm:cut-set-bound} except that 
the right-hand side of \eqref{eq:weak} (normalized by $L$) replaces the right-hand side of
\eqref{eq:rate-bound}. We conclude that Theorem~\ref{thm:cut-set-bound} is at least
as good as~\cite[Thm.~1]{Baik11} and \cite[Thm.~1]{FongIT:12}.
Moreover, Example~\ref{example:state} shows that Theorem~\ref{thm:cut-set-bound}
can strictly improve these bounds (see also Example~\ref{ex:weak}).

Consider next~\cite[Thm.~2]{Baik11}. 
We illustrate how the bound works by using the cut $\set{S}=\{1,3\}$ in the
network of Fig.~\ref{fig:causal-rn}.
%For this cut,~\cite{Baik11} specifies the sets $\set{U}=\set{S}\cap\set{N}_0=\{3\}$,
%$\set{U}^c=\{4,5\}$,  $\set{V}=\set{S}\cap\set{N}_1=\{1\}$, $\set{V}^c=\{2\}$.
Theorem~\ref{thm:cut-set-bound} and a series of further steps gives
\begin{align}
  3R_{\set{M}(\set{S})}
  & \overset{(a)}{\le} I\left( X_1 {\mathbf A_3} ; Y_2 Y_4 Y_5 | X_2 {\mathbf A_4} {\mathbf A_5} \right)
  \nonumber \\
  & \overset{(b)}{=} I\left( X_1 ; Y_4  | X_2 {\mathbf A_4} {\mathbf A_5} \right) \nonumber \\ & \quad
  + I\left( X_1 {\mathbf A_3} ; Y_5 | X_2 X_4 Y_4 {\mathbf A_4} {\mathbf A_5} \right) \nonumber \\
  & \quad + I\left( X_1 {\mathbf A_3} ; Y_2 | X_2 X_4 X_5 Y_4 Y_5 {\mathbf A_4} {\mathbf A_5} \right)
  \nonumber \\
  & \overset{(c)}{\le} I\left( X_1 ; Y_4  | X_2 {\mathbf A_4} {\mathbf A_5} \right) \nonumber \\ & \quad
  + I\left( X_1 X_3 Y_3 ; Y_5 | X_2 X_4 Y_4 {\mathbf A_4} {\mathbf A_5} \right) \nonumber \\
  & \quad + I\left( X_1 {\mathbf A_3}; Y_2 | X_2 X_4 X_5 Y_4 Y_5 {\mathbf A_4} {\mathbf A_5} \right)
    \nonumber \\
  & \overset{(d)}{\le} I\left( X_1 ; Y_4  | X_2 \right) \nonumber \\ & \quad
  + I\left( X_1 X_3 Y_3 ; Y_5 | X_2 X_4 Y_4 \right) \nonumber \\
  & \quad + I\left( X_1 X_3 Y_3; Y_2 | X_2 X_4 X_5 Y_4 Y_5 \right)
  \label{eq:causal-relay-bounds}
\end{align}
where $(a)$ is simply \eqref{eq:rate-bound} and $(b)$ follows by using the chain rule for
mutual information and the Markovity in the channel.
Step $(c)$ follows because we have added $Y_3$ to the second mutual information expression
and by using the Markovity in the channel. The result is the bound of~\cite[Thm.~2]{Baik11} when
the causal relays do not have messages.
Step $(d)$ follows similarly and is the bound of~\cite[Thm.~1]{Baik11} and~\cite[Thm.~1]{FongIT:12}.

The above example extends to any causal relay network and any cut
(see Appendix~\ref{app:causal-relay-networks}).
In other words, the bound of~\cite[Thm.~2]{Baik11} improves
on the bounds of~\cite[Thm.~1]{Baik11} and~\cite[Thm.~1]{FongIT:12},
but all three bounds are implied by Theorem~\ref{thm:cut-set-bound}.
We show in Example~\ref{example:causal} below that
if the causal relays have no messages then Theorem~\ref{thm:cut-set-bound}
can be strictly better than~\cite[Thm.~2]{Baik11} due to inequality $(c)$.
Furthermore, the auxiliary random variables $U_k$ in~\cite[Thm.~2]{Baik11} are
not specified to be code functions. The optimization is thus more complex than by using
Theorem~\ref{thm:cut-set-bound} in general (see Remark~\ref{remark:bound}).
\begin{example} \label{example:causal}
Consider Fig.~\ref{fig:causal-rn} with $\set{X}_k=\set{Y}_k=\{0\}$ for $k=2,4$, i.e.,
nodes 2 and 4 are removed from the problem.
Consider $Y_3=[X_1,Z]$ where $\set{X}_1=\{0,1\}$ and
$P_Z(0)=P_Z(1)=1/2$, and $Y_5=Z$. Suppose there is only one
message with rate $R_{15}$ at node 1 destined for node 5
(so the causal relays at nodes 3 and 5 have no messages).
We effectively have a RC with no delay and the capacity is zero because
$X_1{\mathbf A}_3$ has no influence on $Y_5$.
For instance, the cut-set bound \eqref{eq:rate-bound} with $\set{S}=\{1,3\}$ gives
$3R_{15}\le I(X_1 {\mathbf A}_3 ; Y_5 | {\mathbf A}_5)=0$.

Next, consider the cut-set bound of~\cite[Thm.~2]{Baik11}.
There are two cuts to consider without nodes 2 and 4.
The cut $\set{S}=\{1,3\}$ gives (see \eqref{eq:causal-relay-bounds} after step $(c)$)
\begin{align}
   3R_{15} \le I(X_1 X_3 Y_3 ; Y_5  | {\mathbf A}_5 ) = 1
\end{align}
and the cut $\set{S}=\{1\}$ gives
\begin{align}
   3R_{15} \le I(X_1 ; Y_3 Y_5 | {\mathbf A}_3 {\mathbf A}_5 ) = H(X_1 | {\mathbf A}_3 {\mathbf A}_5 ).
\end{align}
But we have $H(X_1 | {\mathbf A}_3 {\mathbf A}_5)=1$ by choosing $X_1$
independent of ${\mathbf A}_3 {\mathbf A}_5$ and $P_{X_1}(0)=P_{X_1}(1)=1/2$.
Thus, the cut-set bound of~\cite[Thm.~2]{Baik11} is loose while Theorem~\ref{thm:cut-set-bound} is tight.
\end{example}
\begin{example}
Consider the generalized network called a ``BSC with correlated feedback" in~\cite[Sec.~VI]{FongIT:12}.
This network is a two-way channel with iBM and block length $L=2$ and with binary inputs and outputs
\begin{align*}
   & Y_{2,1} = X_{1,1} \oplus Z \\
   & Y_{1,2} = X_{2,2} \oplus Y_{2,1}
\end{align*}
where $P_Z(1)=1-P_Z(0)=\epsilon$.
The rate pair $(R_1,R_2)=(1-H_2(\epsilon),1)/2$ is achievable by
choosing $X_{1,1}$ as uniform over $\{0,1\}$ and $X_{2,2}=X_{2,2}' \oplus Y_{2,1}$
where $X_{2,2}'$ is independent of $Y_{2,1}$ and uniform over $\{0,1\}$.
For the converse, the cut-set bound of Theorem~\ref{thm:cut-set-bound} is
\begin{align} \label{eq:gn}
 \bigcup_{P_{X_{1,1}{\mathbf A}_{2,2}}} \left\{ (R_1,R_2):
 \begin{array}{l}
  0 \le R_1 \le I(X_{1,1} ; Y_{2,1} | {\mathbf A}_{2,2} )/2 \\
  0 \le R_2 \le I({\mathbf A}_{2,2} ; Y_{1,2} | X_{1,1} )/2 \\
  \end{array} \right\}
\end{align}
and we have $I(X_{1,1} ; Y_{2,1} | {\mathbf A}_{2,2} ) \le 1-H_2(\epsilon)$ with equality
if $X_{1,1}$ is uniform and independent of ${\mathbf A}_{2,2}$.
We further have $I( {\mathbf A}_{2,2} ; Y_{1,2} | X_{1,1})\le 1$ since $Y_{1,2}$ is binary.
This shows that Theorem~\ref{thm:cut-set-bound} is tight.

Finally, we translate the capacity-achieving strategy into a code tree distribution.
We label the branch-pairs of our tree ${\mathbf A}_{2,2}$ as $b_0b_1$ by which we mean
that $X_{2,2}=b_0$ if $Y_{2,1}=0$ and $X_{2,2}=b_1$ if $Y_{2,1}=1$.  We choose
${\mathbf A}_{2,2}$ independent of $X_{1,1}$ and
\begin{align*}
  & P_{{\mathbf A}_{2,2}}(00)=P_{{\mathbf A}_{2,2}}(11)=0 \\
  & P_{{\mathbf A}_{2,2}}(01)=P_{{\mathbf A}_{2,2}}(10)=1/2
\end{align*}
and compute $I( {\mathbf A}_{2,2} ; Y_{1,2} | X_{1,1})=1$, as desired.
\end{example}

%=============================================================
\subsection{Quantize-Forward Network Coding}
\label{subsec:NC}
The final channels we consider are relay networks with iBM. Suppose node $1$ multicasts
a message of rate $R$ to sink nodes in the set $\set{T}$. The quantize-map-forward (QMF) and
noisy network coding (NNC) strategies in~\cite{Avestimehr11,YA:11,Lim11} generalize to NiBMs and
we call the resulting strategies {\em quantize-forward} (QF) network coding.
QF network coding achieves $R$ satisfying
\begin{align} \label{eq:QF-NC-rate}
 LR \le \min_{k \in \set{S}^c \cap \set{T}} \;
 & I( {\mathbf A}_{\set{S}}^L ; \hat{Y}_{\set{S}^c}^L Y_k | {\mathbf A}_{\set{S}^c}^L T ) \nonumber \\
 & - I( Y_{\set{S}}^L ; \hat{Y}_{\set{S}}^L | {\mathbf A}_{\set{K}}^L \hat{Y}_{\set{S}^c}^L T )
\end{align}
for all $\set{S}\subset\set{K}$ with $1\in\set{S}$ and $\set{S}^c \cap \set{T} \ne \emptyset$.
The ${\mathbf A}_k^L$, $k=1,2,\ldots,K$, are
independent and $\hat{Y}_k^L$ is a noisy function of ${\mathbf A}_k^L$
and $Y_k^L$ for all $k$.

\begin{remark}
A simple lower bound on the first mutual information expression in
\eqref{eq:QF-NC-rate} is
\begin{align}
   I( {\mathbf A}_{\set{S}}^L ; \hat{Y}_{\set{S}^c}^L Y_k | {\mathbf A}_{\set{S}^c}^L T )
   \ge I( {\mathbf A}_{\set{S}}^L ; \hat{Y}_{\set{S}^c}^L | {\mathbf A}_{\set{S}^c}^L T ).
   \label{eq:QF-NC-rate-lb}
\end{align}
We use the right-hand side of \eqref{eq:QF-NC-rate-lb} below because it better matches
\eqref{eq:rate-bound} with $\hat{Y}_{\set{S}^c}^L$ replacing $Y_{\set{S}^c}^L$. 
\end{remark}
\begin{example}
We extend results of~\cite{Avestimehr11,YA:11,Lim11}.
If the network is deterministic then ${\mathbf A}_{\set{K}}^L$ determines
$X_{\set{K}}^L Y_{\set{K}}^L$. We thus have
\begin{align}
   I( Y_{\set{S}}^L ; \hat{Y}_{\set{S}}^L | {\mathbf A}_{\set{K}}^L \hat{Y}_{\set{S}^c}^L T ) = 0
\end{align}
and can choose $\hat{Y}_k^L=Y_k^L$ to achieve the cut-set bound but
evaluated with independent code functions only. As a result, we obtain the multicast capacity
of networks of deterministic point-to-point channels with iBM, for instance. However, 
QF network coding does not give the capacity region for all deterministic networks because dependent
code functions may increase rates.
\end{example}

%=============================================================
\subsection{QF Network Coding for Gaussian Networks}
\label{subsec:NC-Gauss}
Consider the channel \eqref{eq:linear-channel} with additive Gaussian noise (AGN), i.e.,
the $\ul{Z}_k$  are Gaussian noise vectors and where $\ul{Z}_{\set{K}}$ has a positive definite covariance matrix.
For simplicity, we assume that the $\ul{Z}_1, \ul{Z}_2,\ldots,\ul{Z}_K$ are mutually independent.

Suppose again that node $1$ multicasts a message of rate $R$ to sink nodes in $\set{T}$.
Let $\set{S}$ be a cut, i.e., $1\in\set{S}$ and $\set{S}^c \cap \set{T} \ne \emptyset$.
We use the notation
\begin{align} \label{eq:linear-channel-2}
 \ul{Y}_{\set{S}^C} = {\mathbf G}_{\set{S}^c\set{S}} \ul{X}_{\set{S}} + {\mathbf G}_{\set{S}^c\set{S}^c} \ul{X}_{\set{S}^c} +  \ul{Z}_{\set{S}^c}
\end{align}
for the $|\set{S}^c|$ equations \eqref{eq:linear-channel}
with $k \in \set{S}^c$, where ${\mathbf G}_{\set{U}\set{V}}$ is a $|\set{U}| L \times |\set{V}| L$ matrix
with block-entries ${\mathbf G}_{kj}$, $k\in\set{U}, j\in\set{V}$. Recall that the ${\mathbf G}_{kj}$
are $L \times L$ lower-triangular matrices.

We begin with the upper bound \eqref{eq:weak2} which we write as
\begin{align}
 & h({\mathbf G}_{\set{S}^c\set{S}} \ul{X}_{\set{S}}  +  \ul{Z}_{\set{S}^c} \| \ul{X}_{\set{S}^c} ) - h( \ul{Z}_{\set{S}^c} ) \nonumber \\
 & \le h({\mathbf G}_{\set{S}^c\set{S}} \ul{X}_{\set{S}}  +  \ul{Z}_{\set{S}^c} ) - h( \ul{Z}_{\set{S}^c} ) \nonumber \\
 & \overset{(a)}{\le}
     \frac{1}{2} \log \frac{\left| {\mathbf Q}_{\ul{Z}_{\set{S}^c}} + {\mathbf G}_{\set{S}^c\set{S}} \, {\mathbf Q}_{\ul{X}_{\set{S}}} \,
     {\mathbf G}_{\set{S}^c\set{S}}^T \right|}{\left|  {\mathbf Q}_{\ul{Z}_{\set{S}^c}}  \right|}
 \label{eq:rn-bound1}
\end{align}
where $(a)$ follows by a classic maximum entropy theorem. The (positive definite) noise covariance matrix
has a Cholesky decomposition
${\mathbf Q}_{\ul{Z}_{\set{S}^c}} = {\mathbf S}_{\ul{Z}_{\set{S}^c}} {\mathbf S}_{\ul{Z}_{\set{S}^c}}^T $
where ${\mathbf S}_{\ul{Z}_{\set{S}^c}}$ is lower triangular and invertible. We can thus rewrite \eqref{eq:rn-bound1} as
\begin{align}
 & I(X_{\set{S}}^L \rightarrow Y_{\set{S}^c}^L \| X_{\set{S}^c}^L )
    \le \frac{1}{2} \log \left| {\mathbf I}_{\set{S}^c} + \tilde{{\mathbf G}}_{\set{S}^c\set{S}} \, {\mathbf Q}_{\ul{X}_{\set{S}}} \,
     \tilde{{\mathbf G}}_{\set{S}^c\set{S}}^T \right|
 \label{eq:rn-bound2}
\end{align}
where ${\mathbf I}_{\set{U}}$ is the $|\set{U}| L \times |\set{U}| L$ identity matrix
and $\tilde{{\mathbf G}}_{\set{S}^c\set{S}}={\mathbf S}_{\ul{Z}_{\set{S}^c}}^{-1} {\mathbf G}_{\set{S}^c\set{S}}$.

For achievability, we choose $T$ to be a constant and the code functions (effectively) as codewords
\begin{align} \label{eq:lower-distrib}
 & {\mathbf A}_k^L(\cdot) = X_k^L, \quad k=1,2,\ldots,K 
\end{align}
where $X_k^L$ is Gaussian. We further choose
\begin{align}
 & \hat{Y}_k^L = Y_k^L + \hat{Z}_k^L, \quad k=1,2,\ldots,K
\end{align}
where $\hat{Z}_{\set{K}}^L$ is independent of $X_{\set{K}}^L Y_{\set{K}}^L$ and has the
same statistics as $Z_{\set{K}}^L$. Consider the right-hand side of \eqref{eq:QF-NC-rate-lb} with
codewords rather than code functions. We have
\begin{align}
 & I( X_{\set{S}}^L ; \hat{Y}_{\set{S}^c}^L | X_{\set{S}^c}^L ) \nonumber \\
 & \overset{(a)}{=} h({\mathbf G}_{\set{S}^c\set{S}} \ul{X}_{\set{S}}  +  \ul{Z}_{\set{S}^c} +  \hat{\ul{Z}}_{\set{S}^c})
     - h( \ul{Z}_{\set{S}^c}  +  \hat{\ul{Z}}_{\set{S}^c}) \nonumber \\
 & \overset{(b)}{=} \frac{1}{2} \log \frac{\left| 2 {\mathbf Q}_{\ul{Z}_{\set{S}^c}} + {\mathbf G}_{\set{S}^c\set{S}} \,
     {\mathbf Q}_{\ul{X}_{\set{S}}} \, {\mathbf G}_{\set{S}^c\set{S}}^T \right|}
     {\left|  2 {\mathbf Q}_{\ul{Z}_{\set{S}^c}}  \right|} \nonumber \\
  & = \frac{1}{2} \log \left| {\mathbf I}_{\set{S}^c} + \frac{1}{2} \tilde{{\mathbf G}}_{\set{S}^c\set{S}} \, {\mathbf Q}_{\ul{X}_{\set{S}}} \,
      \tilde{{\mathbf G}}_{\set{S}^c\set{S}}^T \right| \nonumber \\
  & \overset{(c)}{\ge} \frac{1}{2} \log \left| {\mathbf I}_{\set{S}^c} + \tilde{{\mathbf G}}_{\set{S}^c\set{S}} \, {\mathbf Q}_{\ul{X}_{\set{S}}} \,
      \tilde{{\mathbf G}}_{\set{S}^c\set{S}}^T \right| -\frac{|\set{S}^c| L}{2}    
 \label{eq:rn-bound3}
\end{align}
where $(a)$ is because the $X_k^L$ are independent, $(b)$ is because the $X_k^L$ are Gaussian,
and $(c)$ follows by using
$|{\mathbf A}+{\mathbf B}/2| \ge |({\mathbf A}+{\mathbf B})/2| = |{\mathbf A}+{\mathbf B}|/2^b$
for $b \times b$ positive definite matrices ${\mathbf A}$ and ${\mathbf B}$.
We also have
\begin{align}
 I( Y_{\set{S}}^L ; \hat{Y}_{\set{S}}^L | X_{\set{K}}^L \hat{Y}_{\set{S}^c}^L )
 & = I( Z_{\set{S}}^L ; Z_{\set{S}}^L + \hat{Z}_{\set{S}}^L | X_{\set{K}}^L \hat{Z}_{\set{S}^c}^L ) \nonumber \\
 & = I( Z_{\set{S}}^L ; Z_{\set{S}}^L + \hat{Z}_{\set{S}}^L ) \nonumber \\ 
 & = |\set{S}|L/2 \label{eq:rn-bound4}
\end{align}
where the last step is because $\hat{Z}_{\set{S}}^L$ has the same
statistics as $Z_{\set{S}}^L$.
Combining \eqref{eq:rn-bound3} and \eqref{eq:rn-bound4} we find that $R$ satisfying
\begin{align}
    LR \le \frac{1}{2} \log \left| {\mathbf I}_{\set{S}^c} + \tilde{{\mathbf G}}_{\set{S}^c\set{S}} \, {\mathbf Q}_{\ul{X}_{\set{S}}} \,
      \tilde{{\mathbf G}}_{\set{S}^c\set{S}}^T \right| -\frac{K L}{2}    
 \label{eq:rn-bound5}
\end{align}
for all $\set{S}\subset\set{K}$ with $1\in\set{S}$ and $\set{S}^c \cap \set{T} \ne \emptyset$
are achievable.

It remains to study the first expression on the right-hand side of \eqref{eq:rn-bound5}, both without and
with independent $X_k^L$. Suppose that $\tilde{{\mathbf G}}_{\set{S}^c\set{S}}$ has the singular value
decomposition ${\mathbf U}^T {\mathbf \Sigma} {\mathbf V}$ so that this expression is
\begin{align}
       \frac{1}{2} \log \left| {\mathbf I}_{\set{S}^c} + {\mathbf \Sigma} {\mathbf V} \, {\mathbf Q}_{\ul{X}_{\set{S}}} \,
       {\mathbf V}^T {\mathbf \Sigma}^T \right| .
 \label{eq:rn-bound6}
\end{align}
Suppose there are $K$ power constraints $\sum_{i=1}^n {\rm E}[X_{k,i}^2]/n \le P$, $k=1,2,\ldots,K$,
i.e., we have {\em symmetric} power constraints. Optimizing over
${\mathbf Q}_{\ul{X}_{\set{S}}}$ we obtain $\min(|\set{S}|,|\set{S}^c|) \cdot L$ parallel channels on
which we can put at most power $|\set{S}| P$. We thus have the capacity upper bound
\begin{align}
   LR & \le \sum_j \frac{1}{2} \log\left( 1 + s_j^2 |\set{S}| P \right)
 \label{eq:rn-bound7}
\end{align}
where the sum is over the parallel channels and the $s_j$ are the singular values.

For a lower bound we simplify \eqref{eq:lower-distrib} even further and choose
${\mathbf Q}_{\ul{X}_k}=(P/L) \cdot {\mathbf I}_{\{k\}}$.
The expression \eqref{eq:rn-bound6} becomes
\begin{align}
   & \sum_{s_j} \frac{1}{2} \log\left( 1 + s_j^2 (P/L) \right) \nonumber \\
   & \ge  \left[ \sum_{s_j} \frac{1}{2} \log\left( 1 + s_j^2 |\set{S}| P \right)\right] - \frac{|\set{S}| L}{2}\log\left( |\set{S}| L \right).
 \label{eq:rn-bound8}
\end{align}
We thus have the following theorem that implies that QF network coding approaches capacity
at high signal-to-noise ratio. This extends results in~\cite{Avestimehr11,YA:11,Lim11} to NiBMs.
\begin{theorem} \label{thm:QF-NC}
QF network coding for scalar, linear, AGN channels, symmetric power constraints, and a multicast session
achieves capacity to within
\begin{align} \label{eq:QF-NC-rate-bound}
 K(1+\log(KL))/2 \text{ bits}.
\end{align}
\end{theorem}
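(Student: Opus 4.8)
The plan is to assemble the two bounds already in hand: the capacity upper bound \eqref{eq:rn-bound7} and the QF-achievable rate obtained from \eqref{eq:rn-bound5} under the simple input choice ${\mathbf Q}_{\ul{X}_k}=(P/L)\,{\mathbf I}_{\{k\}}$, together with the elementary inequality \eqref{eq:rn-bound8}. First I would fix a cut $\set{S}$ (with $1\in\set{S}$, $\set{S}^c\cap\set{T}\ne\emptyset$) and let the $s_j$ be the nonzero singular values of $\tilde{{\mathbf G}}_{\set{S}^c\set{S}}$, i.e.\ the $\min(|\set{S}|,|\set{S}^c|)L$ parallel channels of \eqref{eq:rn-bound6}. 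On the converse side, \eqref{eq:rn-bound7} gives the per-cut bound $LC\le\sum_j\frac12\log(1+s_j^2|\set{S}|P)$, and minimizing over all proper subsets yields the cut-set upper bound on $C$. On the achievability side, inserting the chosen covariance into \eqref{eq:rn-bound5}-\eqref{eq:rn-bound6} shows that QF network coding achieves every $R$ with $LR\le\sum_j\frac12\log(1+s_j^2(P/L))-KL/2$ for all such cuts, and \eqref{eq:rn-bound8} lower-bounds the first sum by $\sum_j\frac12\log(1+s_j^2|\set{S}|P)-\frac{|\set{S}|L}{2}\log(|\set{S}|L)$.

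Writing $U(\set{S})=\sum_j\frac12\log(1+s_j^2|\set{S}|P)$ for the per-cut converse term and $g(\set{S})=\frac{|\set{S}|L}{2}\log(|\set{S}|L)+\frac{KL}{2}$ for the per-cut penalty, the two displays say $\frac1L\min_{\set{S}}\bigl(U(\set{S})-g(\set{S})\bigr)\le C\le\frac1L\min_{\set{S}}U(\set{S})$, so the gap is at most $\frac1L\bigl(\min_{\set{S}}U(\set{S})-\min_{\set{S}}(U(\set{S})-g(\set{S}))\bigr)$. Letting $\set{S}_\star$ attain the second minimum and using $\min_{\set{S}}U(\set{S})\le U(\set{S}_\star)$, this difference is at most $g(\set{S}_\star)\le\max_{\set{S}}g(\set{S})$. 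Since $\set{S}$ is a proper subset of $\set{K}$ we have $|\set{S}|\le K$, and $x\log(xL)$ is increasing in $x$ (for $x\ge1$, $L\ge1$), so $\max_{\set{S}}g(\set{S})\le\frac{KL}{2}\log(KL)+\frac{KL}{2}=\frac{KL}{2}\bigl(1+\log(KL)\bigr)$. Dividing by $L$ gives the claimed additive gap of $K(1+\log(KL))/2$ bits.

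The only step that is not a routine determinant manipulation or an application of an inequality already established in \eqref{eq:rn-bound3} and \eqref{eq:rn-bound8} is the interchange of the two minima over cuts, which I would spell out rather than leave implicit: the converse and the achievable rate are each minimized over $\set{S}$, at possibly different cuts, so one cannot subtract the two bounds cut by cut before taking minima. The short argument above handles this; the accompanying bookkeeping check is that the $s_j$ appearing in $U(\set{S})$ (after the waterfilling of \eqref{eq:rn-bound7}) are the same as those appearing in the achievable expression under the symmetric choice, which is immediate from the singular value decomposition in \eqref{eq:rn-bound6}, so the per-cut difference $U(\set{S})-\bigl(U(\set{S})-g(\set{S})\bigr)=g(\set{S})$ is legitimate.
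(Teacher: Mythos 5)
Your proposal is correct and follows essentially the same route as the paper, which presents Theorem~\ref{thm:QF-NC} as an immediate consequence of the converse \eqref{eq:rn-bound7} and the achievability chain \eqref{eq:rn-bound5}--\eqref{eq:rn-bound8}, bounding the per-cut penalty by $\frac{KL}{2}(1+\log(KL))$ via $|\set{S}|\le K$. Your explicit treatment of the interchange of the two minima over cuts (via the cut $\set{S}_\star$ attaining the achievable minimum) is a detail the paper leaves implicit, but it is exactly the bookkeeping needed and introduces no new idea.
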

One may derive better results than \eqref{eq:QF-NC-rate-bound} by using the
approach in~\cite{Lim11}, for example. Extensions to asymmetric power
constraints and multiple multicast sessions are clearly possible.

%%%%%%%%%%%%%%%%%%%%%%%%%%%%%%%%%%%%%%%%%%%%%%%%%%%%%%
% Appendices
%%%%%%%%%%%%%%%%%%%%%%%%%%%%%%%%%%%%%%%%%%%%%%%%%%%%%%
\setcounter{section}{0}
\renewcommand{\thesection}{\Alph{section}}
\renewcommand{\thesubsection}{\arabic{subsection}}
\renewcommand{\appendix}[1]{%
  \refstepcounter{section}%
  \par\begin{center}%
    \begin{sc}%
      Appendix \thesection\par%
      #1%
    \end{sc}%
  \end{center}\nobreak%
}

\bigskip
%%%%%%%%%%%%%%%%%%%%%%%%%%%%%%%%%%%%%%%%%%%%%%%%%%%%%%%%%%%%%%%%%%%%%%%%%
\appendix{Proof of Cut-Set Bound}
\label{app:proof}
The bound follows from classic steps and the factorizations
\eqref{eq:joint-pdf} and \eqref{eq:channel-pdf}. There is one new subtlety, however, namely how to
define the random code functions that appear in \eqref{eq:rate-bound}.
Fano's inequality states that for $P_e\rightarrow0$ we have
\begin{align}
n R_{\set{M}(\set{S})} & \le I( W_{\set{M}(\set{S})} ; \{ \hat{W}_{\set{M}(\set{S})}^{(\ell)}: \ell\in\set{S}^c \} ) \nonumber \\
& \overset{(a)}{\le} I( W_{\set{E}(\set{S})} ; Y_{\set{S}^c}^n W_{\set{E}(\set{S}^c)} ) \nonumber \\
& \overset{(b)}{=} I( W_{\set{E}(\set{S})} {\mathbf A}_{\set{S}}^n ; Y_{\set{S}^c}^n | W_{\set{E}(\set{S}^c)} {\mathbf A}_{\set{S}^c}^n ) \nonumber \\
& \overset{(c)}{=} I( {\mathbf A}_{\set{S}}^n ; Y_{\set{S}^c}^n | {\mathbf A}_{\set{S}^c}^n )
\label{eq:bound1}
\end{align}
where $(a)$ follows because $\hat{W}_{\set{M}(\set{S})}$ is a subset of $\hat{W}_{\set{E}(\set{S})}$ and because
 $\{ \hat{W}_{\set{M}(\set{S})}^{(\ell)}: \ell\in\set{S}^c \}$ is a function of $Y_{\set{S}^c}^n$ and $W_{\set{E}(\set{S}^c)}$;
$(b)$ follows because the messages are independent and ${\mathbf A}_k^n$ is a function of the messages at node $k$;
and $(c)$ follows because
\begin{align}
  W_{\set{E}(\set{S})} - {\mathbf A}_{\set{S}}^L - Y_{\set{S}'}^L
\end{align}
forms a Markov chain
for any $\set{S}$ and $\set{S}'$. Recall that $n=mL$ for some integer $m$. We may thus write
\begin{align}
I( {\mathbf A}_{\set{S}}^n ; Y_{\set{S}^c}^n | {\mathbf A}_{\set{S}^c}^n )
& \overset{(a)}{=} \sum_{i=1}^m I( {\mathbf A}_{\set{S}}^n ; Y_{\set{S}^c,i}^{L} | {\mathbf A}_{\set{S}^c}^n Y_{\set{S}^c}^{(i-1)L} ) \nonumber \\
& \overset{(b)}{=} \sum_{i=1}^m I( {\mathbf A}_{\set{S}}^{iL} ; Y_{\set{S}^c,i}^{L} | {\mathbf A}_{\set{S}^c}^{iL} Y_{\set{S}^c}^{(i-1)L} ) \nonumber \\
& \le \sum_{i=1}^m I( {\mathbf A}_{\set{S}}^{iL} Y_{\set{S}}^{(i-1)L} ; Y_{\set{S}^c,i}^{L} | {\mathbf A}_{\set{S}^c}^{iL} Y_{\set{S}^c}^{(i-1)L} )
\label{eq:bound2}
\end{align}
where $(a)$ follows by choosing $Y_{k,i}^L$ to be the channel output of node $k$ from time $(i-1)L+1$ to time $iL$,
and where $(b)$ follows by Markovity.

Now let $\bar{\mathbf A}_{k,i}^L$ be the string of functions
${\mathbf A}_{k,j}(\cdot,Y_k^{(i-1)L})$, $j=(i-1)L+1,(i-1)L+2,\ldots,iL$. We then have
\begin{align}
& I( {\mathbf A}_{\set{S}}^{iL} Y_{\set{S}}^{(i-1)L} ; Y_{\set{S}^c,i}^{L} | {\mathbf A}_{\set{S}^c}^{iL} Y_{\set{S}^c}^{(i-1)L} ) \nonumber \\
& \overset{(a)}{=}H( Y_{\set{S}^c,i}^{L} | \bar{\mathbf A}_{\set{S}^c,i}^{L} {\mathbf A}_{\set{S}^c}^{iL} Y_{\set{S}^c}^{(i-1)L} )
   - H(Y_{\set{S}^c,i}^{L} |  \bar{\mathbf A}_{\set{K},i}^{L} {\mathbf A}_{\set{K}}^{iL} Y_{\set{K}}^{(i-1)L} ) \nonumber \\
& \le H( Y_{\set{S}^c,i}^{L} | \bar{\mathbf A}_{\set{S}^c,i}^{L} )
    - H(Y_{\set{S}^c,i}^{L} | \bar{\mathbf A}_{\set{K},i}^{L} {\mathbf A}_{\set{K}}^{iL} Y_{\set{K}}^{(i-1)L} ) \nonumber \\
& \overset{(b)}{=} H( Y_{\set{S}^c,i}^{L} | \bar{\mathbf A}_{\set{S}^c,i}^{L} ) - H(Y_{\set{S}^c,i}^{L} | \bar{\mathbf A}_{\set{K},i}^{L} ) \nonumber \\
& =  I( \bar{\mathbf A}_{\set{S},i}^{L} ; Y_{\set{S}^c,i}^{L} | \bar{\mathbf A}_{\set{S}^c,i}^{L} )
\label{eq:bound3}
\end{align}
where $(a)$ follows because $\bar{\mathbf A}_{k,i}^L$ is a function of ${\mathbf A}_k^{iL} Y_k^{(i-1)L}$
and $(b)$ follows because
\begin{align}
  {\mathbf A}_{\set{K}}^{iL} Y_{\set{K}}^{(i-1)L} - \bar{\mathbf A}_{\set{K},i}^{L} - Y_{\set{S}^c,i}^L
\end{align}
forms a Markov chain (this step permits $L$-letterization).

The remaining steps follow because the
$\bar{\mathbf A}_{\set{K}}^L$-to-$Y_{\set{K}}^L$ channel does not depend
on the block index $i$. More precisely, we have
\begin{align}
& P(y_{\set{K},i}^L | \bar{\mathbf a}_{\set{K},i}^L )
   = P_{Y_{\set{K}}^L |  {\mathbf A}_{\set{K}}^L}(y_{\set{K},i}^L | \bar{\mathbf a}_{\set{K},i}^L ) \nonumber \\
&  = \left[ \prod_{k=1}^K  1(x_{k}^L \| \bar{\mathbf a}_{k,i}^L,0y_{k,i}^{L-1}) \right]
    P_{Y_{\set{K}}^L \|  X_{\set{K}}^L}(y_{\set{K},i}^L \| x_{\set{K}}^L )
     \label{eq:cut-set-pdf-2a}
\end{align}
where $P_{Y_{\set{K}}^L |  {\mathbf A}_{\set{K}}^L}$ refers to the first $L$ channel uses.
Inserting \eqref{eq:bound3} into \eqref{eq:bound2}, we have
\begin{align}
I( {\mathbf A}_{\set{S}}^n ; Y_{\set{S}^c}^n | {\mathbf A}_{\set{S}^c}^n )
& \le \sum_{i=1}^m I( \bar{\mathbf A}_{\set{S},i}^L ; Y_{\set{S}^c,i}^L | \bar{\mathbf A}_{\set{S}^c,i}^L ) \nonumber \\
& = m I( \bar{\mathbf A}_{\set{S},T}^L ; Y_{\set{S}^c,T}^L | \bar{\mathbf A}_{\set{S}^c,T}^L T ) \nonumber \\
& \overset{(a)}{\le} m I( \bar{\mathbf A}_{\set{S},T}^L ; Y_{\set{S}^c,T}^L | \bar{\mathbf A}_{\set{S}^c,T}^L )
\label{eq:bound4}
\end{align}
where $T$ takes on the value $i$, $i=1,2,\ldots,m$, with probability $1/m$, and
where $(a)$ follows because
\begin{align}
  T - \bar{\mathbf A}_{\set{K},T}^L - Y_{\set{K},T}^L
\end{align}
forms a Markov chain.
Inserting \eqref{eq:bound4} into \eqref{eq:bound1}, we have
\begin{align}
L \cdot R_{\set{M}(\set{S})} & \le I( \bar{\mathbf A}_{\set{S},T}^L ; Y_{\set{S}^c,T}^L | \bar{\mathbf A}_{\set{S}^c,T}^L )
\label{eq:bound-final}
\end{align}
where the joint distribution of the random variables factors as
\begin{align} \label{eq:cut-set-pdf-2b}
& P(\bar{\mathbf a}_{\set{K},T}^L) P_{Y_{\set{K}}^L |  {\mathbf A}_{\set{K}}^L}(y_{\set{K},T}^L | \bar{\mathbf a}_{\set{K},T}^L)
\end{align}
and where the second term in \eqref{eq:cut-set-pdf-2b} is computed using \eqref{eq:cut-set-pdf-2a}
(this step permits the factorization \eqref{eq:cut-set-pdf}).

\begin{remark}
If $n\ne mL$ then we may as well consider $n=(m-1)L+L'$ where $0<L'<L$.
The sum in \eqref{eq:bound4} changes and has as its $m$th term
\begin{align}
I( \bar{\mathbf A}_{\set{S},m}^{L'} ; Y_{\set{S}^c,m}^{L'} | \bar{\mathbf A}_{\set{S}^c,m}^{L'} )
\label{eq:bound-extra}
\end{align}
where the code functions have depth $L'$. The term \eqref{eq:bound-extra} could be larger than the
right-hand side of \eqref{eq:bound-final}. However, if
$m$ is large then the capacity is effectively limited by \eqref{eq:bound-final}.
\end{remark}
\begin{remark} \label{rmk:cost-constraint}
Consider the $j$th cost constraint in \eqref{eq:cost-constraints}. We may rewrite \eqref{eq:cost-constraints} as
\begin{align}
  & \frac{1}{L} \sum_{\ell=1}^L \frac{1}{m} \sum_{i=1}^m \E{s_j\left( X_{\set{K},(m-1)L+\ell}, Y_{\set{K},(m-1)L+\ell} \right)} \nonumber \\
  & = \frac{1}{L} \sum_{\ell=1}^L \E{s_j\left( X_{\set{K},(T-1)L+\ell}, Y_{\set{K},(T-1)L+\ell} \right)} 
      \le S_j \label{eq:cost-constraints3}
\end{align}
and the inequality in \eqref{eq:cost-constraints3} is the $j$th inequality in \eqref{eq:cost-constraints2}.
\end{remark}

%%%%%%%%%%%%%%%%%%%%%%%%%%%%%%%%%%%%%%%%%%%%%%%%%%%%%%%%%%%%%%%%%%%%%%%%%
\appendix{Cardinality Bounds For Point-to-Point Channels}
\label{app:cardinality-ptp}
Consider a point-to-point channel with NiBM. We write
\begin{align}
   & P(y^L) = \sum_{{\mathbf a}^L} P({\mathbf a}^L) P(y^L | {\mathbf a}^L) \label{eq:app-P} \\
   & H(Y^L | {\mathbf A}^L) = \sum_{{\mathbf a}^L} P({\mathbf a}^L) H(Y^L | {\mathbf A}^L={\mathbf a}^L) \label{eq:app-H}
\end{align}
where $P(y^L | {\mathbf a}^L)$ and $H(Y^L | {\mathbf A}^L={\mathbf a}^L)$ are determined by the channel $P(y^L \| x^L)$.
Equations \eqref{eq:app-P} and \eqref{eq:app-H} imply that $P(y^L)$ and $H(Y^L | {\mathbf A}^L)$ are
convex combinations of $P({\mathbf a}^L)$. Furthermore, if we fix $P(y^L)$ for all $y^L$ but one,
and if we fix $H(Y^L | {\mathbf A}^L)$, then we have fixed $I({\mathbf A}^L;Y^L)$.
We can therefore focus on $\left|\set{Y}^L\right|$ constraints
and~\cite[Lemma~3.4]{Csiszar81} guarantees that we need
only $\left|\set{Y}^L\right|$ non-zero values of $P({\mathbf a}^L)$.

Similarly, observe that
\begin{align}
   P(y^L) = \sum_{x^L,\tilde{y}^L} P(x^L \| 0\tilde{y}^{L-1}) P(\tilde{y}^L,y^L \| x^L) 
\end{align}
so that if we fix $P(x^L \| 0\tilde{y}^{L-1})$ then we have fixed $P(y^L)$.
Our approach will be to replace $|\set{Y}^L|-1$ constraints of the form \eqref{eq:app-P}
with (hopefully fewer) constraints to fix $P(x^L \| 0\tilde{y}^{L-1})$.

We proceed by induction. We may fix $P(x_1)$ with $|\set{X}_1|-1$ constraints
of the form
\begin{align}
   P(x_1) = \sum_{{\mathbf a}^L} P({\mathbf a}^L) 1(x_1 | {\mathbf a}^L) .
\end{align}
This fixes $P(x_1,\tilde{y}_1)$ because the
channel specifies $P(\tilde{y}_1|x_1)$. Now suppose that
$P(x^{i-1},\tilde{y}^{i-1})$ is fixed and write
\begin{align} \label{eq:app-P2}
   P(x_i | x^{i-1},\tilde{y}^{i-1}) = \sum_{{\mathbf a}^L} P({\mathbf a}^L) \frac{P(x^i,\tilde{y}^{i-1} | {\mathbf a}^L)}{P(x^{i-1},\tilde{y}^{i-1})}
\end{align}
where $P(x^i,\tilde{y}^{i-1} | {\mathbf a}^L)$ is fixed because ${\mathbf a}^L$ is in the conditioning.
We must thus define
\begin{align} \label{eq:per-letter-bound}
   |\set{X}^{i-1}| \cdot \left| \tilde{\set{Y}}^{i-1} \right| \cdot (| \set{X}_i |-1)
\end{align}
constraints of the form \eqref{eq:app-P2} to fix $P(x_i | x^{i-1},\tilde{y}^{i-1})$ for all its arguments.
This in turn fixes $P(x_i,\tilde{y}_i | x^{i-1},\tilde{y}^{i-1})$ because the channel specifies
$P(\tilde{y}_i | x^i,\tilde{y}^{i-1})$. We thus find that $P(x^i,\tilde{y}^i)$ is fixed which completes
the induction step. Collecting all the constraints including \eqref{eq:app-H} we have
\begin{align}
   \left| \set{X}_1 \right| + \sum_{i=2}^L \left|\set{X}^{i-1}\right| \cdot \left|\set{\tilde{Y}}^{i-1}\right|
   \cdot ( \left| \set{X}_i \right| -1)
\end{align}
constraints in total. This number may be less than $|\set{Y}^L|$, e.g., if one of the $L$ channel outputs is
continuous.

%%%%%%%%%%%%%%%%%%%%%%%%%%%%%%%%%%%%%%%%%%%%%%%%%%%%%%%%%%%%%%%%%%%%%%%%%
\appendix{Cardinality Bounds For Relays Without Delay}
\label{app:cardinality-rc-0}
Consider an RC without delay and suppose that $P(x_1|{\mathbf a}_2)$ is specified.
This fixes $P(x_1,x_2,y_2,y_3|{\mathbf a}_2)$ because the channel fixes $P(y_2|x_1)$
and $P(y_3|x_1,x_2,y_2)$, and ${\mathbf a}_2$ specifies $1(x_2 | \mathbf{a}_2,y_2 )$
due to \eqref{eq:channel-input}. We have thus fixed $P(y_3 | {\mathbf a}_2)$,
$H(Y_3 | X_1,{\mathbf A}_2={\mathbf a}_2)$, and $I(X_1 ; Y_2 Y_3 | {\mathbf A}_2={\mathbf a}_2)$.
We further have
\begin{align}
   & P(y_3) = \sum_{{\mathbf a}_2} P({\mathbf a}_2) P(y_3 | {\mathbf a}_2) \label{eq:app-P-0} \\
   & H(Y_3 | X_1 {\mathbf A}_2) = \sum_{{\mathbf a}_2} P({\mathbf a}_2) H(Y_3 | X_1,{\mathbf A}_2={\mathbf a}_2) \label{eq:app-H-0} \\
   & I(X_1 ; Y_2 Y_3 | {\mathbf A}_2) \nonumber \\
   & \quad = \sum_{{\mathbf a}_2} P({\mathbf a}_2) I(X_1 ; Y_2 Y_3 | {\mathbf A}_2={\mathbf a}_2).\label{eq:app-I-0}
\end{align}
Finally, if we fix $P(y_3)$ for all $y_3$ but one, and if we fix $H(Y_3 | X_1 {\mathbf A}_2)$
and $I(X_1 ; Y_2 Y_3 | {\mathbf A}_2)$, then we have fixed $I(X_1 {\mathbf A}_2;Y_3)$
and (obviously) $I(X_1 ; Y_2 Y_3 | {\mathbf A}_2)$. We thus have $\left|\set{Y}_3\right|+1$ 
constraints in total to specify the bound \eqref{eq:rc-fb-0}.

Next, note that
\begin{align}
   P(y_3) = \sum_{x_1,x_2,y_2} P(x_1,x_2) P(y_2|x_1) P(y_3 | x_1,x_2,y_2) 
\end{align}
so that if we fix $P(x_1,x_2)$ then we have fixed $P(y_3)$. We proceed by writing
\begin{align}
   P(x_1,x_2) = \sum_{{\mathbf a}_2} P({\mathbf a}_2)
   P(x_1,x_2 | {\mathbf a}_2) \label{eq:app-P-02}
\end{align}
which gives us $|\set{X}_1| \cdot |\set{X}_2| -1$ constraints instead of the $|\set{Y}_3|-1$ before.
Together with \eqref{eq:app-H-0} and \eqref{eq:app-I-0} we arrive at $|\set{X}_1| \cdot |\set{X}_2|+1$
constraints in total.

%%%%%%%%%%%%%%%%%%%%%%%%%%%%%%%%%%%%%%%%%%%%%%%%%%%%%%%%%%%%%%%%%%%%%%%%%
\appendix{Converse for a Class of MACs with Feedback}
\label{app:mac-fb}
Let $V_i=X_1^{(i-1)L} Y^{(i-1)L}$ for $i=1,2,\ldots,m$.
Fano's inequality, $P_e \rightarrow 0$, and the independence of messages give
\begin{align}
nR_1 & \le I(W_1 ; Y^n | W_2 ) \nonumber \\
& = I({\mathbf A}_1^n ; Y^n | {\mathbf A}_2^n ) \nonumber \\
& = \sum_{i=1}^m H(Y_i^L | {\mathbf A}_2^{iL} Y^{(i-1)L}) 
    - H(Y_i^L | {\mathbf A}_1^{iL} {\mathbf A}_2^{iL} Y^{(i-1)L}) \nonumber \\
& \overset{(a)}{=} \sum_{i=1}^m H(Y_i^L | {\mathbf A}_2^{iL} V_i) 
    - H(Y_i^L | {\mathbf A}_1^{iL} {\mathbf A}_2^{iL} V_i) \nonumber \\
&\overset{(b)}{\le} m I(\bar{\mathbf A}_{1,T}^L ; Y_T^L | \bar{\mathbf A}_{2,T}^L V_T T) \nonumber \\
& \overset{(c)}{\le} m I(\bar{\mathbf A}_{1,T}^L ; Y_T^L | \bar{\mathbf A}_{2,T}^L V_T) \label{eq:conv-mac-fb-1}
\end{align}
where $(a)$ follows because ${\mathbf A}_2^i Y^{i-1}$ defines $X_2^i$ and therefore also $X_1^{i-1}$.
Step $(b)$ follows by using $T$ as our time-sharing random variable, $\bar{\mathbf A}_{k,i}^L$ as
in Appendix~\ref{app:proof}, and similar steps as in \eqref{eq:bound3};
step $(c)$ follows because
\begin{align}
  T-V_T\bar{\mathbf A}_{1,T}^L\bar{\mathbf A}_{2,T}^L-Y_T^L
\end{align}
forms a Markov chain. The chains
\begin{align}
& T-\bar{\mathbf A}_{1,T}^L\bar{\mathbf A}_{2,T}^L-Y_T^L \\
& \bar{\mathbf A}_{1,T}^L-V_T-\bar{\mathbf A}_{2,T}^L
\end{align}
are also Markov.

By symmetry, we have a similar bound as \eqref{eq:conv-mac-fb-1} for $nR_2$. The corresponding sum-rate bound is
\begin{align}
n(R_1+R_2) & \le I(W_1 W_2 ; Y^n ) \nonumber \\
& = I({\mathbf A}_1^n  {\mathbf A}_2^n ; Y^n ) \nonumber \\
& \le \sum_{i=1}^m H(Y_i^L ) 
    - H(Y_i^L | {\mathbf A}_1^{iL} {\mathbf A}_2^{iL} V_i) \nonumber \\
& = m I(\bar{\mathbf A}_{1,T}^L \bar{\mathbf A}_{2,T}^L ; Y_T^L | T) \nonumber \\
& \le m I(\bar{\mathbf A}_{1,T}^L \bar{\mathbf A}_{2,T}^L ; Y_T^L ). \label{eq:conv-mac-fb-3}
\end{align}
Collecting the bounds, we arrive at the region of Theorem~\ref{thm:mac-fb}.
The cardinality bound follows by using similar steps as in Appendices~\ref{app:cardinality-ptp}
and \ref{app:cardinality-rc-0}, see also~\cite[App.~B]{Willems:85}.

%\begin{comment}
%%%%%%%%%%%%%%%%%%%%%%%%%%%%%%%%%%%%%%%%%%%%%%%%%%%%%%%%%%%%%%%%%%%%%%%%%
\appendix{Weakened Bound for Causal Relay Networks}
\label{app:causal-relay-networks}
The bound~\cite[Thm.~2]{Baik11}
follows from Theorem~\ref{thm:cut-set-bound} in a different way
than \eqref{eq:weak1} and \eqref{eq:weak}. We have
\begin{align}
I( {\mathbf A}_{\set{S}}^L ; Y_{\set{S}^c}^L | {\mathbf A}_{\set{S}^c}^L )
& = \sum_{i=1}^L H( Y_{\set{S}^c,i} | Y_{\set{S}^c}^{i-1} X_{\set{S}^c}^{i} {\mathbf A}_{\set{S}^c}^L )
   \nonumber \\
& \qquad \quad - H( Y_{\set{S}^c,i} | Y_{\set{S}^c}^{i-1} X_{\set{S}^c}^{i} {\mathbf A}_{\set{K}}^L )
\label{eq:weak-bk}
\end{align}
because $X_{\set{S}^c}^{i}$ is a function of $Y_{\set{S}^c}^{i-1}$ and ${\mathbf A}_{\set{S}^c}^L$. 
We bound the first entropy in the sum in \eqref{eq:weak-bk} as
\begin{align}
H( Y_{\set{S}^c,i} | Y_{\set{S}^c}^{i-1} X_{\set{S}^c}^{i} {\mathbf A}_{\set{S}^c}^L )
\le H( Y_{\set{S}^c,i} | Y_{\set{S}^c}^{i-1} X_{\set{S}^c}^{i} {\mathbf A}_{\set{S}^c \cap \set{N}_0}^L ) .
\label{eq:weak-bk-1}
\end{align}
For the second entropy in \eqref{eq:weak-bk} we use two approaches.
For $1\le i\le L-1$ we bound 
\begin{align}
H( Y_{\set{S}^c,i} | Y_{\set{S}^c}^{i-1} X_{\set{S}^c}^{i} {\mathbf A}_{\set{K}}^L )
& \ge H( Y_{\set{S}^c,i} | Y_{\set{K}}^{i-1} X_{\set{K}}^{i} {\mathbf A}_{\set{K}}^L ) \nonumber \\
& \overset{(a)}{=} H( Y_{\set{S}^c,i} | Y_{\set{K}}^{i-1} X_{\set{K}}^{i} {\mathbf A}_{\set{S}^c \cap \set{N}_0}^L )
\label{eq:weak-bk-2}
\end{align}
where $(a)$ follows because (cf.~\eqref{eq:Markov-chain-1})
\begin{align}
   {\mathbf A}_{\set{K}}^L-Y_{\set{K}}^{i-1} X_{\set{K}}^i-Y_{\set{K},i}
\end{align}
forms a Markov chain for all $i=1,2,\ldots,L$.
Next, for time $i=L$ we use
\begin{align}
& H( Y_{\set{S}^c,L} | Y_{\set{S}^c}^{L-1} X_{\set{S}^c}^{L} {\mathbf A}_{\set{K}}^L ) \nonumber \\
& \overset{(a)}{=} H( Y_{\set{S}^c,L} | Y_{\set{S}^c}^{L-1} X_{\set{S}^c}^{L} {\bf A}_{\set{S}}^L {\mathbf A}_{\set{S}^c \cap \set{N}_0}^L) \nonumber \\
& = H( Y_{\set{S}^c,L} | Y_{\set{S}^c}^{L-1} X_{\set{S}^c}^{L}
{\bf A}_{\set{S} \cap \set{N}_1}^L {\mathbf A}_{\set{S} \cap \set{N}_0}^L {\mathbf A}_{\set{S}^c \cap \set{N}_0}^L ) \nonumber \\
& \overset{(b)}{=} H( Y_{\set{S}^c,L} | Y_{\set{S}^c}^{L-1} X_{\set{S}^c}^{L}
X_{\set{S} \cap \set{N}_1}^L {\mathbf A}_{\set{S} \cap \set{N}_0}^L {\mathbf A}_{\set{S}^c \cap \set{N}_0}^L )
\label{eq:weak-bk-3}
\end{align}
where $(a)$ follows because
\begin{align}
  {\mathbf A}_{\set{S}^c}^L-Y_{\set{S}^c}^{L-1} X_{\set{S}^c}^L {\mathbf A}_{\set{S}}^L - Y_{\set{S}^c,L}
\end{align}
forms a Markov chain, and $(b)$ follows because $X_{\set{S} \cap \set{N}_1}^L$ is a function of ${\mathbf A}_{\set{S} \cap \set{N}_1}^L$ and because
\begin{align}
  {\mathbf A}_{\set{S} \cap \set{N}_1}^L-Y_{\set{S}^c}^{L-1} X_{\set{S}^c}^L X_{\set{S} \cap \set{N}_1}^L {\mathbf A}_{\set{S} \cap \set{N}_0}^L {\mathbf A}_{\set{S}^c \cap \set{N}_0}^L - Y_{\set{S}^c,L}
\end{align}
forms a Markov chain.

Summarizing, we insert \eqref{eq:weak-bk-1}, \eqref{eq:weak-bk-2}, and \eqref{eq:weak-bk-3} into \eqref{eq:weak-bk}
and obtain the following bound that appeared in~\cite[Thm.~2]{Baik11}:
\begin{align}
& I( {\mathbf A}_{\set{S}}^L ; Y_{\set{S}^c}^L | {\mathbf A}_{\set{S}^c}^L) \nonumber \\
& \le I( X_{\set{S}}^{L-1}, 0Y_{\set{S}}^{L-2} \rightarrow Y_{\set{S}^c}^{L-1} \| X_{\set{S}^c}^{L-1} | {\mathbf A}_{\set{S}^c \cap \set{N}_0}^L ) \nonumber \\
& \quad + I( X_{\set{S} \cap \set{N}_1}^L {\mathbf A}_{\set{S} \cap \set{N}_0}^L ; Y_{\set{S}^c,L} | Y_{\set{S}^c}^{L-1} X_{\set{S}^c}^{L} {\mathbf A}_{\set{S}^c \cap \set{N}_0}^L ).
\label{eq:weak-bk-4}
\end{align}
%\end{comment}

%-----------
\section*{Acknowledgments}
The results reported here were motivated by the paper~\cite{FongIT:12} by S.\ Fong and R.\ Yeung.
I am grateful to these authors for sending me of an early version of their work.
I am also grateful to S.-Y.\ Chung, H.\ Permuter, Y.-H.\ Kim, and the two reviewers
for their detailed and constructive criticisms of the document.

%-----------
\bibliographystyle{IEEE} %{plain} {usrt}
%\bibliography{muIT-Dec15} %{ITgeneral,mu}

%\end{document}

%========================
%--------------------------

\end{document}